\documentclass{IEEEtran}
\usepackage[utf8]{inputenc}
\usepackage{cite}
\usepackage{bm}
\usepackage{mathtools,amsmath,amssymb,amsthm}
\usepackage{tikz}
\usepackage{pgfplots}
\usepackage{xcolor}
\usepackage{myTikzStyle}
\usepgfplotslibrary{ternary}
\usepackage[printonlyused,nolist]{acronym}
\usepackage{myAbreviations}
\usepackage{algorithm}
\usepackage{algpseudocode}

\usepackage{graphicx}
\usepackage{subcaption}
\usepackage{comment}

\newcommand{\mset}[1]{\ensuremath{\mathcal{#1}}}
\newcommand{\rv}[1]{\ensuremath{\mathsf{#1}}}

\newcommand{\done}[2]{\ensuremath{d_1\left( #1 , #2 \right)}}
\newcommand{\E}[2]{\ensuremath{\mathbb{E}_{#1}\left[#2\right]}}
\newcommand{\Var}[2]{\ensuremath{\mathbb {V}_{#1}\left[#2\right]}}

\newcommand{\selfinfo}[2]{\iota_{#1}\left( #2 \right)}
\newcommand{\entrp}[1]{\ensuremath{\mathbb{H}\left(#1\right)}}
\newcommand{\crossentrp}[2]{\ensuremath{\mathbb{X}\left( #1 \Vert #2 \right)}}
\newcommand{\diverg}[2]{\ensuremath{\mathbb{D}\left( #1 \Vert #2 \right)}}

\newcommand{\Rinfo}[0]{\ensuremath{R_{\text{info}}}}
\newcommand{\Rrng}[0]{\ensuremath{R_{\text{rng}}}}
\newcommand{\Brng}[0]{\ensuremath{B_{\text{rng}}}}
\newcommand{\Hrng}[0]{\ensuremath{H_{\text{rng}}}}

\newcommand{\bl}[0]{\ensuremath{n}}
\newcommand{\ptarget}[0]{\ensuremath{Q_{\rv{A}}}}
\newcommand{\pblock}[0]{\ensuremath{P_{\rv{A}^n}}}
\newcommand{\binset}[1]{\ensuremath{\mset{S}_{#1}}}

\newcommand{\numW}[0]{\ensuremath{K}}

\newcommand{\pmf}[1]{\ensuremath{P_{\rv{#1}}}}
\newcommand{\qmf}[1]{\ensuremath{Q_{\rv{#1}}}}

\newcommand{\nup}[0]{N_{\uparrow}}
\newcommand{\ndwn}[0]{N_{\downarrow}}
\newcommand{\qup}[0]{q_{\uparrow}}
\newcommand{\qdwn}[0]{q_{\downarrow}}
\DeclareMathOperator{\supp}{supp}


\newcommand{\psplotheight}[0]{2.5 in}

\newcommand{\unif}[1]{U_{#1}}

\newtheorem{theorem}{Theorem}
\newtheorem{lemma}{Lemma}
\newtheorem{proposition}{Proposition}

\newtheorem{definition}{Definition}

\theoremstyle{definition}

\newtheorem{example}{Example}

\title{Invertible Low-Divergence Coding}
\date{}
\author{Patrick Schulte,~\IEEEmembership{Member,~IEEE}, Rana Ali Amjad,~\IEEEmembership{Member,~IEEE}, Thomas Wiegart,~\IEEEmembership{Student Member,~IEEE}, and Gerhard Kramer,~\IEEEmembership{Fellow,~IEEE}

\thanks{
Date of current version \today.
This work was supported by the German Research Foundation (DFG) through project KR 3517/9-1.
}

\thanks{
Patrick Schulte was with the Chair of Communications Engineering, Technical University of Munich (TUM), 80333 Munich, Germany. He is now with the Huawei Munich Research Center, 80992 Munich, Germany (e-mail: patrick.schulte1@huawei.com).

Gerhard Kramer and Thomas Wiegart are with the Chair of Communications Engineering, Technical University of Munich (TUM), 80333 Munich, Germany (e-mail: gerhard.kramer@tum.de; thomas.wiegart@tum.de).

Rana Ali Amjad was with the Chair of Communications Engineering, Technical University of Munich (TUM), 80290 Munich, Germany (e-mail: ranaali.amjad@tum.de).
}
}

\begin{document}

\maketitle

\begin{abstract}
Several applications in communication, control, and learning require approximating target distributions to within small informational divergence (I-divergence).
The additional requirement of invertibility usually leads to using encoders that are one-to-one mappings, also known as distribution matchers. However, even the best one-to-one encoders have I-divergences that grow logarithmically with the block length in general.
To improve performance, an encoder is proposed that has an invertible one-to-many mapping and a low-rate resolution code. Two algorithms are developed to design the mapping by assigning strings in either a most-likely first or least-likely first order. Both algorithms give information rates approaching the entropy of the target distribution with exponentially decreasing I-divergence and with vanishing resolution rate in the block length. 
\end{abstract}

\section{Introduction}
\label{sec:introduction}
Approximating target distributions has applications such as energy-efficient communication, random number generators, distributed control, coordination, learning, stealth, and others. We are motivated by applications that require both good distribution matching and invertibility. For example, \ac{I-divergence}-minimization and invertibility are useful for variational inference~\cite{Jordan-1999,hafner2020action} and image processing~\cite{liu2020deep}. Another example is the stealth communication problem~\cite{hou2014effective,hou2017effective,lentner2020} where two parties try to hide communication from a ``warden''. The model has two possible states: one party sends either ``noise'' with per-letter statistics $\qmf{A}$ or it sends a string $a^n=a_1,\ldots,a_n$ of symbols that carries a message but resembles the noise. The warden observes $a^n$ and makes a hypothesis test. One finds that if the \ac{I-divergence} of the noise and message statistics is zero, then the best that the warden can do is to guess. 

We are interested in block codes and encoders that:
\begin{enumerate}
    \item map uniformly-distributed messages to strings $a^n$;
    \item transmit messages at rate near the entropy $\entrp{\qmf{A}}$;
    \item exhibit vanishing \ac{I-divergence} in the block length $n$;
    \item permit recovering the transmitted message from $a^n$.
\end{enumerate}
The last requirement suggests that the encoder should be a one-to-one mapping. However, invertibility makes the problem trickier than usual. For example, we find that:
\begin{itemize}
    \item \ac{DM} encoders such as \ac{CCDM} \cite{bocherer2015bandwidth,schulte2016constant} or shell mapping \cite{laroia1994optimal,khandani1993shaping}
    have rates that approach $\entrp{\qmf{A}}$ from \emph{below} and are one-to-one mappings, see Fig.~\ref{fig:o2o}; decoding is therefore invertible but the \ac{I-divergence} grows with $n$~\cite{schulte2017divergence};
    \item \acp{RNG}~\cite{Vonneumann-NBS51,han1997interval} or \acp{RC}~\cite{wyner1975common,han1993approximation} have rates that approach $\entrp{\qmf{A}}$ from \emph{above} and \acp{I-divergence} that vanish with $n$; however, the encoders are many-to-one mappings, see Fig.~\ref{fig:m2o}, and decoding is not invertible.
\end{itemize}
We refer to~\cite{han1997interval,boecherer-geiger-IT16} for more discussion and references on the relations between \ac{DM} and \ac{RC}/\ac{RNG} and their applications to, e.g., shaping for communication. Results for learning, stealth, control, and coordination are developed and reviewed in, e.g.,~\cite{Jordan-1999,hafner2020action,liu2020deep,hou2014effective,hou2017effective,lentner2020,Anantharam-Borkar-07,krsav01,krsav07,Cuff-Permuter-Cover-IT10}.

The above discussion suggest that \ac{ILD} coding might be impossible. There is, however, one more option. Observe that one-to-many mappings are invertible if the images of any pair of inputs are disjoint, see Fig.~\ref{fig:o2m}. This opens the possibility to combine \ac{DM} and \ac{RC}/\ac{RNG} to create an \emph{invertible one-to-many} mapping. To ensure that the \ac{RC} is efficient, we add the requirement that
\begin{enumerate}
    \setcounter{enumi}{\numexpr4\relax}
    \item the \ac{RC} rate vanishes with $n$.
\end{enumerate}
Our main contribution is to construct invertible one-to-many encoders with rates approaching $\entrp{\qmf{A}}$, exponentially decaying \ac{I-divergence}, and vanishing \ac{RC} rate in $n$.

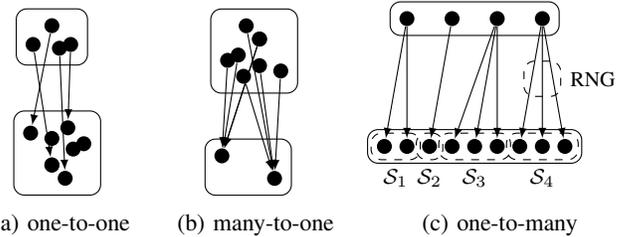
\begin{figure}
    \centering
    \begin{subfigure}[b]{0.25\columnwidth}
        \centering
        \begin{tikzpicture}[scale=1.0,rotate=-90]
\node(in1)[circle, fill,inner sep=2] at (0,0){};
\node(in2)[circle, fill,inner sep=2] at (0.3,0.1){};
\node(in3)[circle, fill,inner sep=2] at (0.25,-0.25){};
\node(in4)[circle, fill,inner sep=2] at (0.25,0.25){};

\node[draw,rounded corners,fit= (in1) (in2) (in3) (in4)]{};

\begin{scope}[xshift=1.5cm,rotate=45]
\node(out1)[circle, fill,inner sep=2] at (0,0){};
\node(out2)[circle, fill,inner sep=2] at (0.3,0.1){};
\node(out3)[circle, fill,inner sep=2] at (0.25,-0.25){};
\node(out4)[circle, fill,inner sep=2] at (0.05,0.25){};
\node(out5)[circle, fill,inner sep=2] at (0.25,-0.25){};
\node(out6)[circle, fill,inner sep=2] at (-0.25,-0.15){};
\node(out7)[circle, fill,inner sep=2] at (0.5,-0.25){};
\node(out8)[circle, fill,inner sep=2] at (0.35,0.25){};
\end{scope}

\node[draw,rounded corners,fit= (out1) (out8) (out7) (out4) (out6)]{};

\draw [-latex] (in1) -- (out6);
\draw [-latex] (in2) -- (out7);
\draw [-latex] (in3) -- (out5);
\draw [-latex] (in4) -- (out4);
\end{tikzpicture}
        \caption{one-to-one}
        \label{fig:o2o}
    \end{subfigure}
    ~ 
    \begin{subfigure}[b]{0.25\columnwidth}
        \centering
        \begin{tikzpicture}[scale=1.0,rotate=90]
\node(in1)[circle, fill,inner sep=2] at (0,-0.2){};
\node(in2)[circle, fill,inner sep=2] at (0.3,0.5){};

\node[draw,rounded corners,fit= (in1) (in2) ]{};

\begin{scope}[xshift=1.5cm,rotate=45]
\node(out1)[circle, fill,inner sep=2] at (0,0){};
\node(out2)[circle, fill,inner sep=2] at (0.3,0.1){};
\node(out3)[circle, fill,inner sep=2] at (0.25,-0.25){};
\node(out4)[circle, fill,inner sep=2] at (0.05,0.25){};
\node(out5)[circle, fill,inner sep=2] at (0.25,-0.25){};
\node(out6)[circle, fill,inner sep=2] at (-0.25,-0.15){};
\node(out7)[circle, fill,inner sep=2] at (0.5,-0.25){};
\node(out8)[circle, fill,inner sep=2] at (0.35,0.25){};
\end{scope}

\node[draw,rounded corners,fit= (out1) (out8) (out7) (out4) (out6)]{};

\draw [latex-] (in1) -- (out1);
\draw [latex-] (in2) -- (out2);
\draw [latex-] (in1) -- (out4);
\draw [latex-] (in2) -- (out3);
\draw [latex-] (in1) -- (out6);
\draw [latex-] (in2) -- (out5);
\draw [latex-] (in1) -- (out7);
\draw [latex-] (in2) -- (out8);
\end{tikzpicture}
        \caption{many-to-one}
        \label{fig:m2o}
    \end{subfigure}
    ~ 
    \begin{subfigure}[b]{0.4\columnwidth}
        \centering
        \begin{tikzpicture}[scale=1,rotate=-90,yscale=-1]
\foreach \i in {1,...,4}{
\pgfmathsetmacro{\x}{\i*0.6}
\node(in\i)[circle, fill,inner sep=2] at (0.3,\x){};
}


\foreach \i in {1,...,9}{
\pgfmathsetmacro{\x}{\i*0.3}
\node(out\i)[circle, fill,inner sep=2] at (2,\x){};
}





\foreach [count=\c] \i in {{1,2,3},{4,5,6},{7},{8,9}}{
	\foreach \j in \i{
	\draw[-latex] (in\c) -- (out\j);
	}
}

\node[draw,dashed,rounded corners,fit= (out1)(out2) (out3),inner sep=2,label={[]-90:\footnotesize$\mathcal{S}_4$}]{};
\node[draw,dashed,rounded corners,fit= (out4) (out5) (out6),inner sep=2,label={[]-90:\footnotesize$\mathcal{S}_3$}]{};
\node[draw,dashed,rounded corners,fit= (out7),inner sep=2 ,label={[]-90:\footnotesize$\mathcal{S}_2$}]{};
\node[draw,dashed,rounded corners,fit= (out8) (out9),inner sep=2,label={[]-90:\footnotesize$\mathcal{S}_1$}]{};

\node[draw,dashed,rounded corners,label={[]0:\footnotesize RNG}] at (1.1,0.6){\phantom{A}};

\node[draw,rounded corners,fit= (in1) (in4) (in2) (in3)]{};
\node[draw,rounded corners,fit= (out1) (out9)]{};

\end{tikzpicture}
        \caption{one-to-many}
        \label{fig:o2m}
    \end{subfigure}
    \caption{(a) Distribution matching (DM); (b) resolution coding (RC); and (c) invertible low-divergence (ILD) coding.}
    \label{fig:animals}
\end{figure}

This paper is organized as follows. 
Sec.~\ref{sec:preliminaries} introduces notation and bounds and
Sec.~\ref{sec:model} specifies the model and requirements. 
Sec.~\ref{sec:encoder-design} develops an encoder with a one-to-many mapping and a \ac{RC}. Sec.~\ref{sec:DM} treats \ac{DM} and generalizes results of~\cite{schulte2017divergence}. 
Sec.~\ref{sec:MLF-and-LLF} introduces the \ac{MLF} and \ac{LLF} algorithms for encoder design. Sec.~\ref{sec:I-div-lower-bounds} develops lower bounds on the \ac{I-divergence}. 
Sec.~\ref{sec:numerical-results} provides numerical results and compares them to the bounds.
Sec.~\ref{sec:conclusions-outlook} concludes the paper. 
Appendixes~\ref{app:inequalities}-\ref{app:MLF-LLF-general-alphabets} provide proofs of Lemmas and Theorems.

\section{Preliminaries}
\label{sec:preliminaries}
\subsection{Notation}
Sets are written with calligraphic letters $\mset{A}$ and the empty set with $\emptyset$.
The cardinality of $\mset{A}$ is $|\mset{A}|$ and the $n$-fold Cartesian product of $\mset{A}$ is $\mset{A}^n$. 

Random variables (RVs) \acused{RV}are written with uppercase letters such as  $\rv{A}$, their realizations with corresponding lowercase letters $a$, and their alphabets as $\mset{A}$. A \ac{pmf} of a \ac{RV} \rv{A} is denoted by $\pmf{A}$ or $\qmf{A}$.
We use $\qmf{A}$ for target \acp{pmf} and $\pmf{A}$ for synthesized \acp{pmf}. We discard the subscripts when referring to generic \acp{pmf}. A \ac{pmf} or function is sometimes written as a vector, e.g., \ac{pmf} $\qmf{A}$ with alphabet $\mset{A}=\{1,\ldots,|\mset{A}|\}$ is written as $[Q_\rv{A}(1),\ldots,Q_\rv{A}(|\mset{A}|)]$. The uniform \ac{pmf} over a set of $\numW$ elements is denoted by $U_\numW$.

A random string is denoted by $\rv{A}^n = \rv{A}_1\rv{A}_2\ldots\rv{A}_n$ and its realizations by $a^n = a_1a_2\ldots a_n \in \mset{A}^n$. We write $Q_{\rv{A}^n}=\qmf{A}^n$ for the \ac{pmf} of a string of \ac{iid} \acp{RV}. The probability of a set $\mset{S}\subseteq\mset{A}^n$ of strings with respect to $\qmf{A}^n$ is written as
\begin{equation}
    \qmf{A}^n(\mset{S}) := \sum_{a^n\in\mset{S}} \qmf{A}^n(a^n)
\end{equation}
and as $q_{\mset{S}}=\qmf{A}^n(\mset{S})$ for short. Probabilities conditioned on the event $\mset{S}$ are written as
\begin{equation}
    Q_{\rv{A}|\mset{S}}^n(a^n) := \begin{cases} \frac{\qmf{A}^n(a^n)}{\qmf{A}^n(\mset{S})},& a^n\in\mathcal{S}\\
    0,& a^n\not\in \mathcal{S} .
    \end{cases}
    \label{eq:Qset}
\end{equation}

Let $n_i=n_i(a^n)$ be the number of occurrences of the letter $i$ in $a^n$ for $i=1,\dots,|\mset{A}|$. The \emph{empirical} \ac{pmf} (or \emph{type}) of $a^n$ is $\pi_{a^n}=\frac1n[n_1,\dots,n_{|\mset{A}|}]$. Let $\mset{P}_n$ be the set of empirical \acp{pmf} with denominator $n$ (the \emph{$n$-types}). The string $a^n$ is called \emph{typical} with respect to $P$ and $\epsilon$ if (see~\cite[Ch.~2.4]{ElGamal2011})
\begin{align}
    \left| \pi_{a^n}(i) - P(i)\right| \le \epsilon\, P(i)
    \label{eq:typical}
\end{align}
for all $i\in\mset{A}$. The set of typical strings is denoted $\mset{T}_\epsilon(P)$.

The \emph{expectation} of a real-valued function $f$ of a random variable $\rv{A}$ with respect to $P$ is
\begin{align}
  \E{P}{f(\rv{A})} = \sum\limits_{a\in\supp(P)} P(a) f(a)
\end{align}
where $\supp(P) \subseteq \mset{A}$ is the support of $P$, i.e., the set of $a\in\mset{A}$ with $P(a)>0$. For example, the \emph{variance} of $f(A)$ is
\begin{align}
  \Var{P}{f(A)}
  = \E{P}{f(\rv{A})^2} - \E{P}{f(\rv{A})}^2.
\end{align}

The \emph{self-information} of $a$ with respect to a \ac{pmf} $P$ is
\begin{align}
    \selfinfo{P}{a} = - \log_2 P(a).
    \label{eq:self-info}
\end{align}
The \emph{entropy} of $P$ is 
\begin{align}
  \entrp{P}
  = \E{P}{\selfinfo{P}{\rv{A}}}
  = \E{P}{- \log_2 P(\rv{A})}.
\end{align}
The binary entropy function is $h(p)=-p\log_2(p)-(1-p)\log_2(1-p)$ for $0<p<1$ and $h(p)=0$ otherwise. The average conditional entropy is written as
\begin{align}
  \entrp{P_{\rv{A}|\rv{W}}} = \sum_{w\in\supp(\pmf{W})} \pmf{W}(w)\, \entrp{P_{\rv{A}|\rv{W}}(\cdot|w)}.
\end{align}

The \emph{cross entropy} of two \ac{pmf}s $P$ and $Q$ is
\begin{align}
  \crossentrp{P}{Q}
  = \E{P}{- \log_2 Q(\rv{A})}.
\end{align}
For example, we have $\selfinfo{\qmf{A}^n}{a^n} = n \crossentrp{\pi_{a^n}}{\qmf{A}}$.
The \ac{I-divergence} of two \ac{pmf}s $P$ and $Q$ is
\begin{align}
  \diverg{P}{Q}
  = \E{P}{\log_2 \frac{P(\rv{A})}{Q(\rv{A})}}
\end{align}
and we have
\begin{align}
  \crossentrp{P}{Q}=\entrp{P}+\diverg{P}{Q}.
  \label{eq:X-H-D}
\end{align}
\ac{I-divergence} is also known as relative entropy and Kullback-Leibler divergence~\cite[Ch.~2.3]{cover2006elements}.

The $\ell_1$ distance between two \ac{pmf}s $P$ and $Q$ on $\mset{A}$ is
\begin{align}
    \done{P}{Q} = \sum_{a \in \mset{A}} \left| P(a)-Q(a) \right|
\end{align}
where $\done{P}{Q}\le2$ with equality if and only if the supports of $P$ and $Q$ are disjoint. For sequences $f(n)$ and $g(n)$, $n=1,2,\dots$, the little-o notation $f(n)=o(g(n))$ means that $\lim_{n\rightarrow\infty} f(n)/g(n)\rightarrow0$, see~\cite[p.~61]{Landau1909}.

\subsection{Bounds for Entropy and \ac{I-divergence}}
\label{subsec:bounds-entropy-Idiv}

We state several results that we need below. The shorthand $d_1$ refers to $\done{P}{Q}$.

\begin{lemma}[{\cite[Ch.~2.6]{cover2006elements}}]
\label{lemma:divergence-bound}
$\diverg{P}{Q}\ge 0$ and $\crossentrp{P}{Q}\ge\entrp{P}$, both with equality if and only if $P=Q$.
\end{lemma}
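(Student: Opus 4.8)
The plan is to establish the non-negativity of the I-divergence first, because the cross-entropy inequality then follows for free: by the identity \eqref{eq:X-H-D} we have $\crossentrp{P}{Q}=\entrp{P}+\diverg{P}{Q}$, so $\diverg{P}{Q}\ge 0$ immediately yields $\crossentrp{P}{Q}\ge\entrp{P}$ with \emph{exactly} the same equality case. Everything therefore reduces to showing $\diverg{P}{Q}\ge 0$ with equality if and only if $P=Q$.

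Before the main estimate I would dispose of the degenerate situation. If some $a\in\supp(P)$ has $Q(a)=0$, then the corresponding term of $\diverg{P}{Q}=\sum_{a\in\supp(P)}P(a)\log_2\frac{P(a)}{Q(a)}$ is $+\infty$, so the bound holds strictly and nothing is left to check. Hence I may assume $\supp(P)\subseteq\supp(Q)$, which makes the ratio $Q(a)/P(a)$ finite and positive on $\supp(P)$.

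The core step is the elementary bound $\ln x\le x-1$ for $x>0$, with equality if and only if $x=1$; rescaling to base two gives $\log_2 x\le (x-1)\log_2 e$. Applying this pointwise with $x=Q(a)/P(a)$ and summing gives
\begin{align}
-\diverg{P}{Q}=\sum_{a\in\supp(P)}P(a)\log_2\frac{Q(a)}{P(a)}\le \log_2 e\sum_{a\in\supp(P)}\bigl(Q(a)-P(a)\bigr).
\end{align}
Since $\sum_{a\in\supp(P)}P(a)=1$ while $\sum_{a\in\supp(P)}Q(a)\le\sum_{a\in\mset{A}}Q(a)=1$, the right-hand side is at most $0$, so $\diverg{P}{Q}\ge 0$.

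The only part deserving real care is the equality analysis, and this is where I would concentrate. Equality forces two conditions simultaneously: the pointwise bound must be tight for every $a\in\supp(P)$, i.e.\ $Q(a)/P(a)=1$ and hence $Q(a)=P(a)$ on $\supp(P)$; and the inequality $\sum_{a\in\supp(P)}Q(a)\le 1$ must hold with equality, i.e.\ $Q$ places no mass outside $\supp(P)$. Together these give $Q(a)=P(a)$ for all $a\in\mset{A}$, that is $P=Q$; conversely $P=Q$ trivially makes the divergence vanish. Transporting this equality characterization through \eqref{eq:X-H-D} settles the cross-entropy statement as well. The main obstacle is thus not the inequality itself but keeping the two sources of slack (the pointwise log bound and the support restriction on $Q$) separate so that the ``if and only if'' is argued cleanly.
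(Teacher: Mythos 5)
Your proof is correct. Note that the paper itself gives no proof of this lemma at all --- it is stated with a citation to \cite[Ch.~2.6]{cover2006elements} --- so there is no in-paper argument to match; the relevant comparison is with the textbook proof. Cover and Thomas prove the information inequality by applying Jensen's inequality to the concave logarithm, $\sum_a P(a)\log_2\frac{Q(a)}{P(a)} \le \log_2 \sum_a P(a)\frac{Q(a)}{P(a)} \le \log_2 1 = 0$, whereas you use the pointwise bound $\ln x \le x-1$; these are the two standard routes and are essentially equivalent (the pointwise bound is what makes Jensen tight term by term). Your version has a small advantage precisely where you flagged it: because the slack is localized at each $a$, the equality analysis splits cleanly into the pointwise condition $Q(a)=P(a)$ on $\supp(P)$ and the support condition $\sum_{a\in\supp(P)}Q(a)=1$, whereas with Jensen one must additionally invoke strict concavity to argue the ratio $Q(a)/P(a)$ is constant. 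You also correctly dispose of the $Q(a)=0$ case (where the divergence is $+\infty$) before dividing, and the reduction of the cross-entropy claim to the divergence claim via \eqref{eq:X-H-D} is exactly right, including the transport of the equality case.
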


\begin{lemma}[{\cite[Ch.~2.7]{cover2006elements}}]
\label{lemma:divergence-convexity}
$\diverg{P}{Q}$ is convex in the \ac{pmf}-pair $(P,Q)$. $\crossentrp{P}{Q}$ is linear in $P$ and convex in $Q$. $\entrp{P}$ is concave in $P$.
\end{lemma}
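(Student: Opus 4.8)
The plan is to separate the three assertions: the statements about $\crossentrp{P}{Q}$ and $\entrp{P}$ are one-dimensional convexity facts applied coordinatewise, while the joint convexity of $\diverg{P}{Q}$ is the only step requiring real work. First I would treat the cross entropy $\crossentrp{P}{Q}=\sum_{a}P(a)\,(-\log_2 Q(a))$. For fixed $Q$ the coefficients $-\log_2 Q(a)$ are constants, so this is a linear functional of the vector $P$, giving linearity in $P$ at once. For fixed $P$, each summand is $P(a)$ times $-\log_2 Q(a)$, and $q\mapsto-\log_2 q$ has second derivative $1/(q^2\ln2)>0$, hence is convex; a nonnegative combination (the weights $P(a)\ge0$) of convex functions of $Q$ is convex, so $\crossentrp{P}{Q}$ is convex in $Q$. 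Similarly, $\entrp{P}=\sum_a(-P(a)\log_2 P(a))$ is a sum of the scalar maps $t\mapsto-t\log_2 t$, each concave on $[0,1]$ since its second derivative is $-1/(t\ln2)<0$, so $\entrp{P}$ is concave in $P$.

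The substantive part is the joint convexity of $\diverg{P}{Q}$ in the pair $(P,Q)$, for which I would invoke the log-sum inequality: for nonnegative reals $a_1,a_2,b_1,b_2$,
\[
  a_1\log_2\frac{a_1}{b_1}+a_2\log_2\frac{a_2}{b_2}\ge\Big(a_1+a_2\Big)\log_2\frac{a_1+a_2}{b_1+b_2}.
\]
Fix $\lambda\in[0,1]$ and two pairs $(P_1,Q_1),(P_2,Q_2)$, and set $P=\lambda P_1+(1-\lambda)P_2$ and $Q=\lambda Q_1+(1-\lambda)Q_2$. Applying the inequality for each symbol $a$ with $(a_1,a_2)=(\lambda P_1(a),(1-\lambda)P_2(a))$ and $(b_1,b_2)=(\lambda Q_1(a),(1-\lambda)Q_2(a))$, the factors $\lambda$ and $1-\lambda$ cancel inside the logarithms on the left and one obtains
\[
  \lambda P_1(a)\log_2\frac{P_1(a)}{Q_1(a)}+(1-\lambda)P_2(a)\log_2\frac{P_2(a)}{Q_2(a)}\ge P(a)\log_2\frac{P(a)}{Q(a)}.
\]
Summing over $a$ yields $\lambda\diverg{P_1}{Q_1}+(1-\lambda)\diverg{P_2}{Q_2}\ge\diverg{P}{Q}$, which is exactly joint convexity.

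A cleaner alternative, which I might use to shorten the write-up, is to recognize $(x,y)\mapsto x\log_2(x/y)$ as the perspective of the convex scalar function $x\mapsto x\log_2 x$; perspectives of convex functions are jointly convex on the positive orthant, and summing these jointly convex maps over the coordinate pairs $(P(a),Q(a))$ gives joint convexity of $\diverg{P}{Q}$ immediately. The two routes are the same fact in disguise, since the log-sum inequality is itself Jensen's inequality for $x\mapsto x\log_2 x$. I expect this joint-convexity step to be the only real obstacle; the cross-entropy and entropy claims are routine consequences of the convexity of $-\log_2 t$ and the concavity of $-t\log_2 t$, respectively.
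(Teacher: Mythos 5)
Your proposal is correct. The paper offers no proof of this lemma at all---it simply cites Cover and Thomas---and your argument (linearity in $P$ and coordinatewise convexity of $-\log_2 t$ for $\crossentrp{P}{Q}$, concavity of $-t\log_2 t$ for $\entrp{P}$, and the log-sum inequality, equivalently the perspective construction, for joint convexity of $\diverg{P}{Q}$) is precisely the standard argument in that cited reference, so the two approaches coincide.
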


\begin{lemma}[{\cite[Sec.~17.3]{cover2006elements}}]
\label{lemma:entropy-inequality}
If $d_1\le1/2$ then
\begin{align}
    \left| \entrp{P} - \entrp{Q} \right| \le -d_1 \log_2\frac{d_1}{|\mset{A}|}.
    \label{eq:entropy-inequality}
\end{align}
\end{lemma}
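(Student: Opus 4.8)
The plan is to reduce the bound to a single scalar inequality for $f(t)=-t\log_2 t$ and then close with a maximum-entropy argument. First I would write the entropy difference letter by letter,
\[
\entrp{P}-\entrp{Q} = \sum_{a\in\mset{A}} \bigl( f(P(a)) - f(Q(a)) \bigr),
\]
and set $\delta_a = |P(a)-Q(a)|$, so that $\sum_{a}\delta_a = d_1$. If $d_1=0$ then $P=Q$ and the bound is trivial, so I assume $d_1>0$. The key structural observation is that every individual gap obeys $\delta_a \le \sum_b \delta_b = d_1 \le 1/2$, which is exactly what is needed to apply the hypothesis $d_1\le1/2$ \emph{per letter}.

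The crucial scalar estimate is that, for $x,y\in[0,1]$ with $|x-y|\le1/2$,
\[
\left| f(x)-f(y)\right| \le f(|x-y|).
\]
To see this, put $\delta=|x-y|$ and assume $y<x$. When $f(x)\ge f(y)$, concavity of $f$ (so $f'$ is decreasing) together with $f(0)=0$ gives $f(y+\delta)-f(y)\le f(\delta)-f(0)=f(\delta)$, since the increment of a concave function over an interval of fixed length shrinks as the interval slides to the right. When instead $f(y)>f(x)$, I would rewrite the gap using the convex function $\psi(t)=t\log_2 t$ as $f(y)-f(x)=\psi(x)-\psi(x-\delta)$; by convexity this increment is largest when the interval is as far right as possible, i.e.\ at $x=1$, which reduces the claim to $f(1-\delta)\le f(\delta)$, valid for $\delta\le1/2$ with equality at $\delta=1/2$. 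I expect this second case to be the main obstacle: because $f$ is non-monotone, the easy concavity argument of the first case fails, and one must genuinely use the $\le1/2$ hypothesis to control the decreasing branch.

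With the scalar estimate in hand, the triangle inequality gives
\[
\left| \entrp{P}-\entrp{Q}\right| \le \sum_a \left| f(P(a))-f(Q(a))\right| \le \sum_a f(\delta_a) = \sum_a -\delta_a\log_2\delta_a,
\]
where the per-letter application is legitimate precisely because each $\delta_a\le d_1\le1/2$. It remains to maximize the right-hand side subject to $\sum_a\delta_a=d_1$. Writing $\delta_a=d_1\,\mu(a)$ with $\mu$ a \ac{pmf} on $\mset{A}$,
\[
\sum_a -\delta_a\log_2\delta_a = -d_1\log_2 d_1 + d_1\,\entrp{\mu} \le -d_1\log_2 d_1 + d_1\log_2|\mset{A}| = -d_1\log_2\frac{d_1}{|\mset{A}|},
\]
where $\entrp{\mu}=\log_2|\mset{A}|-\diverg{\mu}{\unif{|\mset{A}|}}\le\log_2|\mset{A}|$ by the nonnegativity of \ac{I-divergence} in Lemma~\ref{lemma:divergence-bound}. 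This is exactly the claimed bound, so the only nontrivial step to fill in carefully is the scalar inequality and, within it, the decreasing-branch case.
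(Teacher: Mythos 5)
Your proof is correct and takes essentially the same route as the paper's: the paper proves this lemma simply by citing \cite[Sec.~17.3]{cover2006elements}, and the textbook argument there is precisely your per-letter decomposition with $f(t)=-t\log_2 t$, the scalar bound $\left|f(x)-f(y)\right|\le f\left(|x-y|\right)$ for $|x-y|\le 1/2$, and the normalization $\delta_a=d_1\,\mu(a)$ followed by $\entrp{\mu}\le\log_2|\mset{A}|$. Your two-case concavity/convexity verification of the scalar inequality (including the reduction of the decreasing-branch case to $f(1-\delta)\le f(\delta)$ for $\delta\le 1/2$) correctly fills in the detail that the cited reference leaves to the reader.
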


\begin{lemma}
\label{lemma:entropy-inequality2}
Let $p_{\rm min}$ and $p_{\rm max}$ be the respective minimum and maximum probabilities of $P$. If $d_1\le2 p_{\rm min}$ then
\begin{align}
    \entrp{P} - \entrp{Q} \le \frac{d_1}{2} \log_2\frac{p_{\rm max}+d_1/2}{p_{\rm min}-d_1/2}.
    \label{eq:entropy-inequality2}
\end{align}
\end{lemma}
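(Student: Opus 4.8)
The plan is to reduce the entropy difference to a single sum weighted by the signed differences $\delta_a := Q(a)-P(a)$ and then to bound those weights. First I would record the two structural facts about the vector $(\delta_a)$: since $P$ and $Q$ are both distributions, $\sum_a \delta_a = 0$, and since $\done{P}{Q}=d_1$, the sum of the positive $\delta_a$ equals the sum of the magnitudes of the negative $\delta_a$, and each equals $d_1/2$; in particular $|\delta_a|\le d_1/2$ for every $a$. Under the hypothesis $d_1\le 2p_{\rm min}$ this gives $Q(a)=P(a)+\delta_a\ge p_{\rm min}-d_1/2\ge 0$, so $Q$ is strictly positive and the right-hand side is well defined whenever $d_1<2p_{\rm min}$; at $d_1=2p_{\rm min}$ the bound equals $+\infty$ and holds trivially.

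Next I would apply the mean value theorem to the function $f(x)=-x\log_2 x$ termwise. Writing $\entrp{P}-\entrp{Q}=\sum_a [f(P(a))-f(Q(a))]$ and using $f'(x)=-\log_2 x-\log_2 e$, each term equals $(\log_2\xi_a+\log_2 e)\,\delta_a$ for some $\xi_a$ between $P(a)$ and $Q(a)$ (terms with $\delta_a=0$ vanish and need no $\xi_a$). Summing and invoking $\sum_a\delta_a=0$, the constant $\log_2 e$ contributes nothing, leaving the clean expression $\entrp{P}-\entrp{Q}=\sum_a (\log_2\xi_a)\,\delta_a$.

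The final step is to bound $\xi_a$ according to the sign of $\delta_a$. When $\delta_a>0$ we have $Q(a)>P(a)$, so $\xi_a\le Q(a)=P(a)+\delta_a\le p_{\rm max}+d_1/2$; when $\delta_a<0$ we have $Q(a)<P(a)$, so $\xi_a\ge Q(a)=P(a)+\delta_a\ge p_{\rm min}-d_1/2$. Pairing the upper bound with the positive weights and the lower bound with the negative weights, and using that the positive weights sum to $d_1/2$ while the negative weights sum to $-d_1/2$, yields $\entrp{P}-\entrp{Q}\le \frac{d_1}{2}\log_2(p_{\rm max}+d_1/2)-\frac{d_1}{2}\log_2(p_{\rm min}-d_1/2)$, which is the claimed bound after combining the logarithms.

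I expect the only delicate point to be keeping the inequality directions straight in the sign split: the lower bound $\xi_a\ge p_{\rm min}-d_1/2$ gets multiplied by the negative weight $\delta_a$, which reverses the inequality and is precisely what produces the $p_{\rm min}$ term with the correct sign. Everything else is bookkeeping once the mean-value representation and the cancellation of the $\log_2 e$ term are in place.
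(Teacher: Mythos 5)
Your proof is correct, but it reaches the key weighted-log-sum by a different route than the paper. The paper writes $Q = P + \Delta$ and uses the exact algebraic identity $\entrp{P}-\entrp{Q} = -\diverg{P}{Q} + \sum_i \Delta(i)\log_2\left(P(i)+\Delta(i)\right)$, then discards $-\diverg{P}{Q}\le 0$ (Lemma~\ref{lemma:divergence-bound}) and bounds the remaining sum by exactly the sign-splitting bookkeeping you perform: the positive-part weights sum to $d_1/2$ and are paired with $\log_2(p_{\rm max}+d_1/2)$, the negative-part weights sum to $-d_1/2$ and are paired with $\log_2(p_{\rm min}-d_1/2)$. You instead obtain the representation $\entrp{P}-\entrp{Q}=\sum_a \delta_a\log_2\xi_a$ via the mean value theorem applied termwise to $f(x)=-x\log_2 x$, using $\sum_a\delta_a=0$ to cancel the $\log_2 e$ constant, and then bound each $\xi_a$ through $Q(a)$; this lands on the same final estimate. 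What the paper's identity buys is exactness and a purely algebraic argument: retaining the $-\diverg{P}{Q}$ term actually proves the stronger statement $\crossentrp{P}{Q}-\entrp{Q}\le \frac{d_1}{2}\log_2\frac{p_{\rm max}+d_1/2}{p_{\rm min}-d_1/2}$, whereas the slack hidden in your non-explicit points $\xi_a$ cannot be recovered. What your route buys is independence from the nonnegativity of I-divergence (Gibbs' inequality), at the cost of a calculus argument that needs differentiability away from zero --- a point you correctly handle by disposing of the boundary case $d_1=2p_{\rm min}$ (where the right-hand side is infinite and the claim is vacuous) before invoking the mean value theorem.
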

\begin{proof}
See Appendix~\ref{app:inequalities} and~\cite[Lemma~1]{schulte2017divergence}.
\end{proof}

\begin{lemma}[Pinsker Inequalities {\cite[Ch.~11]{cover2006elements}, \cite[Eq.~(23)]{Sason-Verdu-IT16}}]
\label{lemma:pinsker}
\begin{align}
  \frac{1}{2\ln2} \, d_1^2
  \le \diverg{P}{Q} \le
  \frac{1}{q_{\rm min} \ln2} \, d_1^2
  \label{eq:pinsker-inequality}
\end{align}
where $q_{\rm min}=\min_{a\in\supp(P)} Q(a)$. For instance, the right-hand side of \eqref{eq:pinsker-inequality} is $\infty$ if $Q(a)=0$ when $P(a)>0$.
\end{lemma}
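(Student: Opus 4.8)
The plan is to prove the two inequalities by separate, standard routes: the left bound is the \emph{classical} Pinsker inequality, which I would obtain by reducing to a binary alphabet and then doing a one-variable calculus argument; the right bound is a \emph{reverse} Pinsker inequality, which I would get from the elementary estimate $\ln t\le t-1$ that turns \ac{I-divergence} into a $\chi^2$-type quantity. Both halves can be derived from first principles using only $\ln t\le t-1$, the log-sum inequality, and the bound $\sum_a x_a^2\le(\sum_a|x_a|)^2$, rather than only quoting \cite{cover2006elements,Sason-Verdu-IT16}.

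For the left inequality I would first reduce to the binary case. Let $\mset{B}=\{a\in\mset{A}:P(a)\ge Q(a)\}$ and form the induced binary distributions $\bar P=[P(\mset B),1-P(\mset B)]$ and $\bar Q=[Q(\mset B),1-Q(\mset B)]$. Since $\sum_a\left(P(a)-Q(a)\right)=0$, one gets $P(\mset B)-Q(\mset B)=d_1/2$, so $\done{\bar P}{\bar Q}=d_1$. Applying the log-sum inequality separately on $\mset B$ and $\mset B^c$ gives the data-processing bound $\diverg{P}{Q}\ge\diverg{\bar P}{\bar Q}$, so it suffices to prove the claim for the binary pair. For that I would fix $q=Q(\mset B)$, set $g(p)=p\ln\frac{p}{q}+(1-p)\ln\frac{1-p}{1-q}-2(p-q)^2$, and verify $g(q)=0$, $g'(q)=0$, and $g''(p)=\frac{1}{p(1-p)}-4\ge0$ because $p(1-p)\le1/4$. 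Convexity then forces $g\ge0$, i.e. $\diverg{\bar P}{\bar Q}\,\ln2\ge2\left(P(\mset B)-Q(\mset B)\right)^2=d_1^2/2$; dividing by $\ln2$ to pass from nats to bits yields $\diverg{P}{Q}\ge\frac{1}{2\ln2}\,d_1^2$.

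For the right inequality I would assume $\supp(P)\subseteq\supp(Q)$, since otherwise both the divergence and the right-hand side are $\infty$. Applying $\ln t\le t-1$ with $t=P(a)/Q(a)$ gives, in nats,
\begin{align}
  \diverg{P}{Q}\,\ln2=\sum_{a\in\supp(P)}P(a)\ln\frac{P(a)}{Q(a)}\le\sum_{a\in\supp(P)}P(a)\left(\frac{P(a)}{Q(a)}-1\right)=\sum_{a\in\supp(Q)}\frac{\left(P(a)-Q(a)\right)^2}{Q(a)},
  \label{eq:rev-pinsker-chi}
\end{align}
where the last equality uses $\sum_{a\in\supp(Q)}P(a)=\sum_{a\in\supp(Q)}Q(a)=1$. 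I would then lower-bound every denominator by $q_{\rm min}$ and use $\sum_a x_a^2\le(\sum_a|x_a|)^2$ with $x_a=P(a)-Q(a)$ to obtain $\sum_{a\in\supp(Q)}\frac{\left(P(a)-Q(a)\right)^2}{Q(a)}\le\frac{1}{q_{\rm min}}\,d_1^2$, which is the claim after dividing by $\ln2$.

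The hard part, and the step I would scrutinize, is the support bookkeeping in \eqref{eq:rev-pinsker-chi}. The passage $\frac{1}{Q(a)}\le\frac{1}{q_{\rm min}}$ must hold for every index that survives in the $\chi^2$-sum, i.e. for all $a\in\supp(Q)$, so the minimum defining $q_{\rm min}$ has to range over $\supp(Q)$. When $\supp(P)=\supp(Q)$ this coincides with the stated $\min_{a\in\supp(P)}Q(a)$ and the chain closes immediately; but if $\supp(P)\subsetneq\supp(Q)$, the symbols in $\supp(Q)\setminus\supp(P)$ (where $P(a)=0$ but $Q(a)>0$) still contribute $Q(a)$ to the $\chi^2$-sum, and controlling them requires $Q(a)\ge q_{\rm min}$, which fails unless $q_{\rm min}$ is taken over $\supp(Q)$. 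This is the one place where the argument is genuinely delicate, and I would make the support convention explicit there; everything else is routine.
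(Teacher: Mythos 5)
The paper never proves this lemma---it is stated as a known result with citations to \cite[Ch.~11]{cover2006elements} and \cite[Eq.~(23)]{Sason-Verdu-IT16}---so your from-first-principles derivation is necessarily a different route, and it is the standard one: for the left inequality, the reduction to the binary pair $(\bar P,\bar Q)$ via the set $\mset{B}=\{a:P(a)\ge Q(a)\}$ and the log-sum inequality, followed by the one-variable convexity argument, is exactly Pinsker's classical proof and is correct (including the identity $P(\mset{B})-Q(\mset{B})=d_1/2$ and the verification $g''(p)=\tfrac{1}{p(1-p)}-4\ge0$); for the right inequality, the chain $\diverg{P}{Q}\ln 2\le\chi^2(P\Vert Q)\le d_1^2/Q_{\min}$ via $\ln t\le t-1$ is also correct as you state it, \emph{provided} $Q_{\min}$ is the minimum of $Q$ over $\supp(Q)$.

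Your closing caveat about the support convention is not pedantry: it exposes an actual error in the lemma as stated. With $q_{\rm min}=\min_{a\in\supp(P)}Q(a)$ the right-hand inequality is false whenever $\supp(P)\subsetneq\supp(Q)$ and the mass of $Q$ outside $\supp(P)$ is non-negligible. Concretely, take $P=[1,0,0]$ and $Q=[0.9,\,0.06,\,0.04]$: then $\diverg{P}{Q}=\log_2(1/0.9)\approx 0.152$ bits, while $d_1=0.2$ and $q_{\rm min}=0.9$ give $d_1^2/(q_{\rm min}\ln 2)\approx 0.064$ bits, so the claimed upper bound is violated. The correct statement (and the one in \cite{Sason-Verdu-IT16}) takes the minimum over $\supp(Q)$; with that convention your proof closes, and one can even recover the sharper constant $d_1^2/(2\,Q_{\min}\ln 2)$ by using $\max_a|P(a)-Q(a)|\le d_1/2$ instead of $\sum_a x_a^2\le(\sum_a|x_a|)^2$. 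Note finally that this flaw does not propagate into the paper: only the left (classical Pinsker) inequality is ever invoked, namely in the proofs of Lemmas~\ref{lemma:rate-upper} and~\ref{lemma:rate-lower}, so correcting the definition of $q_{\rm min}$ affects nothing downstream.
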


We remark that one advantage of Lemma \ref{lemma:entropy-inequality2} over Lemmas~\ref{lemma:entropy-inequality} and~\ref{lemma:pinsker} is that its bound is effectively linear in $d_1$ for small $d_1$ rather than behaving as $-d_1 \log d_1$ or $d_1^2$.

\subsection{Probability Bounds for Sums and Sets}
\label{subsec:bounds-sums}

The following Lemma specializes a result of~\cite{Hoeffding1963} to \ac{iid} discrete random variables.

\begin{lemma}[Hoeffding {\cite[Thm.~1]{Hoeffding1963}}]
\label{lemma:hoeffding}
Consider the \ac{iid} string $\rv{A}^n$ of random variables where $\rv{A}:=\rv{A}_1$ has \ac{pmf} $P$ with alphabet $\mset{A}$. Let $\rv{S}_n=\frac1n\sum_{i=1}^n f(\rv{A}_i)$ for a real-valued function $f$ satisfying $0\le f(a)\le 1$ for all $a\in\mset{A}$. We have
\begin{align}
  \Pr\left[ \rv{S}_n - \E{P}{f(\rv{A})} \ge t \right] \le e^{-2nt^2}
  \label{eq:hoeffding-inequality}
\end{align}
for $t\ge0$.
\end{lemma}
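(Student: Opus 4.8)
The plan is to use the exponential Chernoff bounding technique together with a bound on the moment generating function of a bounded, centered random variable. Write $\rv{X}_i := f(\rv{A}_i)$, so that the $\rv{X}_i$ are \ac{iid} with $0\le \rv{X}_i\le 1$ and mean $\mu := \E{P}{f(\rv{A})}$, and observe that the event $\{\rv{S}_n - \mu \ge t\}$ coincides with $\{\sum_{i=1}^n (\rv{X}_i - \mu) \ge nt\}$. For any $s>0$, exponentiating and applying Markov's inequality gives
\begin{align}
  \Pr\!\left[ \rv{S}_n - \mu \ge t \right]
  \le e^{-snt}\, \E{P}{e^{s \sum_{i=1}^n (\rv{X}_i - \mu)}}
  = e^{-snt} \left( \E{P}{e^{s(\rv{X}_1 - \mu)}} \right)^{n},
\end{align}
where the last equality uses independence to factor the expectation into $n$ identical single-letter terms.

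The key step is to bound the single-letter moment generating function $\E{P}{e^{s(\rv{X}_1 - \mu)}}$. To this end I would establish \emph{Hoeffding's lemma}: for any random variable $\rv{Y}$ with $E[\rv{Y}]=0$ and $\rv{Y}\in[a,b]$ almost surely, one has $E[e^{s\rv{Y}}]\le e^{s^2(b-a)^2/8}$. The standard argument introduces the cumulant generating function $\psi(s):=\log E[e^{s\rv{Y}}]$ and computes $\psi(0)=0$ and $\psi'(0)=E[\rv{Y}]=0$; the second derivative $\psi''(s)$ equals the variance of $\rv{Y}$ under the exponentially tilted measure, which is at most $(b-a)^2/4$ because any random variable supported in $[a,b]$ has variance at most $\bigl((b-a)/2\bigr)^2$. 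A second-order Taylor expansion with remainder then yields $\psi(s)\le s^2(b-a)^2/8$. Applying this with $\rv{Y}=\rv{X}_1-\mu\in[-\mu,\,1-\mu]$, so that $b-a=1$, gives $\E{P}{e^{s(\rv{X}_1-\mu)}}\le e^{s^2/8}$.

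Substituting this bound into the Chernoff estimate yields
\begin{align}
  \Pr\!\left[ \rv{S}_n - \mu \ge t \right] \le e^{-snt + n s^2/8}
\end{align}
for every $s>0$. The final step is a one-dimensional optimization: the exponent $-snt+ns^2/8$ is minimized at $s=4t$, and substituting this value gives exponent $-2nt^2$, which is exactly \eqref{eq:hoeffding-inequality}. The boundary case $t=0$ is trivial since the right-hand side equals $1$.

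I expect the main obstacle to be the proof of Hoeffding's lemma, and in particular the variance bound for the tilted measure together with the justification for differentiating under the expectation sign; by contrast, the Chernoff step, the factorization by independence, and the optimization over $s$ are routine.
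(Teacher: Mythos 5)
Your proposal is correct. The paper does not prove this lemma at all: it is quoted directly from Hoeffding's 1963 paper (Theorem~1 there), so there is no internal proof to compare against. What you have written is the classical argument behind that citation: Chernoff's exponential Markov bound, factorization of the moment generating function by independence, a single-letter MGF bound, and optimization over the tilting parameter $s$, with the correct arithmetic ($s=4t$ giving exponent $-2nt^2$).

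One small comparative remark: your single-letter step uses the modern cumulant-generating-function version of Hoeffding's lemma (bounding $\psi''(s)$ by the variance of the tilted measure, which is at most $(b-a)^2/4$), whereas Hoeffding's original paper bounds $\E{P}{e^{s(\rv{X}_1-\mu)}}$ by convexity of $x \mapsto e^{sx}$, i.e., $e^{sx} \le \frac{b-x}{b-a}e^{sa} + \frac{x-a}{b-a}e^{sb}$ for $x\in[a,b]$, followed by a calculus bound on the resulting function of $s$. Both routes yield the same $e^{s^2(b-a)^2/8}$ bound and the same final inequality. Also, the obstacle you flag—justifying differentiation under the expectation—is vacuous in the paper's setting: the alphabet $\mset{A}$ is finite and discrete, so all expectations are finite sums and $\psi$ is smooth by inspection.
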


The next lemma gives basic bounds for typical strings.

\begin{lemma}[{\cite[Ch.~2.4]{ElGamal2011}}]
\label{lemma:typical-sets}
Consider the \ac{pmf} $P$ with alphabet $\mset{A}$. Let $a^n\in\mset{T}_\epsilon(P)$ and let $p_{\rm min}=\min_{a\in\supp(\mset{A})} P(a)$. We have
\begin{align}
  & 1 - \delta_{\epsilon} \le P\left( {\mset{T}_\epsilon(P)} \right) \le 1
  \label{eq:typical-sets-1} \\
  & 2^{ - n \entrp{P} (1+\epsilon)} \le P^n(a^n) \le 2^{-n \entrp{P} (1-\epsilon)}
  \label{eq:typical-sets-2} \\
  & (1 - \delta_{\epsilon}) 2^{n \entrp{P} (1-\epsilon)} \le \left| \mset{T}_\epsilon(P) \right| \le 2^{n \entrp{P} (1+\epsilon)}
  \label{eq:typical-sets-3}
\end{align}
where $\delta_{\epsilon} = 2 |\mset{A}| \exp\left(-2n \, p_{\rm min}^2\, \epsilon^2\right)$.
\end{lemma}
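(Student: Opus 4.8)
The plan is to prove the three displayed inequalities in turn, since the cardinality bounds \eqref{eq:typical-sets-3} will follow by combining the probability bound \eqref{eq:typical-sets-1} with the per-string bound \eqref{eq:typical-sets-2}. The whole argument is the standard \ac{AEP} derivation, so the work is in bookkeeping rather than in any single clever idea.

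First I would establish \eqref{eq:typical-sets-2}. Since the letters are \ac{iid} we have $P^n(a^n)=\prod_i P(i)^{n_i}$, so the self-information factorizes as
\begin{align}
  -\log_2 P^n(a^n) = n\sum_{i\in\supp(P)} \pi_{a^n}(i)\,\bigl(-\log_2 P(i)\bigr).
\end{align}
Each factor $-\log_2 P(i)$ is nonnegative because $P(i)\le 1$, so I can insert the termwise typicality bounds $(1-\epsilon)P(i)\le\pi_{a^n}(i)\le(1+\epsilon)P(i)$ from \eqref{eq:typical} directly into the sum. This sandwiches $\sum_i \pi_{a^n}(i)(-\log_2 P(i))$ between $(1-\epsilon)\entrp{P}$ and $(1+\epsilon)\entrp{P}$, and multiplying by $n$ and exponentiating yields exactly \eqref{eq:typical-sets-2}.

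Next I would prove the lower bound in \eqref{eq:typical-sets-1} (the upper bound is immediate). Writing $\pi_{\rv{A}^n}(i)=\frac1n\sum_{j=1}^n \mathbf{1}[\rv{A}_j=i]$ as the empirical mean of the bounded function $f(a)=\mathbf{1}[a=i]$ with $\E{P}{f(\rv{A})}=P(i)$, I would apply Hoeffding's inequality (Lemma~\ref{lemma:hoeffding}) in both tails, the left tail coming from applying the lemma to $1-f$, with $t=\epsilon P(i)$, to get
\begin{align}
  \Pr\bigl[\,\lvert\pi_{\rv{A}^n}(i)-P(i)\rvert\ge\epsilon P(i)\,\bigr] \le 2\,e^{-2n\epsilon^2 P(i)^2} \le 2\,e^{-2n\epsilon^2 p_{\rm min}^2}.
\end{align}
A union bound over the at most $|\mset{A}|$ letters of the support then bounds the probability of the atypical event by $\delta_\epsilon$, giving $P(\mset{T}_\epsilon(P))\ge 1-\delta_\epsilon$.

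Finally, \eqref{eq:typical-sets-3} follows by bounding $P(\mset{T}_\epsilon(P))=\sum_{a^n\in\mset{T}_\epsilon(P)}P^n(a^n)$ from below and above using \eqref{eq:typical-sets-2}: replacing each summand by the larger bound and combining with $P(\mset{T}_\epsilon(P))\ge 1-\delta_\epsilon$ gives the cardinality lower bound, while replacing each summand by the smaller bound and combining with $P(\mset{T}_\epsilon(P))\le 1$ gives the upper bound. I expect the only genuinely delicate step to be the concentration argument for \eqref{eq:typical-sets-1}: Lemma~\ref{lemma:hoeffding} is stated one-sided, so it must be invoked twice, and the union bound must run over the support rather than the full alphabet so that every $P(i)$ in the exponent can be replaced by $p_{\rm min}>0$. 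The remaining manipulations are routine.
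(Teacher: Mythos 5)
Your proposal is correct and takes essentially the same route as the paper's own (very terse) proof: Hoeffding's inequality (Lemma~\ref{lemma:hoeffding}) applied to indicator functions with a union bound over the support for \eqref{eq:typical-sets-1}, the definition of typicality for \eqref{eq:typical-sets-2}, and the combination of the two for \eqref{eq:typical-sets-3}. You have merely filled in the standard details (the two-sided application of the one-sided Hoeffding bound and the restriction of the union bound to $\supp(P)$) that the paper leaves implicit.
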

\begin{proof}
The left-hand side of \eqref{eq:typical-sets-1} follows by Lemma~\ref{lemma:hoeffding} and the union bound, the bounds \eqref{eq:typical-sets-2} by the definition of typical strings, and the bounds \eqref{eq:typical-sets-3} by \eqref{eq:typical-sets-1} and \eqref{eq:typical-sets-2}.
\end{proof}

\subsection{Bounds for Binomial and Multinomial Coefficients}
We state several results for binomial coefficients.

\begin{lemma}[see {\cite[p.~166]{graham-etal-89}}]
\label{lemma:binomial-identity}
For a non-negative integer $k$ and a positive integer $n$ with $k\le n$ we have
\begin{align}
    \sum_{i=0}^k \binom{n}{i} \left(\frac{n}{2} - i \right) = \frac{k+1}{2} \binom{n}{k+1}
    = \frac{n-k}{2} \binom{n}{k}.
    \label{eq:binomial-identity}
\end{align}
\end{lemma}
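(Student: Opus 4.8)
The plan is to handle the two equalities separately, dispatching the second one by a direct factorial computation and then proving the sum formula by recognizing the summand as a telescoping difference.

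For the second equality $\frac{k+1}{2}\binom{n}{k+1} = \frac{n-k}{2}\binom{n}{k}$, I would expand both sides in factorials. Writing $\binom{n}{k+1} = \frac{n!}{(k+1)!\,(n-k-1)!}$ gives $(k+1)\binom{n}{k+1} = \frac{n!}{k!\,(n-k-1)!}$, while $\binom{n}{k}=\frac{n!}{k!\,(n-k)!}$ gives $(n-k)\binom{n}{k}=\frac{n!}{k!\,(n-k-1)!}$; the two are manifestly equal, and dividing by $2$ yields the claim. This absorption identity $(k+1)\binom{n}{k+1}=(n-k)\binom{n}{k}$ is the only structural fact I will need.

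The key step for the sum is to note that the same absorption identity turns each summand into a difference of consecutive terms. Define $t_i := \frac{i}{2}\binom{n}{i}$. Using $(i+1)\binom{n}{i+1}=(n-i)\binom{n}{i}$ we get $t_{i+1} = \frac{i+1}{2}\binom{n}{i+1} = \frac{n-i}{2}\binom{n}{i}$, hence
\begin{align}
  t_{i+1} - t_i
  = \frac{n-i}{2}\binom{n}{i} - \frac{i}{2}\binom{n}{i}
  = \binom{n}{i}\left(\frac{n}{2} - i\right),
\end{align}
which is exactly the $i$-th summand. Therefore the sum telescopes:
\begin{align}
  \sum_{i=0}^k \binom{n}{i}\left(\frac{n}{2} - i\right)
  = \sum_{i=0}^k \left( t_{i+1} - t_i \right)
  = t_{k+1} - t_0
  = \frac{k+1}{2}\binom{n}{k+1},
\end{align}
since $t_0 = 0$. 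This establishes the first equality, and combined with the factorial computation it delivers both closed forms at once.

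I expect no genuine obstacle here; the proof is a one-line telescope once the right antidifference $t_i$ is spotted, and the only points to verify are the absorption identity and the boundary term $t_0=0$. As an alternative that avoids guessing the telescope, one could induct on $k$: the base case $k=0$ gives $\frac{n}{2}$ on both sides, and the inductive step adds $\binom{n}{k}(\frac{n}{2}-k)$ to $\frac{k}{2}\binom{n}{k}$, collapsing to $\frac{n-k}{2}\binom{n}{k}$, which equals $\frac{k+1}{2}\binom{n}{k+1}$ by the same absorption identity.
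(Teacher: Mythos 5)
Your proof is correct. Note that the paper itself offers no proof of Lemma~\ref{lemma:binomial-identity}: it delegates the identity entirely to the citation \cite[p.~166]{graham-etal-89}, so your argument is a self-contained substitute rather than a parallel to an in-paper derivation. Your two ingredients are exactly right: the absorption identity $(i+1)\binom{n}{i+1}=(n-i)\binom{n}{i}$ follows from the factorial expansion, and it makes $t_i=\frac{i}{2}\binom{n}{i}$ an antidifference of the summand, so the sum telescopes to $t_{k+1}-t_0=\frac{k+1}{2}\binom{n}{k+1}$; this is essentially the same mechanism as in the cited source, and it buys the reader a verification the paper omits. The only pedantic caveat is the boundary case $k=n$, which the hypothesis $k\le n$ permits: there the factorial expansion of $\binom{n}{k+1}$ involves $(n-k-1)!=(-1)!$ and is not defined, so you should instead invoke the convention $\binom{n}{n+1}=0$, under which the absorption identity reads $0=0$ and both closed forms in \eqref{eq:binomial-identity} vanish, consistent with the full sum being $\frac{n}{2}2^n-n\,2^{n-1}=0$. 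Your inductive alternative is likewise sound and is the same telescope read in reverse.
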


\begin{lemma} \label{lemma:binomial-bound1}
For $0<p=k/n<1$ we have~\cite[p.~530]{gallager1968information}
\begin{align}
    \frac{2^{nh(p)}}{\sqrt{8 n p (1-p)}}
    \le \binom{n}{np} \le \frac{2^{nh(p)}}{\sqrt{2 \pi n p (1-p)}}.
    \label{eq:binomial-bound1}
\end{align}
\end{lemma}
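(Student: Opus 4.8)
The plan is to derive both bounds from Stirling's approximation in Robbins' form: for every positive integer $m$,
\[
    m! = \sqrt{2\pi m}\left(\frac{m}{e}\right)^m e^{\lambda_m}, \qquad \tfrac{1}{12m+1} < \lambda_m < \tfrac{1}{12m}.
\]
Writing $k=np$ and $n-k=n(1-p)$ and substituting this expression for each of $n!$, $k!$, $(n-k)!$ in $\binom{n}{np}=n!/[k!\,(n-k)!]$, the powers of $e$ cancel and the remaining power terms combine as $n^n/(k^k (n-k)^{n-k}) = p^{-np}(1-p)^{-n(1-p)} = 2^{nh(p)}$, while the square-root factors combine as $\sqrt{2\pi n}/\sqrt{(2\pi k)(2\pi(n-k))} = 1/\sqrt{2\pi np(1-p)}$. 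Hence
\[
    \binom{n}{np} = \frac{2^{nh(p)}}{\sqrt{2\pi np(1-p)}}\, e^{g}, \qquad g := \lambda_n - \lambda_{np} - \lambda_{n(1-p)}.
\]
The key observation is that this prefactor is exactly the claimed upper bound, so the entire problem reduces to sandwiching the single scalar $e^{g}$ between $\sqrt{2\pi/8}=\sqrt{\pi}/2$ and $1$.

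For the upper bound I would show $g\le 0$. Using $\lambda_n<\tfrac1{12n}$ together with $\lambda_{np}>\tfrac1{12k+1}$ and $\lambda_{n(1-p)}>\tfrac1{12(n-k)+1}$, it suffices that $\tfrac1{12n}\le\tfrac1{12k+1}+\tfrac1{12(n-k)+1}$. Setting $a=12k+1$ and $b=12(n-k)+1$ so that $a+b=12n+2$, this reads $\tfrac1a+\tfrac1b=\tfrac{a+b}{ab}\ge\tfrac1{12n}$, i.e. $ab\le 12n(12n+2)$; since AM--GM gives $ab\le(\tfrac{a+b}2)^2=(6n+1)^2$ and $(6n+1)^2\le 12n(12n+2)$ for all $n\ge1$, the upper bound follows.

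For the lower bound I would show $e^{g}\ge\sqrt\pi/2$, i.e. $g\ge\tfrac12\ln\tfrac{\pi}4$, which is what converts the constant $2\pi$ into $8$. Bounding $\lambda_n>\tfrac1{12n+1}$ and $\lambda_{np},\lambda_{n(1-p)}<\tfrac1{12k},\tfrac1{12(n-k)}$ reduces the task to the elementary inequality
\[
    \frac{1}{12n+1}-\frac{1}{12np}-\frac{1}{12n(1-p)}\ \ge\ \frac12\ln\frac{\pi}{4}\approx-0.1208,
\]
and a short check shows it holds for every pair $(np,n(1-p))$ of positive integers except $(1,1)$. I expect this edge case to be the main obstacle: the crude Robbins estimates are just barely too weak precisely at $n=2$, $p=1/2$, which is exactly where the constant $8$ is tight. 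That single case is then settled by hand, where $\binom21=2$ meets the bound with equality.

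A cleaner alternative that avoids the case split altogether is to observe that $g=g(k,\,n-k)$ is monotonically increasing in each of its arguments. This follows because the Robbins error term extends to the Binet function $\mu$, which is convex and decreasing; consequently $\partial_k\,g=\mu'(k+(n-k))-\mu'(k)\ge0$ by convexity, so $e^{g}$ attains its minimum at $k=n-k=1$. The value there is $\lambda_2-2\lambda_1$, for which $e^{\lambda_2-2\lambda_1}=\sqrt\pi/2=\sqrt{2\pi/8}$, yielding the sharp lower-bound constant directly, at the cost of invoking complete monotonicity of the Binet function.
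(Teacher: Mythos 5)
Your proof is correct, and it is worth noting that the paper itself contains no proof of this lemma at all: it is stated with a citation to Gallager's textbook \cite[p.~530]{gallager1968information}, where the bounds are likewise obtained from Stirling-type estimates. Your derivation therefore supplies the missing argument rather than diverging from one. The structure is sound: Robbins' form of Stirling collapses the binomial coefficient exactly to $\frac{2^{nh(p)}}{\sqrt{2\pi np(1-p)}}\,e^{g}$ with $g=\lambda_n-\lambda_{np}-\lambda_{n(1-p)}$; your AM--GM computation correctly gives $g\le 0$ (since $(6n+1)^2\le 144n^2+24n$ for $n\ge1$), which is the upper bound; and your treatment of the lower bound correctly isolates the unique problematic pair $(np,n(1-p))=(1,1)$, where the crude Robbins estimate gives $\tfrac1{25}-\tfrac1{12}-\tfrac1{12}\approx-0.1267<\tfrac12\ln(\pi/4)\approx-0.1208$. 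For all other pairs, the quantity $\tfrac{1}{12(k+m)+1}-\tfrac{1}{12k}-\tfrac{1}{12m}$ is increasing in each of $k$ and $m$, so the worst admissible case is $(1,2)$, which evaluates to $\approx-0.098$ and passes; the excluded case $n=2$, $p=1/2$ is precisely where $\binom{2}{1}=2$ meets the lower bound with equality, so checking it by hand is legitimate and unavoidable with these crude bounds. Your alternative argument via the Binet function is also correct and is the cleaner way to see why the constant $8$ is sharp: since $\mu'$ is increasing, $g(x,y)=\mu(x+y)-\mu(x)-\mu(y)$ is minimized over $x,y\ge1$ at $x=y=1$, where $e^{\lambda_2-2\lambda_1}=\sqrt{\pi}/2=\sqrt{2\pi/8}$ converts the prefactor $2\pi$ into exactly $8$. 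Either route is a complete and self-contained proof of the lemma as stated.
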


\begin{lemma} \label{lemma:binomial-bound2}
For $0\le p=k/n<1/2$ we have~\cite[Eq.~(25)]{bahadur1960} (see also~\cite[Lemma~3]{schulte2017divergence})
\begin{align}
    \alpha \beta \binom{n}{np} \le
    \sum_{i=0}^{np} \binom{n}{i} \le \alpha \binom{n}{np}
    \label{eq:binomial-bound2}
\end{align}
where
\begin{align}
  \alpha = \frac{1-p+1/n}{1-2p+1/n}, \quad \beta = \frac{(1-2p)^2}{(1-2p)^2+1/n}.  
\end{align}
\end{lemma}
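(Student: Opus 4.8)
The plan is to estimate the truncated sum $S:=\sum_{i=0}^{np}\binom{n}{i}$ by comparing neighbouring binomial coefficients. The basic tool is the ratio
\begin{align}
\frac{\binom{n}{i-1}}{\binom{n}{i}}=\frac{i}{n-i+1},\nonumber
\end{align}
which is strictly increasing in $i$, so on the range $1\le i\le np<n/2$ it is maximised at $i=np$, giving $\binom{n}{i-1}/\binom{n}{i}\le r:=np/(n-np+1)$; here $r<1$ because $np<n/2$ forces $np<n-np+1$. Writing the sum backwards as $S=\sum_{j=0}^{np}\binom{n}{np-j}$ and telescoping this ratio bound yields $\binom{n}{np-j}\le\binom{n}{np}\,r^{j}$ for every $j$. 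Both inequalities of the lemma will be obtained by controlling how far the geometric majorant $\binom{n}{np}r^{j}$ is from the true terms.

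For the upper bound I would dominate $S$ by the full geometric series,
\begin{align}
S\le\binom{n}{np}\sum_{j=0}^{\infty}r^{j}=\frac{\binom{n}{np}}{1-r}.\nonumber
\end{align}
A short computation gives $1/(1-r)=\tfrac{n-np+1}{n-2np+1}=\tfrac{1-p+1/n}{1-2p+1/n}=\alpha$, which is exactly the claimed right-hand inequality.

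The lower bound is the delicate part, because the downward ratios $\binom{n}{np-j}/\binom{n}{np-j+1}$ keep decreasing all the way to $1/n$, so the crude geometric comparison summed to infinity overshoots badly as $p\to1/2$ and gives nothing useful. Instead I would use exact moment identities for the truncated binomial. Lemma~\ref{lemma:binomial-identity} already supplies the first moment, $\sum_{i=0}^{np}(n-2i)\binom{n}{i}=(n-np)\binom{n}{np}=:W$. Applying the same telescoping identity $(n-2i)\binom{n}{i}=n\bigl[\binom{n-1}{i}-\binom{n-1}{i-1}\bigr]$ a second time and collapsing the resulting sums with Pascal's rule, I would derive the companion second-moment identity
\begin{align}
V:=\sum_{i=0}^{np}(n-2i)^{2}\binom{n}{i}=nS+(n-np)(n-2np-1)\binom{n}{np}.\nonumber
\end{align}
Cauchy--Schwarz applied to the entries $\sqrt{\binom{n}{i}}$ and $(n-2i)\sqrt{\binom{n}{i}}$ gives $W^{2}\le S\,V$, i.e.
\begin{align}
\Bigl[(n-np)\binom{n}{np}\Bigr]^{2}\le S\Bigl(nS+(n-np)(n-2np-1)\binom{n}{np}\Bigr),\nonumber
\end{align}
a quadratic inequality in $S$ whose positive root furnishes a closed-form lower bound on $S$.

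The main obstacle will be the final algebraic step. The positive root carries a radical, $\sqrt{(n-2np-1)^{2}+4n}$, whereas the target $\alpha\beta\binom{n}{np}$ is rational, so reducing one to the other is where the specific $1/n$ corrections in $\alpha$ and $\beta$ must be tracked. Concretely, the Cauchy--Schwarz bound dominates the claimed one exactly when
\begin{align}
\sqrt{(n-2np-1)^{2}+4n}\;\le\;\frac{2(n-np)}{\alpha\beta}-(n-2np-1),\nonumber
\end{align}
a radical-versus-rational inequality that can be cleared by squaring (both sides being positive) into a polynomial inequality in $n$ and $p$. Verifying this last inequality over the whole range $0\le p<1/2$, and thereby pinning down precisely the factor $\beta=\tfrac{(1-2p)^{2}}{(1-2p)^{2}+1/n}$ alongside the same $\alpha$ that appears in the upper bound, is the step needing the most care; it is exactly the point at which the refined estimate attributed to Bahadur enters, and an alternative would be to bypass the moment identities entirely and bound the shortfall of the decreasing ratios directly, as in~\cite{bahadur1960}.
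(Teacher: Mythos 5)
The paper never proves Lemma~\ref{lemma:binomial-bound2}; it is cited from \cite[Eq.~(25)]{bahadur1960} and \cite[Lemma~3]{schulte2017divergence}, so your argument has to stand as a self-contained proof, and in essence it does. Your upper bound is exactly the classical argument underlying those references: the ratio $\binom{n}{i-1}/\binom{n}{i}=i/(n-i+1)$ is increasing, telescoping gives the geometric majorant with ratio $r=k/(n-k+1)$, and $1/(1-r)=\alpha$. Your lower bound, however, is a genuinely different route from the term-by-term refinement of the geometric comparison used in the cited works: you combine the first-moment identity of Lemma~\ref{lemma:binomial-identity} with a second-moment identity and Cauchy--Schwarz. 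I verified your intermediate claims. Writing $k=np$ and $m=n-2k\ge1$, your identity $V=nS+(n-k)(m-1)\binom{n}{k}$ is correct (Abel summation of $(n-2i)^2\binom{n}{i}$ using $(n-2i)\binom{n}{i}=n[\binom{n-1}{i}-\binom{n-1}{i-1}]$, then Pascal's rule to express $\sum_{i\le k-1}\binom{n-1}{i}$ through $S$); Cauchy--Schwarz gives $[(n-k)\binom{n}{k}]^2\le SV$; the positive root of the resulting quadratic is $S\ge\frac{2(n-k)\binom{n}{k}}{\sqrt{(m-1)^2+4n}+(m-1)}$; and your rationalized form of the final condition is the correct equivalent statement.

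The step you flagged as open does hold on the whole range $0\le p<1/2$, and it closes with one elementary estimate, so no appeal to \cite{bahadur1960} is needed. Since $m\le n$,
\begin{align}
\sqrt{(m-1)^2+4n}\;\le\;(m+1)+\frac{2(n-m)}{m+1},
\end{align}
because the square of the right-hand side exceeds the square of the left-hand side by exactly $4(n-m)^2/(m+1)^2\ge0$. Hence the left side of your condition satisfies $\sqrt{(m-1)^2+4n}+(m-1)\le 2m+\frac{2(n-m)}{m+1}=\frac{2(m^2+n)}{m+1}$, and since $\alpha\beta=\frac{(n-k+1)m^2}{(m+1)(m^2+n)}$, the radical-versus-rational inequality reduces to $(n-k)(m+1)^2\ge(n-k+1)m^2$, i.e.\ $(n-k)(2m+1)\ge m^2$. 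Substituting $n-k=(n+m)/2$ this becomes $2mn+n+m\ge0$, which is trivially true. So the Cauchy--Schwarz root always dominates $\alpha\beta\binom{n}{k}$ (your route in fact yields a slightly stronger lower bound than the stated one), and the lemma follows. In short: correct approach, different and self-contained lower-bound technique, with one fillable algebraic step left implicit.
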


Multinomial coefficients are written as
\begin{align}
    \binom{n}{n_1\ldots n_{|\mset{A}|}} = \frac{n!}{\prod_{i=1}^{|\mset{A}|} n_i!}
    \label{eq:multinomial}
\end{align}
with the integers $0 \le n_i\le n$ and $n=\sum_{i=1}^{|\mset{A}|}n_i$. An analog of Lemma~\ref{lemma:binomial-bound1} is as follows.

\begin{lemma} \label{lemma:multinomial-bound1}
For the \ac{pmf} $P=[p_1,\dots,p_{|\mset{A}|}]$ and $0<p_i=n_i/n<1$ for all $i$, we have
\begin{align}
    & \frac{2^{n\entrp{P}}}
    {\left[(8n)^{|\mset{A}|-1} \prod_{i=1}^{|\mset{A}|}p_i \right]^{1/2}} 
    \le \binom{n}{np_1\ldots np_{|\mset{A}|}} \nonumber \\
    & \qquad \le \frac{2^{n\entrp{P}}}{\left[ (2\pi n)^{|\mset{A}|-1} \prod_{i=1}^{|\mset{A}|}p_i \right]^{1/2}}.
    \label{eq:multinomial-bound1}
\end{align}
\end{lemma}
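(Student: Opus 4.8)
The plan is to prove \eqref{eq:multinomial-bound1} by induction on the alphabet size $|\mset{A}|$, peeling off one symbol at each step and reducing to the binomial bound of Lemma~\ref{lemma:binomial-bound1}. The base case $|\mset{A}|=2$ is exactly Lemma~\ref{lemma:binomial-bound1}: then $n_2=n-n_1$, $p_2=1-p_1$, $\entrp{P}=h(p_1)$, and $\binom{n}{np_1\,np_2}=\binom{n}{np_1}$, so \eqref{eq:multinomial-bound1} collapses to \eqref{eq:binomial-bound1}. (Equivalently, one may iterate this peeling in one shot via the telescoping product $\binom{n}{n_1\ldots n_{|\mset{A}|}}=\prod_{j}\binom{N_j}{n_j}$ with $N_j=\sum_{i\ge j}n_i$; the induction is just a cleaner way to organize the same bookkeeping.)

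For the inductive step I would use the factorization
\begin{align}
  \binom{n}{n_1\ldots n_{|\mset{A}|}} = \binom{n}{n_1}\binom{n'}{n_2\ldots n_{|\mset{A}|}},
\end{align}
where $n'=n-n_1=\sum_{i\ge2}n_i$. The second factor is a multinomial coefficient on the reduced alphabet $\{2,\ldots,|\mset{A}|\}$ of size $|\mset{A}|-1$, associated with the \ac{pmf} $P'=[p'_2,\ldots,p'_{|\mset{A}|}]$ where $p'_i=n_i/n'=p_i/(1-p_1)$. Since every $p_i>0$ forces $n_i\ge1$, and the reduced alphabet still has at least two symbols, we get $n_i<n'$, so each $p'_i\in(0,1)$ and the inductive hypothesis applies to the second factor. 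Applying Lemma~\ref{lemma:binomial-bound1} to $\binom{n}{n_1}$ (with $p_1\in(0,1)$ from the hypothesis) and the inductive hypothesis to $\binom{n'}{n_2\ldots n_{|\mset{A}|}}$ then bounds the product from above and below; because all quantities are positive, the product of the two upper (resp.\ lower) bounds is an upper (resp.\ lower) bound for the product.

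It remains to show that the two resulting products collapse to the single expressions in \eqref{eq:multinomial-bound1}. For the exponents I would invoke the entropy grouping identity
\begin{align}
  \entrp{P}=h(p_1)+(1-p_1)\,\entrp{P'},
\end{align}
which, multiplied by $n$ and using $n(1-p_1)=n'$, gives $n\entrp{P}=nh(p_1)+n'\entrp{P'}$; hence the exponential factors combine exactly to $2^{n\entrp{P}}$. For the polynomial prefactors, both bounds in each lemma are reciprocals of square roots, so it suffices to show that the quantities under the square roots multiply correctly. Substituting $1-p_1=n'/n$ and $p'_i=n_i/n'$ and simplifying, the relevant identity is
\begin{align}
  \bigl[cn\,p_1(1-p_1)\bigr]\cdot\Bigl[(cn')^{|\mset{A}|-2}\prod_{i\ge2}p'_i\Bigr]
  = (cn)^{|\mset{A}|-1}\prod_{i=1}^{|\mset{A}|}p_i
\end{align}
for the two constants $c\in\{8,2\pi\}$; here the $n'$ powers telescope against the denominators of the $p'_i$, leaving the clean power $(cn)^{|\mset{A}|-1}$ and the full product $\prod_i p_i$.

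The main obstacle is purely bookkeeping: one must verify that the constants $8$ and $2\pi$ accumulate to the correct power $|\mset{A}|-1$ and that the restricted probabilities $p'_i=n_i/n'$ reassemble into $\prod_i p_i$ once the $n'$ powers cancel, while confirming that the hypotheses $p'_i\in(0,1)$ propagate so that Lemma~\ref{lemma:binomial-bound1} and the inductive hypothesis remain applicable. The grouping identity and the prefactor simplification are the crux; once these are checked, the upper bound follows by using the $2\pi$ versions of Lemma~\ref{lemma:binomial-bound1} and the inductive hypothesis throughout, and the lower bound by using the $8$ versions.
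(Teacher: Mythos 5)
Your proposal is correct and is essentially the paper's own argument: the paper factors the multinomial coefficient as the telescoping product $\binom{n}{np_1\ldots np_{|\mset{A}|}}=\prod_{i=1}^{|\mset{A}|-1}\binom{n(1-\sum_{j=1}^{i-1}p_j)}{np_i}$ and applies Lemma~\ref{lemma:binomial-bound1} to each factor, which is exactly your induction unrolled (as you yourself note). Your write-up just supplies the bookkeeping — the entropy grouping identity and the prefactor cancellation — that the paper leaves implicit, and both check out.
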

\begin{proof}
Use the binomial expansion
\begin{align}
    \binom{n}{np_1\ldots np_{|\mset{A}|}} =
    \prod_{i=1}^{|\mset{A}|-1}
    \binom{n\left( 1-\sum_{j=1}^{i-1} p_j \right)}{n p_i}
    \label{eq:multinomial2}
\end{align}
and apply the bounds \eqref{eq:binomial-bound1} to \eqref{eq:multinomial2}.
\end{proof}

\section{Model and Requirements}
\label{sec:model}

Consider the model depicted in Fig.~\ref{fig:modelDMpluResolution}.
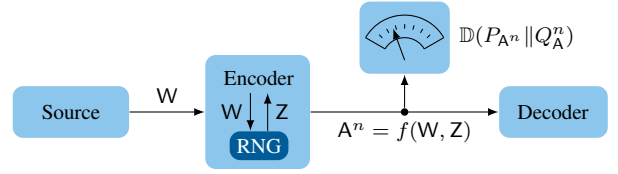
\begin{figure}[t]
    \centering


\begin{tikzpicture}[font=\footnotesize]
\node[rounded corners,fill=TUMAkzBlue1](encoder) at (0,0){
\begin{tikzpicture}
\node(enc) at (0,0){Encoder};
\path(enc) --++(0,-0.9) node[rounded corners,fill=TUMSecBlue1,text=white] (rng){RNG};
\draw[-latex] (rng.60) -- node[right]{$\rv{Z}$}(enc.south -| rng.60);
\draw[-latex]  (enc.south -| rng.120) -- node[left]{$\rv{W}$}(rng.120);
\end{tikzpicture}
};
\path (encoder.west) --++ (-1,0)node[rounded corners,fill=TUMAkzBlue1,anchor=east,text width=30,text centered, inner sep=7](data){Source};
\path (encoder.east) --++ (1.25,0.5)node[anchor=south,inner sep=0,label={0:$\mathbb{D}(P_{\rv{A}^n}\Vert Q_\rv{A}^n)$}](measurement){
	\begin{tikzpicture}[scale=0.6]
	\fill[rounded corners,fill=TUMAkzBlue1] (-1,-0.3) rectangle (1,1.3);
	\draw (30:1) arc (30:150:1) ;
	\draw (30:0.5) arc (30:150:0.5) ;
	\draw  (30:1) -- (30:0.5);
	\draw  (150:1) -- (150:0.5);
	\draw[-latex](0,0) -- (110:0.75);
	\foreach \degr in {45,60,...,135}{
	\draw (\degr:0.7)--(\degr:0.8);}
	\end{tikzpicture}
};
\path (encoder.east) --++ (2.5,0)node[rounded corners,fill=TUMAkzBlue1,anchor=west,inner sep=7](decoder){Decoder};

\draw[-latex] (data)--node[above]{$\rv{W}$}(encoder);
\draw[-latex] (encoder)--node[below]{$\rv{A}^n=f(\rv{W},\rv{Z})$}  node(wiretap)[fill,circle,inner sep=1pt]{} (decoder);
\draw[-latex] (wiretap) -- (measurement);
\end{tikzpicture}
    \caption{Transmission experiment.} 
    \label{fig:modelDMpluResolution}
\end{figure}
The source generates a message $\rv{W}$ with \ac{pmf} $\pmf{W}=U_K$.
The information rate is
\begin{equation}
    \Rinfo = \frac{\entrp{\pmf{W}}}{n} = \frac{\log_2 \numW }{\bl}.
\end{equation}
To permit randomization, the encoder is given a \ac{RNG} that generates an index $\rv{Z}(w)$ with \ac{pmf} $P_{\rv{Z}|\rv{W}}(\cdot|w)$ given $\rv{W}=w$. For example, one may choose $\rv{Z}(w)=\rv{A}^n(w)$. We consider two types of \acp{RNG}, namely idealized \acp{RNG} and \acp{RNG} based on \acp{RC}. One can measure the \ac{RNG} rates in two ways: with the average conditional entropy $\entrp{P_{\rv{Z}|\rv{W}}}$ and with the number $\Brng$ of \ac{RC} bits. The resulting rates are
\begin{align}
    \Hrng = \frac{\entrp{P_{\rv{Z}|\rv{W}}}}{\bl}, \quad
    \Rrng = \frac{\Brng}{\bl}
\end{align}
and $\Hrng\le\Rrng$ because $\rv{Z}(w)$ is a function of the \ac{RC} bits for all $w$.

The encoder output is $\rv{A}^n=f(\rv{W},\rv{Z})$ for some function $f$. The resolution quality is measured via the \ac{I-divergence}
\begin{align}
\diverg{P_{\rv{A}^n}}{Q^n_{\rv{A}}}
\label{eq:I-div}
\end{align}
for a specified \ac{pmf} $\qmf{A}$. For instance, for the stealth problem a warden knows the target \ac{pmf} $Q_\rv{A}$, the code statistics $P_{\rv{A}^n}$, and the \ac{RNG} statistics $P_{\rv{Z}|\rv{W}}$. Given $a^n$ the warden must decide whether a code word was transmitted or not. One can show~\cite{hou2014effective,hou2017effective} that the best the warden can do is to guess if \eqref{eq:I-div} vanishes with the block length $n$.

The problem requirements are thus as follows: the decoder must recover $\rv{W}$ without error, $\diverg{P_{\rv{A}^n}}{Q^n_{\rv{A}}}$ must vanish with growing $n$, and $\Rrng$ must vanish with growing $n$. The rate $\Rinfo$ is said to be achievable if these requirements are met. We wish to maximize the achievable rate. In fact, these requirements are coupled, as shown below. For example, vanishing $\frac1n \diverg{P_{\rv{A}^n}}{Q^n_{\rv{A}}}$ and $\Rrng$ imply that $\Rinfo\rightarrow\entrp{\qmf{A}}$.

Observe that if $\qmf{A}=U_{|\mset{A}'|}$ for any $\mset{A}'\subseteq\mset{A}$ then one achieves zero \ac{I-divergence} at maximal rate $\Rinfo=\log_2 |\mset{A}'|$ without a \ac{RNG} by choosing $K=|\mset{A}'|^n$ and putting out the $|\mset{A}'|$-ary representation of $w-1$. We hence focus on $\qmf{A}$ that are not uniformly distributed over any subset.

\subsection{Rate Bounds}
We use the bounding approach of~\cite[Sec.~1.3.3]{hou2017effective}. The linearity of cross entropy gives
\begin{align}
  \crossentrp{P_{\rv{A}^n}}{\qmf{A}^n}
  = n \, \crossentrp{\bar{P}_{\rv{A}}}{\qmf{A}}
  \label{eq:crossentrp-identity}
\end{align}
where $\bar{P}_{\rv{A}}=\frac1n\sum_{i=1}^n P_{\rv{A}_i}$ is the average letter \ac{pmf} of $A^n$. Lemma \ref{lemma:divergence-convexity} and
\eqref{eq:X-H-D} further give
\begin{align}
  \entrp{P_{\rv{A}^n}}
  & \le \sum_{i=1}^n H(P_{\rv{A}_i})
  \le n\,\entrp{\bar{P}_{\rv{A}}} \label{eq:entropy-concavity-bound} \\
  \diverg{P_{\rv{A}^n}}{\qmf{A}^n}
  & = n\,\crossentrp{\bar{P}_{\rv{A}}}{\qmf{A}} - \entrp{P_{\rv{A}^n}} \nonumber \\
  & \ge n\,\diverg{\bar{P}_{\rv{A}}}{\qmf{A}} .
  \label{eq:div-convexity-bound}
\end{align}
We have the following lemmas.

\begin{lemma} \label{lemma:rate-upper}
Vanishing $\frac1n \diverg{P_{\rv{A}^n}}{Q^n_{\rv{A}}}$ requires
\begin{align}
   \Rinfo \le \entrp{\qmf{A}}.
   \label{eq:Rinfo-HA}
\end{align}
Moreover, if the decoder can recover both the message $\rv{W}$ and the \ac{RNG} index $\rv{Z}$, then we have the stronger bound
\begin{align}
   \Rinfo + \Hrng \le \entrp{\qmf{A}}.
   \label{eq:Rinfo+Rrng-upper}
\end{align}
\end{lemma}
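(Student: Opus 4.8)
The plan is to combine the single-letterization bounds \eqref{eq:entropy-concavity-bound} and \eqref{eq:div-convexity-bound} with the observation that error-free decoding forces the message entropy to be dominated by the block entropy $\entrp{P_{\rv{A}^n}}$, and then to let the hypothesis ``vanishing $\tfrac1n\diverg{P_{\rv{A}^n}}{\qmf{A}^n}$'' drive the averaged letter \ac{pmf} $\bar{P}_{\rv{A}}$ toward $\qmf{A}$. First I would note that error-free recovery means $\rv{W}$ is a deterministic function of $\rv{A}^n$, and the entropy of a function cannot exceed the entropy of its argument, so $\entrp{\pmf{W}}\le\entrp{P_{\rv{A}^n}}$. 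Since $\pmf{W}=\unif{\numW}$ gives $\entrp{\pmf{W}}=\log_2\numW=n\Rinfo$, dividing by $n$ and applying \eqref{eq:entropy-concavity-bound} yields the finite-$n$ bound $\Rinfo\le\tfrac1n\entrp{P_{\rv{A}^n}}\le\entrp{\bar{P}_{\rv{A}}}$, which as yet makes no reference to the target $\qmf{A}$.

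The remaining task for \eqref{eq:Rinfo-HA} is to show $\entrp{\bar{P}_{\rv{A}}}$ cannot asymptotically exceed $\entrp{\qmf{A}}$. Here I would invoke \eqref{eq:div-convexity-bound}, which gives $\diverg{\bar{P}_{\rv{A}}}{\qmf{A}}\le\tfrac1n\diverg{P_{\rv{A}^n}}{\qmf{A}^n}$; the hypothesis that the right side vanishes therefore forces $\diverg{\bar{P}_{\rv{A}}}{\qmf{A}}\to0$. The left Pinsker inequality in Lemma~\ref{lemma:pinsker} then makes $\done{\bar{P}_{\rv{A}}}{\qmf{A}}\to0$, and the entropy-continuity estimate of Lemma~\ref{lemma:entropy-inequality} makes $\entrp{\bar{P}_{\rv{A}}}\to\entrp{\qmf{A}}$. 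Chaining with the finite-$n$ bound of the previous paragraph delivers $\limsup_n\Rinfo\le\entrp{\qmf{A}}$, i.e.\ \eqref{eq:Rinfo-HA}.

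For the stronger bound \eqref{eq:Rinfo+Rrng-upper} I would repeat the argument with the pair $(\rv{W},\rv{Z})$ in place of $\rv{W}$. If the decoder recovers both, then $(\rv{W},\rv{Z})$ is a deterministic function of $\rv{A}^n$, so the chain rule and the same entropy-of-a-function inequality give $\entrp{\pmf{W}}+\entrp{P_{\rv{Z}|\rv{W}}}\le\entrp{P_{\rv{A}^n}}$. Dividing by $n$, using $\entrp{P_{\rv{Z}|\rv{W}}}=n\Hrng$, and applying \eqref{eq:entropy-concavity-bound} yields $\Rinfo+\Hrng\le\entrp{\bar{P}_{\rv{A}}}$, after which the identical continuity argument sends the right side to $\entrp{\qmf{A}}$.

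I expect the only genuinely delicate point to be the passage from a finite-$n$ inequality to the limiting statement. The bound $\Rinfo\le\entrp{\bar{P}_{\rv{A}}}$ is exact for each $n$, but the target $\qmf{A}$ enters only through the vanishing of the gap $\entrp{\bar{P}_{\rv{A}}}-\entrp{\qmf{A}}$; the conclusion must therefore be read asymptotically, and its validity rests on applying the two continuity estimates (Pinsker together with Lemma~\ref{lemma:entropy-inequality}) to the fixed-alphabet averaged \ac{pmf} $\bar{P}_{\rv{A}}$ rather than to $P_{\rv{A}^n}$ directly, which is exactly what the single-letterization bounds \eqref{eq:entropy-concavity-bound}--\eqref{eq:div-convexity-bound} are designed to enable.
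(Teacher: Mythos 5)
Your proposal is correct and takes essentially the same route as the paper's proof: single-letterize with \eqref{eq:entropy-concavity-bound} and \eqref{eq:div-convexity-bound}, pass from vanishing divergence to $\entrp{\bar{P}_{\rv{A}}}\approx\entrp{\qmf{A}}$ via Pinsker (Lemma~\ref{lemma:pinsker}) and the entropy-continuity bound (Lemma~\ref{lemma:entropy-inequality}), bound $\entrp{\pmf{W}}\le\entrp{P_{\rv{A}^n}}$ by decodability, and handle \eqref{eq:Rinfo+Rrng-upper} by replacing $\rv{W}$ with the pair $(\rv{W},\rv{Z})$ and the chain rule. The only difference is presentational: the paper quantifies the vanishing hypothesis with an explicit parameter $\xi$ in \eqref{eq:norm-div-bound} and lets $\xi\rightarrow0$, whereas you argue directly in limit form.
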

\begin{proof}
Consider $0\le\xi\le\frac12$ and
\begin{align}
  \frac1n \diverg{P_{\rv{A}^n}}{Q^n_{\rv{A}}}
  \le \frac{\xi^2}{2\ln 2} .
  \label{eq:norm-div-bound}
\end{align}
The bound \eqref{eq:div-convexity-bound} and Lemmas \ref{lemma:entropy-inequality} and \ref{lemma:pinsker} give
\begin{align}
   \left| \entrp{\bar{P}_{\rv{A}}} - \entrp{\qmf{A}} \right| \le - \xi\, \log_2 \frac{\xi}{|\mset{A}|}
   \label{eq:Rinfo-d1-bounds}
\end{align}
and therefore
\begin{align}
   \entrp{\bar{P}_{\rv{A}}} \le \entrp{\qmf{A}} - \xi\, \log_2 \frac{\xi}{|\mset{A}|}.
   \label{eq:Rinfo-upper}
\end{align}
We further have $\entrp{\pmf{W}}\le\entrp{P_{\rv{A}^n}}$ since $\rv{W}$ is a function of $\rv{A}^n$. Combining this bound with \eqref{eq:entropy-concavity-bound} and \eqref{eq:Rinfo-upper} proves \eqref{eq:Rinfo-HA} for $\xi\rightarrow0$. To prove \eqref{eq:Rinfo+Rrng-upper}, note that if $\rv{W}$ and $\rv{Z}$ are functions of $\rv{A}^n$ then $\entrp{\pmf{WZ}}\le \entrp{P_{\rv{A}^n}}$.
\end{proof}

A reverse bound to \eqref{eq:Rinfo+Rrng-upper} holds more generally. 

\begin{lemma} \label{lemma:rate-lower}
Vanishing $\frac1n \diverg{P_{\rv{A}^n}}{Q^n_{\rv{A}}}$ requires
\begin{align}
  \Rinfo + \Hrng \ge \entrp{\qmf{A}}.
  \label{eq:Rinfo+Rrng-lower}
\end{align}
\end{lemma}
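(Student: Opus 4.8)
The plan is to convert the hypothesis into a lower bound on $\frac1n\entrp{P_{\rv{A}^n}}$ and to exploit the fact that the codeword $\rv{A}^n$ is a \emph{deterministic} function of the pair $(\rv{W},\rv{Z})$. First I would note that, since $\rv{A}^n=f(\rv{W},\rv{Z})$, entropy cannot increase under $f$, so by the chain rule and the definitions of the rates,
\begin{align}
  \entrp{P_{\rv{A}^n}} \le \entrp{\pmf{WZ}} = \entrp{\pmf{W}} + \entrp{P_{\rv{Z}|\rv{W}}} = n\left(\Rinfo + \Hrng\right).
  \label{eq:plan-dpi}
\end{align}
This step uses only the structure of the encoder, not invertibility, which is precisely why the bound holds more generally than the converse direction \eqref{eq:Rinfo+Rrng-upper}. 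Hence it suffices to show that vanishing $\frac1n\diverg{P_{\rv{A}^n}}{\qmf{A}^n}$ forces $\frac1n\entrp{P_{\rv{A}^n}}\ge\entrp{\qmf{A}}-o(1)$.

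Next I would lower-bound $\entrp{P_{\rv{A}^n}}$ through the cross entropy. Using \eqref{eq:X-H-D} together with the single-letterization \eqref{eq:crossentrp-identity},
\begin{align}
  \entrp{P_{\rv{A}^n}} = \crossentrp{P_{\rv{A}^n}}{\qmf{A}^n} - \diverg{P_{\rv{A}^n}}{\qmf{A}^n} = n\,\crossentrp{\bar{P}_{\rv{A}}}{\qmf{A}} - \diverg{P_{\rv{A}^n}}{\qmf{A}^n}.
  \label{eq:plan-Hrewrite}
\end{align}
Lemma~\ref{lemma:divergence-bound} (cross entropy dominates entropy) gives $\crossentrp{\bar{P}_{\rv{A}}}{\qmf{A}}\ge\entrp{\bar{P}_{\rv{A}}}$, so the remaining work is to argue that the average-letter \ac{pmf} $\bar{P}_{\rv{A}}$ has entropy close to $\entrp{\qmf{A}}$ and that the divergence term in \eqref{eq:plan-Hrewrite} is negligible after normalization.

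To quantify both at once I would parametrize the hypothesis as $\frac1n\diverg{P_{\rv{A}^n}}{\qmf{A}^n}\le\xi^2/(2\ln2)$ for small $\xi$, exactly as in the proof of Lemma~\ref{lemma:rate-upper}. Combining the convexity bound \eqref{eq:div-convexity-bound} with the left Pinsker inequality of Lemma~\ref{lemma:pinsker} yields $\done{\bar{P}_{\rv{A}}}{\qmf{A}}\le\xi$, and Lemma~\ref{lemma:entropy-inequality} (applied in the direction that lower-bounds $\entrp{\bar{P}_{\rv{A}}}$, using that $d_1\mapsto d_1\log_2(d_1/|\mset{A}|)$ is decreasing for small $d_1$) then gives $\entrp{\bar{P}_{\rv{A}}}\ge\entrp{\qmf{A}}+\xi\log_2(\xi/|\mset{A}|)$. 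Assembling \eqref{eq:plan-dpi}--\eqref{eq:plan-Hrewrite} with these estimates,
\begin{align}
  \Rinfo + \Hrng \ge \frac1n\entrp{P_{\rv{A}^n}} \ge \entrp{\qmf{A}} + \xi\log_2\frac{\xi}{|\mset{A}|} - \frac{\xi^2}{2\ln2}.
  \label{eq:plan-final}
\end{align}
Choosing $\xi=\xi(n)\to0$ as $n\to\infty$ so that the normalized divergence bound is met then sends both correction terms to zero and proves \eqref{eq:Rinfo+Rrng-lower}.

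The calculations are routine and essentially mirror those for Lemma~\ref{lemma:rate-upper}; I do not expect a genuine obstacle. The one point requiring care is orienting every inequality so that they all push $\entrp{P_{\rv{A}^n}}$ \emph{downward}: the data-processing step \eqref{eq:plan-dpi} converts a lower bound on the codeword entropy into the desired lower bound on the rate sum, and correspondingly $\crossentrp{\bar{P}_{\rv{A}}}{\qmf{A}}\ge\entrp{\bar{P}_{\rv{A}}}$ and the entropy-continuity estimate must both be used in the direction that bounds $\entrp{\bar{P}_{\rv{A}}}$ \emph{below} by $\entrp{\qmf{A}}$, which is opposite to the converse proof. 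Keeping track of these signs, and recording that no decodability assumption enters, is the only subtlety.
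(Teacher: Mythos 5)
Your proposal is correct and follows essentially the same route as the paper's proof: the data-processing step $\entrp{P_{\rv{A}^n}}\le\entrp{\pmf{WZ}}=n(\Rinfo+\Hrng)$, the rewriting of $\entrp{P_{\rv{A}^n}}$ via \eqref{eq:X-H-D} and \eqref{eq:crossentrp-identity}, the bound $\crossentrp{\bar{P}_{\rv{A}}}{\qmf{A}}\ge\entrp{\bar{P}_{\rv{A}}}$ from Lemma~\ref{lemma:divergence-bound}, and the entropy-continuity estimate \eqref{eq:Rinfo-d1-bounds} obtained from \eqref{eq:div-convexity-bound}, Pinsker, and Lemma~\ref{lemma:entropy-inequality} under the parametrization $\frac1n\diverg{P_{\rv{A}^n}}{\qmf{A}^n}\le\xi^2/(2\ln2)$, followed by $\xi\to0$. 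The paper merely arranges the same inequalities as a single chain starting from $\Rinfo+\Hrng$, so there is nothing substantive to distinguish the two arguments.
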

\begin{proof}
Consider the bound \eqref{eq:norm-div-bound}. We have
\begin{align}
   \Rinfo + \Hrng
   & \overset{(a)}{\ge} \frac{1}{n} \entrp{P_{\rv{A}^n}} + \left(\frac1n \diverg{P_{\rv{A}^n}}{\qmf{A}^n} - \frac{\xi^2}{2\ln 2} \right) \nonumber \\
   & \overset{(b)}{=} \crossentrp{\bar{P}_{\rv{A}}}{\qmf{A}} - \frac{\xi^2}{2\ln 2} \nonumber \\
   & \overset{(c)}{\ge} \entrp{\bar{P}_{\rv{A}}} - \frac{\xi^2}{2\ln 2} \nonumber \\
   & \overset{(d)}{\ge}
   \entrp{\qmf{A}} + \xi\, \log_2 \frac{\xi}{|\mset{A}|} - \frac{\xi^2}{2\ln 2}
   \label{eq:Rinfo-lower}
\end{align}
where $(a)$ follows because $\rv{A}^n$ is a function of $\rv{W}$ and $\rv{Z}$ and by hypothesis \eqref{eq:norm-div-bound}, $(b)$ follows by~\eqref{eq:X-H-D} and~\eqref{eq:crossentrp-identity}, $(c)$ follows by Lemma~\ref{lemma:divergence-bound}, and $(d)$ follows by \eqref{eq:Rinfo-d1-bounds}. Finally, let $\xi\rightarrow0$.
\end{proof}

Lemma~\ref{lemma:rate-lower} is valid for \ac{DM}, \ac{RC}, and for one-to-many mappings. For example, if $\Rinfo\rightarrow 0$ as $n\rightarrow\infty$ then we asymptotically require $\Rrng\ge\Hrng\ge\entrp{\qmf{A}}$. Finally, we remark that the inequalities~\eqref{eq:Rinfo+Rrng-upper} and~\eqref{eq:Rinfo+Rrng-lower} are usually strict for finite $n$ and hence it is not clear whether \ac{ILD} coding is possible.

\subsection{Discussion}
\label{subsec:model-discussion}
Sec.~\ref{sec:introduction} reviews two approaches to approximate target \acp{pmf}, namely \ac{DM} and \ac{RC}. \ac{DM} uses a one-to-one mapping which is a special case of the above model without a \ac{RNG}. Vanishing normalized (or un-normalized) \ac{I-divergence} thus implies that $\Rinfo$ is asymptotically \emph{upper} bounded by $\entrp{Q_{\rv{A}}}$, see \eqref{eq:Rinfo-HA}. In fact, the \ac{I-divergence} of the best binary \ac{DM} grows as $\frac12 \log_2 n$ with $n$~\cite{schulte2017divergence}. Applications of \ac{DM}, such as probabilistic shaping for energy-efficient communication, usually require only vanishing \emph{normalized} \ac{I-divergence}. Algorithms for \ac{DM} that have $\Rinfo\rightarrow\entrp{\qmf{A}}$ for large $n$ include \ac{CCDM} \cite{bocherer2015bandwidth,schulte2016constant} and shell mapping \cite{laroia1994optimal,khandani1993shaping,gultekin2017constellation,schulte2019divergence}.

\ac{RC} uses a many-to-one mapping and the \ac{RC} rate for vanishing \ac{I-divergence} is asymptotically \emph{lower} bounded by $\entrp{Q_{\rv{A}}}$, see~\eqref{eq:Rinfo-lower}. To approach the lower bound, one can, e.g., apply random coding arguments~\cite{han1993approximation,Ha01}, interval algorithms~\cite{han1997interval}, fixed-to-variable length codes~\cite{bocherer2013fixed}, variable-to-fixed length codes~\cite{Amjad-RED16}, or fixed-to-fixed length codes~\cite{boecherer-geiger-IT16,boecherer2013block} such as polar codes~\cite{Blasco-Serrano-etal-IZS12,Chou-etal-ISIT15}. These algorithms use deterministic many-to-one mappings that are not invertible in general.

\ac{ILD} coding uses a one-to-many mapping that combines \ac{DM} and \ac{RC}, see Fig.~\ref{fig:o2m}. This is similar to randomized encoding which is a common tool in multi-user information theory, e.g., for \ac{RC}/\ac{RNG} and wiretap channels~\cite{wyner1975wiretap}. The differences between the approaches are subtle. In particular, we require zero error while the randomization for wiretap and other problems permits small error. Also, we must carefully design the \ac{DM} encoder and \ac{RC} because $\diverg{P_{\rv{A}^n}}{\qmf{A}^n}$ should vanish.

\section{Encoder Design}
\label{sec:encoder-design}

An \ac{ILD} encoder is a one-to-many mapping into disjoint sets, see Fig.~\ref{fig:o2m}. All strings $a^n$ assigned to message $w$ are collected in the set $\mset{S}_w$ and we require $\binset{v}\cap\binset{w} = \emptyset$ for $ v\ne w$. We denote the set of all strings under consideration as
\begin{align}
  \mset{S} = \bigcup_w \mset{S}_w
  \label{eq:code-partition}
\end{align}
where $\{\mset{S}_w : w=1,\dots,K \}$ partitions $\mset{S}$. A basic choice for $\mset{S}$ is $\supp(\ptarget)^n$. 

\subsection{Two-Step Encoding}

Encoding involves two-steps. First, the message $w$ chooses the set $\binset{w}$.
The encoder then requests an index $\rv{Z}(w)=\rv{A}^n(w)$ from the \ac{RNG} $P_{\rv{Z}|\rv{W}}(\cdot|w)$ that uniquely identifies a string $a^n=f(w,z)$ from $\binset{w}$. For this $a^n$, we have
\begin{align}
  P_{\rv{A}^n|\rv{W}}\left(a^n | w\right) = P_{\rv{Z}|\rv{W}}(z|w)
\end{align}
and for each $a^n\in\mset{S}_w$ we have
\begin{align}
  P_{\rv{A}^n}(a^n) = P_{\rv{A}^n W}(a^n,w) = \frac{1}{K} P_{\rv{A}^n|\rv{W}}\left(a^n|w \right).
  \label{eq:Pagw}
\end{align}

Suppose that $n$ and $Q_\rv{A}$ are given. Encoder design involves choosing the:
\begin{itemize}
    \item number $K$ of messages;
    \item code: the set $\mset{S}$ of strings;
    \item encoder map: sets $\binset{w}$, $w=1,\ldots,K$, that partition $\mset{S}$;
    \item RNG: \ac{pmf}s $P_{\rv{Z}|\rv{W}}(\cdot|w)$, $w=1,\ldots,K$.
\end{itemize}

The encoder output is $\rv{A}^n=f(\rv{W},\rv{Z})$ and $f$ is invertible, so we are in the case described for the bound \eqref{eq:Rinfo+Rrng-upper} and with
\begin{align}
    \entrp{P_{\rv{A}^n}}
    = \entrp{P_\rv{WZ}}
    = n\left( \Rinfo + \Hrng \right).
\end{align}
One might, therefore, consider the transmission rate to be $\Rinfo+\Hrng$ rather than $\Rinfo$. However, $\rv{Z}$ is non-uniform and generated by a many-to-one mapping in general so that one cannot necessarily recover the $\Brng=n\Rrng$ bits that generate $\rv{Z}$. Thus, we consider the information rate to be $\Rinfo$. At the same time, the encoder does ``share randomness'' via $\rv{Z}$.

\subsection{Idealized \ac{RNG}}

Recall that $q_{\mset{S}}=\qmf{A}^n(\mset{S})$ and  $Q^n_{\rv{A}|\mset{S}}(a^n)=\qmf{A}^n(a^n)/q_{\mset{S}}$ for $a^n\in\mset{S}$, see \eqref{eq:Qset}. We expand \eqref{eq:I-div} by using \eqref{eq:Pagw} as follows:
\begin{align}
    & \sum_{w=1}^K \, \sum_{a^n\in\binset{w}\cap\, \supp\left(P_{\rv{A}^n}\right)} 
    \frac{P_{\rv{A}^n|\rv{W}}\left(a^n|w \right)}{K}
    \log_2 \frac{\frac{1}{K} P_{\rv{A}^n|\rv{W}}\left(a^n|w \right)}{\qmf{A}^n(a^n)} \nonumber \\
    & = \diverg{U_\numW}{[q_{\binset{1}},\ldots,q_{\binset{\numW}}]} \nonumber \\
    & \qquad + \sum_{w=1}^K \frac{1}{\numW} \diverg{P_{\rv{A}^n|\rv{W}}(\cdot|w)}{ Q^n_{\rv{A}|\binset{w}}}. \label{eq:qusisplit}
\end{align}
The effects of the two-step encoding are apparent in~\eqref{eq:qusisplit}: the first term accounts for the choice of set $\mset{S}_w$ and the second term accounts for the \ac{RNG}. We will study the \acp{I-divergence}
\begin{align}
  & \diverg{U_\numW}{[q_{\binset{1}},\ldots,q_{\binset{\numW}}]} \label{eq:I-div1} \\
  & \diverg{P_{\rv{A}^n|\rv{W}}(\cdot|w)}{ Q^n_{\rv{A}|\binset{w}}} \label{eq:I-div2}
\end{align}
separately. The identity \eqref{eq:qusisplit} and Lemma~\ref{lemma:divergence-bound} give the following result.
\begin{proposition} \label{prop:best-RNG}
The encoder \ac{RNG} with
\begin{align}
    P_{\rv{A}^n|\rv{W}}(\cdot|w) = Q^n_{\rv{A}|\binset{w}}
    \label{eq:idealRNG}
\end{align}
for all $w$ gives the smallest \ac{I-divergence}
\begin{equation}
    \diverg{\pblock}{ Q^n_{\rv{A}}} =\diverg{U_\numW}{[q_{\mset{S}_1},\ldots,q_{\mset{S}_{\numW}}]}.
    \label{eq:qusi_simple}
\end{equation}
\end{proposition}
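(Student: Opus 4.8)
The plan is to establish Proposition~\ref{prop:best-RNG} directly from the decomposition~\eqref{eq:qusisplit}. The key observation is that the left-hand side of~\eqref{eq:qusisplit} is exactly $\diverg{\pblock}{Q^n_\rv{A}}$ written out using~\eqref{eq:Pagw}, so the identity splits the target \ac{I-divergence} into two non-negative pieces: the ``set-selection'' term $\diverg{U_\numW}{[q_{\binset{1}},\ldots,q_{\binset{\numW}}]}$ of~\eqref{eq:I-div1}, which depends only on the partition $\{\binset{w}\}$ and not on the \ac{RNG}, and the averaged ``within-set'' term $\sum_{w} \frac{1}{\numW}\diverg{P_{\rv{A}^n|\rv{W}}(\cdot|w)}{Q^n_{\rv{A}|\binset{w}}}$ of~\eqref{eq:I-div2}, which is the only place the \ac{RNG} \ac{pmf}s $P_{\rv{A}^n|\rv{W}}(\cdot|w)$ enter.

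First I would note that for a fixed code $\mset{S}$ and fixed partition $\{\binset{w}\}$, the first term in~\eqref{eq:qusisplit} is a constant with respect to the choice of \ac{RNG}. Hence minimizing $\diverg{\pblock}{Q^n_\rv{A}}$ over the encoder \ac{RNG} reduces to minimizing the second term. Second, I would invoke Lemma~\ref{lemma:divergence-bound}, which states $\diverg{P}{Q}\ge 0$ with equality if and only if $P=Q$. Applying this to each summand gives $\diverg{P_{\rv{A}^n|\rv{W}}(\cdot|w)}{Q^n_{\rv{A}|\binset{w}}}\ge 0$, so the entire averaged term is non-negative and each weight $1/\numW$ is positive.

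Third, I would verify that the lower bound of zero on the second term is attained exactly by the idealized choice~\eqref{eq:idealRNG}, namely $P_{\rv{A}^n|\rv{W}}(\cdot|w)=Q^n_{\rv{A}|\binset{w}}$ for every $w$. By the equality condition of Lemma~\ref{lemma:divergence-bound}, this choice makes every summand vanish and is the unique minimizer. Substituting into~\eqref{eq:qusisplit} then leaves only the first term, which yields~\eqref{eq:qusi_simple}. I should also remark that the idealized \ac{RNG} is admissible: it assigns probability $Q^n_{\rv{A}|\binset{w}}(a^n)=\qmf{A}^n(a^n)/q_{\binset{w}}$ to each $a^n\in\binset{w}$, so it is a valid \ac{pmf} supported on $\binset{w}$ and is consistent with the two-step encoding, and one should check $q_{\binset{w}}>0$ so that the conditional \ac{pmf} in~\eqref{eq:Qset} is well defined.

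I expect the main obstacle to be not the optimization itself—which is immediate once the splitting identity is in hand—but rather establishing the decomposition~\eqref{eq:qusisplit} cleanly. The crux is the algebraic manipulation inside the logarithm: writing $\frac{1}{\numW}P_{\rv{A}^n|\rv{W}}(a^n|w)/\qmf{A}^n(a^n)$ as the product of $\frac{q_{\binset{w}}}{\qmf{A}^n(a^n)/q_{\binset{w}}}\cdot\frac{(1/\numW)}{q_{\binset{w}}}\cdots$, i.e.\ inserting and cancelling the factor $q_{\binset{w}}=\qmf{A}^n(\binset{w})$ so that the log separates into a term comparing $U_\numW$ against $[q_{\binset{1}},\ldots,q_{\binset{\numW}}]$ and a term comparing $P_{\rv{A}^n|\rv{W}}(\cdot|w)$ against $Q^n_{\rv{A}|\binset{w}}$. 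Care is needed to confirm that summing $P_{\rv{A}^n|\rv{W}}(\cdot|w)$ over $a^n\in\binset{w}$ gives $1$ (so the set-selection term carries the correct $U_\numW$ marginal) and that the supports align, using the restriction of the outer sum to $\supp(P_{\rv{A}^n})$; once this bookkeeping is done, the proposition follows from Lemma~\ref{lemma:divergence-bound} as above.
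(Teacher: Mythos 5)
Your proposal is correct and follows essentially the same route as the paper: the paper's proof consists precisely of combining the decomposition~\eqref{eq:qusisplit} with Lemma~\ref{lemma:divergence-bound}, which is exactly your argument (constant first term, non-negative second term vanishing iff~\eqref{eq:idealRNG} holds). Your additional bookkeeping on the validity of the conditional \ac{pmf} and the support restriction is sound but not a departure from the paper's reasoning.
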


Proposition~\ref{prop:best-RNG} gives intuition on how to choose the partition $\{\mset{S}_1,\dots,\mset{S}_\numW\}$: the \ac{pmf} $[q_{\mset{S}_1},\dots,q_{\mset{S}_K}]$ should be close to uniform. Sec.~\ref{sec:MLF-and-LLF} develops algorithms that separate the $a^n$ into approximately equally likely sets with respect to $\qmf{A}^n$.

\subsection{\ac{RNG} via \ac{RC}}
\label{subsec:RNG-RC}

The idealized \ac{RNG} of \eqref{eq:idealRNG} cannot be implemented in general. To approximate it, various authors have developed theory and algorithms for \acp{RC} with vanishing \ac{I-divergence} \eqref{eq:I-div2}, see Sec.~\ref{subsec:model-discussion}. We study fixed-to-fixed length encoders generated by Algorithm 2 in~\cite{boecherer-geiger-IT16}. Consider the subset $\mset{S}_w$ and suppose we are given $n\Rrng$ independent and uniformly-distributed random bits. Proposition~4 in~\cite{boecherer-geiger-IT16} specifies that if $|\mset{S}_w| \le 2^{n\Rrng}$ then fixed-to-fixed length encoding gives
\begin{align}
  \diverg{P_{\rv{A}^n|\rv{W}}(\cdot|w)}{ Q^n_{\rv{A}|\binset{w}}}
  & \le \log_2\left( 1 + \frac{|\mset{S}_w|}{2\,q_{\rm min}(w)\, 2^{2n\Rrng}} \right) \nonumber \\
  & \le \frac{|\mset{S}_w|}{(2 \ln 2)\, q_{\rm min}(w)\, 2^{2n\Rrng}}
  \label{eq:synthesis-bound}
\end{align}
where $q_{\rm min}(w) = \min_{a^n\in\supp(P_{\rv{A}^n|\rv{W}}(\cdot|w))} Q^n_{\rv{A}|\binset{w}}(a^n)$.

It remains to bound $|\mset{S}_w|$ and $q_{\rm min}(w)$ and this is done in Theorem~\ref{thm:main-result} and Appendix~\ref{app:MLF-LLF-general-alphabets} below. The result is that $\Rrng$ can be made to vanish with growing $n$, and the \ac{I-divergence} \eqref{eq:I-div2} can be made to decay exponentially in $n$ for all $w=1,\dots,K$.

\subsection{Code for Minimum \ac{I-divergence}}
We next consider code design for the \ac{I-divergence} \eqref{eq:I-div1}.
\begin{proposition} \label{prop:best-set}
The code $\mset{S} = \supp(\ptarget)^n$ gives the smallest \ac{I-divergence} \eqref{eq:I-div1}.
\end{proposition}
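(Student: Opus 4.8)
The plan is to make the dependence of \eqref{eq:I-div1} on the code explicit and then argue by coordinatewise monotonicity. Writing out the divergence of $U_\numW$ against the (possibly sub-normalized) vector $[q_{\binset{1}},\ldots,q_{\binset{\numW}}]$ gives
\begin{align}
  \diverg{U_\numW}{[q_{\binset{1}},\ldots,q_{\binset{\numW}}]}
  = -\log_2 \numW - \frac{1}{\numW}\sum_{w=1}^\numW \log_2 q_{\binset{w}},
  \label{eq:prop-rewrite}
\end{align}
so that minimizing \eqref{eq:I-div1} over all choices of disjoint sets $\binset{1},\dots,\binset{\numW}$ is equivalent to maximizing $\prod_{w} q_{\binset{w}}$.

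First I would note that the right-hand side of \eqref{eq:prop-rewrite} is strictly decreasing in each $q_{\binset{w}}$, since $-\log_2$ is strictly decreasing. Next, each set probability $q_{\binset{w}}=\sum_{a^n\in\binset{w}}\qmf{A}^n(a^n)$ strictly increases when a string $a^n$ with $\qmf{A}^n(a^n)>0$ is adjoined to $\binset{w}$. Combining these two facts: given any feasible partition whose union $\mset{S}=\bigcup_w\binset{w}$ is a strict subset of $\supp(\ptarget)^n$, there exists $a^n\in\supp(\ptarget)^n\setminus\mset{S}$ with $\qmf{A}^n(a^n)>0$, and adjoining it to any single $\binset{w}$ keeps the sets disjoint while strictly lowering \eqref{eq:I-div1}. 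Hence every minimizer must satisfy $\mset{S}\supseteq\supp(\ptarget)^n$.

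Finally I would dispatch the two boundary cases. Strings $a^n\notin\supp(\ptarget)^n$ have $\qmf{A}^n(a^n)=0$ and leave every $q_{\binset{w}}$, and therefore \eqref{eq:prop-rewrite}, unchanged, so they are irrelevant to \eqref{eq:I-div1} and need not be included; thus $\mset{S}=\supp(\ptarget)^n$ achieves the minimum. This choice is also self-consistent, since then $\sum_w q_{\binset{w}}=\qmf{A}^n(\supp(\ptarget)^n)=1$, so $[q_{\binset{1}},\dots,q_{\binset{\numW}}]$ is a genuine \ac{pmf} and \eqref{eq:I-div1} is a bona fide \ac{I-divergence}.

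The step I expect to require the most care is the normalization bookkeeping. For a proper subset $\mset{S}$ the vector $[q_{\binset{1}},\dots,q_{\binset{\numW}}]$ sums to $q_{\mset{S}}<1$, so one cannot appeal to the nonnegativity of \ac{I-divergence} in Lemma~\ref{lemma:divergence-bound} when comparing codes of different total mass. Working directly with the closed form \eqref{eq:prop-rewrite} and its monotonicity in each coordinate avoids this pitfall and is precisely what makes the comparison across codes with different $q_{\mset{S}}$ valid.
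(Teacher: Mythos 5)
Your proof is correct and follows essentially the same route as the paper's: both arguments augment any code missing positive-probability strings and exploit the fact that each term $\frac{1}{K}\log_2\frac{1/K}{q_{\binset{w}}}$ is strictly decreasing in $q_{\binset{w}}$, so filling in strings from $\supp(\ptarget)^n$ strictly lowers the divergence. Your explicit closed form and the remark about sub-normalization (why Lemma~\ref{lemma:divergence-bound} cannot be invoked) are careful additions, but the underlying argument is the same one the paper uses.
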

\begin{proof}
Suppose $\mset{S} \subsetneq \supp(\ptarget)^n$ so that $q_{\mset{S}}=Q^n_\rv{A}(\mset{S})<1$. The encoder has sets $\mset{S}_w$ with probabilities $q_{\mset{S}_w}$. Now assign the unassigned strings with positive probability to obtain new sets $\mset{S}'_w$ with probabilities $q_{\mset{S}'_w}$ satisfying $q_{\mset{S}'_w} \ge q_{\mset{S}_w}$ and where at least one inequality is strict. We thus have
\begin{align}
    \diverg{U_\numW}{[q_{\mset{S}'_1},\ldots,q_{\mset{S}'_{\numW}}]} &= \sum_w \frac{1}{K} \log\frac{1/K}{q_{\mset{S}'_w}} \nonumber\\
    &< \diverg{U_\numW}{[q_{\mset{S}_1},\ldots,q_{\mset{S}_K}]}.
\end{align}
\end{proof}

Proposition~\ref{prop:best-set} shows that one should use all strings with positive probability if an ideal \ac{RNG} is available. Moreover, inflating $\mset{S}$ by strings outside $\supp(\ptarget)^n$ does not change the \ac{I-divergence}. 

\subsection{Code Empirical Distribution and \ac{I-divergence}}
\label{subsec:Idiv-property}
The \ac{I-divergence} \eqref{eq:I-div} simplifies by applying \eqref{eq:crossentrp-identity} that one can interpret in terms of the code empirical \ac{pmf}. Let
\begin{align}
    \bar{n}_i = \sum_{a^n \in \mset{S}} P_{\rv{A}^n}(a^n) \, n_i(a^n)
    \label{eq:avg-occurrence}
\end{align}
be the average number of occurrences of letter $i$ in $\mset{S}$ and define the code empirical \ac{pmf} as
\begin{align}
  \bar{P}
  = \sum_{a^n \in \mset{S}} P_{\rv{A}^n}(a^n) \, \pi_{a^n}
  = \frac1n [\bar{n}_1,\dots,\bar{n}_{|\mset{A}|}] .
  \label{eq:Pavg}
\end{align}
Using \eqref{eq:X-H-D} and \eqref{eq:crossentrp-identity}, we have (see~\eqref{eq:div-convexity-bound})
\begin{align}
  \diverg{P_{\rv{A}^n}}{\qmf{A}^n}
  & = n \crossentrp{\bar{P}}{\qmf{A}} - \entrp{P_{\rv{A}^n}}.
  \label{eq:I-div-simplify}
\end{align}
We next use \eqref{eq:I-div-simplify} to analyze the performance of \acp{DM}.

\section{Distribution Matching}
\label{sec:DM}

This section generalizes results of~\cite{schulte2017divergence} to non-binary alphabets. Recall that \ac{DM} is a special case of the model in Sec.~\ref{sec:model} where \eqref{eq:I-div2} is zero because there is no \ac{RNG}. Furthermore, from \eqref{eq:qusisplit} and \eqref{eq:I-div-simplify} we have
\begin{align}
  \diverg{P_{\rv{A}^n}}{\qmf{A}^n}
  & = \diverg{U_\numW}{\qmf{A}^n} \nonumber \\
  & = n \crossentrp{\bar{P}}{\qmf{A}} -\log_2 K \nonumber \\
  & = n\left(\crossentrp{\bar{P}}{\qmf{A}} - \Rinfo \right) \nonumber \\
  & = n\left(\entrp{\bar{P}} + \diverg{\bar{P}}{\qmf{A}} - \Rinfo \right).
  \label{eq:I-div-DM}
\end{align}

\subsection{\ac{CCDM} Performance} \label{subsec:CCDM}
Consider the target \ac{pmf} $\qmf{A}=[q_1, \dots,q_{|\mset{A}|}]$ and a CCDM where all $a^n$ have the empirical \ac{pmf} $P=[p_1,\dots,p_{|\mset{A}|}]$ where $np_i$ is an integer for all $i$. We clearly have $\bar{P}=P$ and
\begin{align}
    & K = \binom{n}{np_1\ldots np_{|\mset{A}|}}, \quad
    \qmf{A}^n(a^n) = \prod_{i=1}^{|\mset{A}|} q_i^{np_i}
\end{align}
for all $a^n\in\mset{S}$ and the rate is
\begin{align}
  \Rinfo & = \frac{1}{n} \log_2 \binom{n}{np_1\ldots np_{|\mset{A}|}}.
  \label{eq:Rinfo-CCDM}
\end{align}
The bounds~\eqref{eq:multinomial-bound1} imply
\begin{align}
    \frac{|\mset{A}|-1}{2n} \log_2(2\pi n \, c)
    \le \entrp{P} - \Rinfo
    \le \frac{|\mset{A}|-1}{2n} \log_2 (8 n c)
   \label{eq:Aary-CCDM-D-bound}
\end{align}
where
\begin{align}
    c = \left( \prod_{i=1}^{|\mset{A}|} p_i\right)^{1/(|\mset{A}|-1)}
    \label{eq:Aary-CCDM-c}
\end{align}
and hence $\Rinfo\rightarrow\entrp{P}$ for large $n$. Moreover, we obtain $\entrp{P}\rightarrow\entrp{\qmf{A}}$ by choosing $P$ appropriately. For example, Algorithm 1 of~\cite{boecherer-geiger-IT16} gives a $P$ with $\done{P}{\qmf{A}}\le |\mset{A}|/(2n)$ (for $|\mset{A}|=2$, we obtain $p=\lfloor nq \rfloor/n$). Lemma~\ref{lemma:entropy-inequality} (or Lemma~\ref{lemma:entropy-inequality2}) gives the desired rate but by combining \eqref{eq:I-div-DM} and \eqref{eq:Aary-CCDM-D-bound} we have
\begin{align}
    & \frac{|\mset{A}|-1}{2} \log_2(2\pi n \, c) + \diverg{P}{\qmf{A}} \nonumber \\
    & \le 
    \diverg{\pblock}{\qmf{A}^n} \le \frac{|\mset{A}|-1}{2} \log_2(8 n c) + \diverg{P}{\qmf{A}} .
    \label{eq:Aary-CCDM-D-bound2}
\end{align}
The \ac{I-divergence} thus grows as $\frac12(|\mset{A}|-1)\log n$ with $n$ if $\diverg{P}{\qmf{A}}\rightarrow0$ or $\entrp{P}\rightarrow\entrp{\qmf{A}}$.

\subsection{Improving \ac{CCDM}}
\label{subsec:CCDM-improve}
We consider two classes of \acp{pmf} for which the \ac{CCDM} pre-log factor $\frac12(|\mset{A}|-1)$ is suboptimal.

\subsubsection{Product Distributions}
Suppose the target \ac{pmf} {\ptarget} splits into a product of \acp{pmf}:
\begin{align}
  \ptarget(a) = \ptarget(f(a',a'')) = Q_{\rv{A'}}(a') Q_{\rv{A''}}(a'')
  \label{eq:product-Q}
\end{align}
where $f$ is an invertible function.

\begin{example}
\label{example:product-Q}
Consider $\rv{A}=[\rv{A}_1,\rv{A}_2]$ where $\rv{A}_1$ and $\rv{A}_2$ are independent with \acp{pmf} $[p,1-p]$ and $[q,1-q]$, respectively. The 4-ary \ac{pmf} is $[pq,p(1-q),(1-p)q,(1-p)(1-q)]$.
\end{example}

For \acp{pmf} \eqref{eq:product-Q} one can use \ac{PDM} \cite[Sec. III]{steiner2018approaching,boecherer2017high,Pikus-Xu-CL17} that operates two or more component \acp{DM} in parallel. The \ac{I-divergence} \eqref{eq:I-div-DM} is then the sum of the \acp{I-divergence} of the component \acp{DM}, e.g., the \ac{PDM} pre-log factor for Example~\ref{example:product-Q} is $\frac12+\frac12 = 1$ while a 4-ary CCDM has the pre-log factor $\frac32$.

\subsubsection{Unique Probabilities}
The best \ac{DM} for the target \ac{pmf} $\qmf{A}=U_{|\mset{A}|}$ has zero \ac{I-divergence} by putting out the $|\mset{A}|$-ary representation of $w-1$. We extend this observation to sources where $\qmf{A}$ and the empirical \acp{pmf} $P$ have the following form. Let $\mset{U}=\{1,\dots,|\mset{U}|\}$ enumerate the \emph{unique} probabilities in
\begin{align}
  P = [ \underbrace{p_1,\dots,p_1}_{\text{$\nu_1$ times}}, \;\dots\;,\underbrace{p_{|\mset{U}|},\dots,p_{|\mset{U}|}}_{\text{$\nu_{|\mset{U}|}$ times}} ]
\end{align}
where $p_j \ne p_k$ for $j\ne k$. The entropy is
\begin{align}
  \entrp{P} = \entrp{[r_1,\dots,r_{|\mset{U}|}]} + \sum_{j=1}^{|\mset{U}|} r_j \log_2 \nu_j
\end{align}
where $r_j = \nu_j p_j$ for $j=1,\dots,|\mset{U}|$. 

The key step now is as follows. Consider $a^n$ with empirical \ac{pmf} $P$, and consider the $n r_j$ positions where there are letters with empirical probability $p_j$. For these positions, we expand the \ac{CCDM} set $\mset{S}$ to include all $a^n$ with any of the $\nu_j^{r_j n}$ patterns of $\nu_j$ letters. These new strings all have the same probability $\qmf{A}^n(a^n)$. The new \ac{DM} again has $\bar{P}=P$ but now
\begin{align}
    & K = \binom{n}{nr_1\ldots nr_{|\mset{U}|}}
    \cdot \prod_{j=1}^{|\mset{U}|} \nu_j^{r_j n}, \quad
    \qmf{A}^n(a^n) = \prod_{j=1}^{|\mset{U}|} q_j^{nr_j}
\end{align}
for all $a^n\in\mset{S}$ and therefore
\begin{align}
  & \Rinfo
  = \frac{1}{n} \log_2 \binom{n}{nr_1\ldots nr_{|\mset{U}|}} + \sum_{j=1}^{|\mset{U}|} r_j \log_2 \nu_j.
  \label{eq:Rinfo-DM}
\end{align}
Equations~\eqref{eq:Aary-CCDM-D-bound} and~\eqref{eq:Aary-CCDM-c} are therefore updated as follows:
\begin{align}
    \frac{|\mset{U}|-1}{2n} \log_2(2\pi n \, c)
    \le \entrp{P} - \Rinfo
    \le \frac{|\mset{U}|-1}{2n} \log_2 (8 n c)
   \label{eq:Aary-DM-D-bound}
\end{align}
where
\begin{align}
    c = \left( \prod_{j=1}^{|\mset{U}|} r_j\right)^{1/(|\mset{U}|-1)}.
\end{align}
Hence we again have $\Rinfo\rightarrow\entrp{P}$ for large $n$ and we can make $\entrp{P}\rightarrow\entrp{\qmf{A}}$ by choosing $P$ appropriately as for \ac{CCDM}. Using the same approach as for \eqref{eq:Aary-CCDM-D-bound2} we further obtain
\begin{align}
  & \frac{|\mset{U}|-1}{2} \log_2(2\pi n \, c) + \diverg{P}{\qmf{A}} \nonumber \\
  & \le \diverg{\pblock}{\qmf{A}^n} \le \frac{|\mset{U}|-1}{2} \log_2(8 n c) + \diverg{P}{\qmf{A}}.
  \label{eq:Aary-DM-D-bound2}
\end{align}
The \ac{I-divergence} now grows as $\frac12(|\mset{U}|-1)\log n$ with $n$ rather than $\frac12(|\mset{A}|-1)\log n$ if $\diverg{P}{\qmf{A}}\rightarrow0$ or $\entrp{P}\rightarrow\entrp{\qmf{A}}$.

\begin{example}
Consider the \ac{pmf} $\ptarget = [0.6,0.2,0.2]$ and strings  of length 5. The code $\mset{S}$ has all strings with empirical \acp{pmf} $[3,2,0]/5$, $[3,1,1]/5$, and $[3,0,2]/5$. We compute
\begin{align}
    K = \binom{5}{3,2}\cdot 2^2 = 40. 
\end{align}
The code size of the corresponding \ac{CCDM} is instead
\begin{equation}
    \binom{5}{3,1,1} = 20.
\end{equation}
For large $n$, the bounds \eqref{eq:Aary-CCDM-D-bound2} show that the \ac{I-divergence} of the new \ac{DM} scales as $\log_2 n$ rather than $\frac{3}{2}\log_2 n$ as for \ac{CCDM}.
\end{example}

\subsection{Optimal \ac{DM} Codes}
\label{subsec:DM}
The following result generalizes~\cite[Lemma~5]{schulte2017divergence} to non-binary alphabets.
\begin{proposition} \label{prop:DMcode}
The \ac{DM} code $\mset{S}$ that minimizes $\diverg{U_K}{\qmf{A}^n}$ has all $a^n$ with at least a specified probability with respect to $\qmf{A}^n$, i.e., $\mset{S}$ has all $a^n$ satisfying $\qmf{A}^n(a^n)\ge2^{-nI}$ for some $I$. Alternatively, $\mset{S}$ has all strings $a^n$ satisfying
\begin{align}
  \frac1n \selfinfo{\qmf{A}^n}{a^n}
  = \crossentrp{\pi_{a^n}}{\qmf{A}} \le I.
\end{align}
\end{proposition}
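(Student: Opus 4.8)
The plan is to reduce the minimization to choosing, among all codes of a prescribed size $K=|\mset{S}|$ (equivalently, a prescribed rate $\Rinfo=\frac1n\log_2 K$), the $K$ strings that are most likely under $\qmf{A}^n$, and then to show by an exchange argument that any such choice is exactly a threshold set. First I would write the objective explicitly. Since a \ac{DM} has no \ac{RNG}, each message maps to a distinct string and $\pblock=U_K$ is uniform over $\mset{S}$ with $K=|\mset{S}|$. Specializing \eqref{eq:I-div-DM} (or expanding $\crossentrp{\pblock}{\qmf{A}^n}$ directly) gives
\begin{align}
  \diverg{U_K}{\qmf{A}^n}
  = -\log_2 K + \frac{1}{K} \sum_{a^n\in\mset{S}} \selfinfo{\qmf{A}^n}{a^n}.
\end{align}
With $K$ fixed, the term $-\log_2 K$ is constant, so minimizing the \ac{I-divergence} is equivalent to minimizing the average self-information $\frac1K\sum_{a^n\in\mset{S}}\selfinfo{\qmf{A}^n}{a^n}$ over all $\mset{S}$ with $|\mset{S}|=K$.

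Second, I would run an exchange argument. Suppose an optimal set $\mset{S}^\star$ contained a string $a^n$ while omitting some $b^n$ with $\qmf{A}^n(b^n)>\qmf{A}^n(a^n)$, equivalently $\selfinfo{\qmf{A}^n}{b^n}<\selfinfo{\qmf{A}^n}{a^n}$. Replacing $a^n$ by $b^n$ keeps the cardinality equal to $K$ but strictly decreases $\sum_{a^n\in\mset{S}}\selfinfo{\qmf{A}^n}{a^n}$, hence strictly decreases $\diverg{U_K}{\qmf{A}^n}$, contradicting optimality. Therefore every string in $\mset{S}^\star$ is at least as likely as every string outside it, so setting $I := -\frac1n \log_2 \min_{a^n\in\mset{S}^\star}\qmf{A}^n(a^n)$ makes $\mset{S}^\star$ consist of all $a^n$ with $\qmf{A}^n(a^n)\ge 2^{-nI}$.

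Finally, I would translate the threshold into the empirical-\ac{pmf} form using the identity $\selfinfo{\qmf{A}^n}{a^n}=n\crossentrp{\pi_{a^n}}{\qmf{A}}$ recorded in the preliminaries, which turns $\qmf{A}^n(a^n)\ge 2^{-nI}$ into $\crossentrp{\pi_{a^n}}{\qmf{A}}\le I$, yielding the alternative characterization. The one point needing care is ties at the boundary: several strings may share the probability $2^{-nI}$, in which case an optimal code includes a maximal admissible subset of them, and the stated threshold description holds for such a code (equivalently, $I$ is chosen so that the number of strings with $\qmf{A}^n(a^n)\ge 2^{-nI}$ equals $K$). I expect this tie-handling, rather than the exchange step itself, to be the only delicate part; once the objective is written as an average of self-informations, the exchange inequality is immediate.
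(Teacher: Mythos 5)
Your exchange argument establishes only the fixed-$K$ statement: among all codes of a given cardinality $K$, the $K$ most likely strings minimize $\diverg{U_K}{\qmf{A}^n}$. That is the easy part, and it is not what the proposition claims. The minimization in the proposition is over all codes, with $K=|\mset{S}|$ itself a design variable, and its substantive content --- which the paper flags immediately after the statement (``optimal \ac{DM} codes have \emph{all} strings of any empirical \ac{pmf} that they contain'') --- is that the global minimizer never includes a probability level \emph{partially}. Your tie-handling remark is exactly where this breaks down: if several strings share the boundary probability and $K$ falls strictly inside that tie class, then there is \emph{no} $I$ for which the number of strings with $\qmf{A}^n(a^n)\ge 2^{-nI}$ equals $K$ (the counting function jumps past $K$), so the top-$K$ set is not a threshold set, and your argument gives no reason why such a $K$ cannot be the global optimum. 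Declaring that ``the threshold description holds for such a code'' asserts the proposition rather than proving it.

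The missing idea is precisely what the paper's proof supplies. It writes a candidate code as $\mset{S}'\cup\mset{S}''$, where $\mset{S}'$ is a complete threshold set and $\mset{S}''$ contains $\ell$ strings of the boundary level, expresses $\diverg{U_K}{\qmf{A}^n}$ as a function of $\ell$, and treats $\ell$ as continuous: the first derivative has a single zero $\ell_0$, and the second derivative is negative there, so the divergence, viewed as a function of $\ell$, has its unique stationary point at a maximum. Consequently, over the integers $\ell\in\{0,1,\dots,\ell_{\max}\}$ the minimum is attained at an endpoint, $\ell=0$ or $\ell=\ell_{\max}$; partial inclusion of a level is never optimal. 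Your exchange step can serve as a preliminary reduction --- it shows an optimal code has the form ``all strings above a threshold plus part of one tie class'' --- but without this variational endpoint argument the proposition is not proved.
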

\begin{proof}
Consider some values $\hat{I}$ and $I$ with $\hat{I}<I$. Define $\mset{S}=\mset{S}'\cup\mset{S}''$ where $\mset{S}' = \{a^n:\crossentrp{\pi_{a^n}}{\qmf{A}}\le\hat{I}  \}$ and $\mset{S}''$ has $\ell$ strings $a^n$ with $\crossentrp{\pi_{a^n}}{\qmf{A}}=I$. We thus have $K=|\mset{S}|=|\mset{S}'|+\ell$ and
\begin{align}
\diverg{U_\numW}{\qmf{A}^n} =
-\log_2(|\mset{S}'|+\ell) + \frac{n|\mset{S}'|}{|\mset{S}'| + \ell} \bar{I}+ \frac{n\ell}{|\mset{S}'| +  \ell} I
\end{align}
where $\bar{I}=\frac{1}{|\mset{S}'|} \sum_{a^n\in\mset{S'}} \crossentrp{\pi_{a^n}}{\qmf{A}}$ and therefore $\bar{I}<I$. Now consider $\ell$ as a continuous variable and compute
\begin{align}
\frac{\partial}{\partial\ell}\diverg{U_\numW}{\qmf{A}^n} &= -\frac{1}{(\ln2)(|\mset{S}'|+\ell)} + \frac{n|\mset{S}'|}{(|\mset{S}'| + \ell)^2} \Delta I\\
\frac{\partial^2}{\partial\ell^2}\diverg{U_\numW}{\qmf{A}^n} &=\frac{1}{(\ln2)(|\mset{S}'|+\ell)^2} - \frac{2n|\mset{S}'|}{(|\mset{S}'|+\ell)^3} \Delta I
\end{align}
with $\Delta I = I -\bar{I}> 0$. The first derivative is zero only at
\begin{align}
\ell_0 =& |\mset{S}'| \left( (\ln2) n \Delta I-1 \right)
\end{align}
which means that there is only one extreme point. Note that $\ell_0$ can be negative but is larger than $-|\mset{S}'|$. The second derivative at $\ell=\ell_0$ evaluates to
\begin{align}
\frac{\partial^2}{\partial\ell^2}\diverg{U_\numW}{\qmf{A}^n}\Big|_{\ell=\ell_0}
=&-\frac{1}{(\ln2)^3|\mset{S}'|^2 n^2 \Delta I^2}
\end{align}
which is negative and therefore the \ac{I-divergence} (assuming $\ell$ is continuous) is maximum at $\ell=\ell_0$.

We now find the integer $\hat{\ell}\in\{0,1,\dots,\ell_{\text{max}}\}$ that minimizes $\diverg{U_\numW}{\qmf{A}^n}$ where $\ell_{\text{max}}$ is the number of length $n$ code strings that have cross entropy $I$. We distinguish three cases.
\begin{itemize}
	\item $\ell_0\in [0,\ell_{\text{max}}]$:
	$\diverg{U_\numW}{\qmf{A}^n}$ increases with $\ell$ for $0\le \ell\le\ell_0$ and decreases with $\ell$ for $\ell_0\le \ell \le \ell_{\text{max}}$.
	\item $\ell_0< 0$:
	$\diverg{U_\numW}{\qmf{A}^n}$ decreases with $\ell$ for $0\le\ell\le\ell_{\text{max}}$.
	\item $\ell_0> \ell_{\text{max}}$:
	$\diverg{U_\numW}{\qmf{A}^n}$ increases with $\ell$ for $0\le\ell\le\ell_{\text{max}}$.
\end{itemize}
In all cases we have $\hat{\ell}=0$ or $\hat{\ell}=\ell_{\text{max}}$.
Thus, the best code has all strings up to cross entropy $\hat{I}$ or $I$.
\end{proof}

Proposition~\ref{prop:DMcode} is certainly not obvious, e.g., it implies that optimal \ac{DM} codes have \emph{all} strings of any empirical \ac{pmf} that they contain.

\begin{figure}[t]
    \centering
    \pgfplotsset{compat=newest}
\pgfplotsset{every axis legend/.append style={%
cells={anchor=west}}
}
\usetikzlibrary{arrows}
\tikzset{>=stealth'}

\begin{tikzpicture}[]
\definecolor{mycolor1}{rgb}{0.00000,0.44700,0.74100}%
\definecolor{mycolor2}{rgb}{0.85000,0.32500,0.09800}%
\definecolor{mycolor3}{rgb}{0.92900,0.69400,0.12500}%
\definecolor{mycolor4}{rgb}{0.49400,0.18400,0.55600}%
\footnotesize
\begin{axis}[width=\columnwidth,
grid=both,
height=\psplotheight,
xmin=0,
xmax=17,
xlabel={codebook size $|\mset{S}|$},
ymin=0.003,
ymax=4.4,
ylabel={I-divergence $\mathbb{D}(P_{\rv{A}^n}\Vert Q^n_{\rv{A}})$},
legend style={legend cell align=left,align=left,draw=white!15!black,anchor=north west,at={(0.03,0.98)},fill opacity=0.7,text opacity = 1,draw opacity = 1}
]\addplot[mycolor1,mark=*,mark options={solid}] coordinates {
(1.0, 0.2960023257751077)
(2.0, 1.4199660824969005)
(3.0, 1.5429915006830086)
(4.0, 1.4819479608577972)
(5.0, 1.3724162416426142)
(6.0, 1.958967338497537)
(7.0, 2.343421704795887)
(8.0, 2.6059117175795894)
(9.0, 2.7899806755909093)
(10.0, 2.921172749708766)
(11.0, 3.015374363055754)
(12.0, 3.4369250546729617)
(13.0, 3.784363015000084)
(14.0, 4.074232249152477)
(15.0, 4.31857642192795)
(16.0, 3.729875474301383)
};
\addlegendentry{$q=0.05$}
\addplot[mycolor2,mark=*,mark options={solid}] coordinates {
(1.0, 0.9378610145480919)
(2.0, 1.1891111848126834)
(3.0, 1.0212320741797245)
(4.0, 0.8147362699449792)
(5.0, 0.6179331920840765)
(6.0, 0.8553988543561193)
(7.0, 0.9905064816666971)
(8.0, 1.0659864402095707)
(9.0, 1.1046031338113576)
(10.0, 1.1194333964015857)
(11.0, 1.1184298912259703)
(12.0, 1.315190719664809)
(13.0, 1.4724218726871556)
(14.0, 1.5992567005783145)
(15.0, 1.7023043879273816)
(16.0, 1.3172366104741622)
};
\addlegendentry{$q=0.15$}
\addplot[mycolor4,mark=*,mark options={solid}] coordinates {
(1.0, 1.5082785963192933)
(2.0, 1.3798908886382373)
(3.0, 1.0854658186900625)
(4.0, 0.8156970347977097)
(5.0, 0.5809301691422419)
(6.0, 0.6665406802360256)
(7.0, 0.6931803424192755)
(8.0, 0.6873093271166537)
(9.0, 0.6626530410608322)
(10.0, 0.6268649199249747)
(11.0, 0.5844463735189249)
(12.0, 0.6834216879414603)
(13.0, 0.7579112521807816)
(14.0, 0.8138247182579179)
(15.0, 0.8554072253681664)
(16.0, 0.5589216194355977)
};
\addlegendentry{$q=0.23$}
\end{axis}

\end{tikzpicture}
    \caption{\ac{I-divergence} \eqref{eq:I-div1} vs. $|\mset{S}|$ for $n = 4$ and various $q$.}
    \label{fig:DMcode}
\end{figure}

\begin{example}
For $|\mset{A}|=2$ there are only $n+1$ possible optimal codes although $K$ ranges from 1 to $2^n$. Fig.~\ref{fig:DMcode}
shows the \ac{I-divergence} behavior for a binary alphabet, block length $n=4$, and various $q$. The minimal \ac{I-divergence} is achieved at one of the $n+1=5$ values $K=1,5,11,15,16$.
\end{example}

Proposition~\ref{prop:DMcode} helps to prove the following basic result for binary strings.

\begin{theorem}[{\cite{schulte2017divergence}}] \label{theorem:DM-divergence}
Binary DM codes and encoders that minimize the \ac{I-divergence} have $\diverg{P_{\rv{A}^n}}{\qmf{A}^n}$ that grows as $\frac12 \log_2 n$ with $n$. Moreover, for binary alphabets \ac{CCDM} achieves this growth.
\end{theorem}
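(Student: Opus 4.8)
The plan is to combine Proposition~\ref{prop:DMcode} with the binomial estimates of Sec.~\ref{sec:preliminaries} to isolate the leading term of the optimal I-divergence, and then to obtain the ``moreover'' claim by specializing the CCDM bound~\eqref{eq:Aary-CCDM-D-bound2}. Write $\qmf{A}=[q,1-q]$ and assume without loss of generality that $q<1/2$, since the uniform case $q=1/2$ gives zero I-divergence. First I would apply Proposition~\ref{prop:DMcode}: the minimizing code collects all strings whose self-information is below a threshold. For a binary alphabet every string with exactly $k$ occurrences of the less-likely symbol has the same probability $q^k(1-q)^{n-k}$, so the optimal code is a Hamming ball $\mset{S}=\{a^n:n_1(a^n)\le k^\ast\}$ and the problem reduces to optimizing the single integer $k^\ast$. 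Writing $p=k^\ast/n$, I would use~\eqref{eq:I-div-DM} to express the I-divergence as $n\,\crossentrp{\bar P}{\qmf{A}}-\log_2 K$, with $K=\sum_{k=0}^{k^\ast}\binom{n}{k}$ and $\bar P=[\bar k/n,\,1-\bar k/n]$, where $\bar k=\frac1K\sum_{k=0}^{k^\ast}k\binom{n}{k}$ is the average count of the less-likely symbol.

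The crux is to control both terms to within an additive $O(1)$. For $\log_2 K$ I would use Lemma~\ref{lemma:binomial-bound2} to get $\log_2 K=\log_2\binom{n}{k^\ast}+O(1)$, because the ball and its largest shell differ only by the factor $\alpha$, which is bounded whenever $p$ is bounded away from $1/2$; Lemma~\ref{lemma:binomial-bound1} then yields $\log_2 K=n\,h(p)-\frac12\log_2 n+O(1)$. The harder step, which I expect to be the main obstacle, is the average count $\bar k$. Here I would feed the identity of Lemma~\ref{lemma:binomial-identity}, $\frac{n}{2}K-\sum_{k=0}^{k^\ast}k\binom{n}{k}=\frac{n-k^\ast}{2}\binom{n}{k^\ast}$, into the ratio bounds of Lemma~\ref{lemma:binomial-bound2} to obtain $\bar k=\frac{n}{2}-\frac{n-k^\ast}{2}\cdot\frac{\binom{n}{k^\ast}}{K}=np+O(1)$. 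The subtlety is keeping this $O(1)$ uniform, which again forces $p$ to stay bounded away from $0$ and $1/2$; this is harmless since the optimizer will sit near the fixed interior point $p=q$. From $\bar k=np+O(1)$ and~\eqref{eq:X-H-D} the cross-entropy term is $n\,\crossentrp{\bar P}{\qmf{A}}=n\,\crossentrp{[p,1-p]}{\qmf{A}}+O(1)=n\big(h(p)+\diverg{[p,1-p]}{\qmf{A}}\big)+O(1)$.

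Assembling the two estimates gives the clean formula $\diverg{P_{\rv{A}^n}}{\qmf{A}^n}=n\,\diverg{[p,1-p]}{\qmf{A}}+\frac12\log_2 n+O(1)$. The lower bound is then immediate: dropping the nonnegative divergence term shows that every bulk threshold code, hence the optimal code, has I-divergence at least $\frac12\log_2 n-O(1)$, while the boundary choices $k^\ast$ near $0$ or $n/2$ give a strictly larger, linearly growing I-divergence and cannot compete. For the upper bound I would take $k^\ast=\lfloor nq\rfloor$, so that $|p-q|\le 1/n$; since $\diverg{[p,1-p]}{\qmf{A}}$ has a zero with vanishing derivative at $p=q$, it is $O(1/n^2)$ there, whence $n\,\diverg{[p,1-p]}{\qmf{A}}\to0$ and the I-divergence is $\frac12\log_2 n+O(1)$. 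The matching bounds establish the $\frac12\log_2 n$ growth of the optimal binary DM.

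Finally, the ``moreover'' statement follows by specializing~\eqref{eq:Aary-CCDM-D-bound2} to $|\mset{A}|=2$, so that $c=p(1-p)$ and $\diverg{P_{\rv{A}^n}}{\qmf{A}^n}=\frac12\log_2 n+O(1)+\diverg{[p,1-p]}{\qmf{A}}$; choosing $p=\lfloor nq\rfloor/n$ kills the last term and shows the CCDM I-divergence also grows as $\frac12\log_2 n$, matching the optimum. The structural reason this works only for binary alphabets is precisely Lemma~\ref{lemma:binomial-bound2}: the Hamming ball and its dominant shell are within a constant factor, so the constant-composition CCDM loses nothing in the pre-log. For larger alphabets the ball is exponentially larger than any single type, which is exactly what motivates the MLF and LLF constructions developed later.
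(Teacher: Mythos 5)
Your proposal is correct and takes essentially the same route as the paper's Appendix~\ref{app:DM-binary}: Proposition~\ref{prop:DMcode} reduces the search to Hamming-ball codes, Lemmas~\ref{lemma:binomial-identity}, \ref{lemma:binomial-bound1} and~\ref{lemma:binomial-bound2} yield $\bar{p}=p+O(1/n)$ (the paper's Lemma~\ref{lemma:PropertiesOfPck}) and $\log_2 K = n h(p)-\tfrac12\log_2 n+O(1)$, and CCDM with $p=\lfloor nq\rfloor/n$ achieves the growth. The only minor deviations are cosmetic: you control the cross-entropy correction directly via its linearity in the first argument where the paper instead bounds $n\left[h(p)-h(\bar{p})\right]$ through Lemma~\ref{lemma:entropy-inequality2}, and your informal dismissal of the boundary regimes ($k^\ast$ near $0$ or above $n/2$) is exactly what the paper makes precise in Lemma~\ref{lemma:DM-binary-p1}.
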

\begin{proof}
See Appendix~\ref{app:DM-binary}.
\end{proof}
%

\section{MLF and LLF Algorithms}
\label{sec:MLF-and-LLF}
Since \ac{DM} cannot achieve low divergence in general, we now study one-to-many mappings. We propose two algorithms that generate encoder sets $\mset{S}_w$, $w=1,\dots,K$. Consider a code $\mset{S}$ and initialize $\mset{S}_w=\emptyset$ for all $w$. Order the strings in $\mset{S}$ from the most likely to the least likely. We consider two greedy approaches to populate the $\mset{S}_w$:
\begin{itemize}
\item \emph{Most-Likely First (\ac{MLF}):} Successively insert the \emph{most}-likely string into the set that has accumulated the least probability.
\item \emph{Least-Likely First (\ac{LLF}):} Successively insert the \emph{least}-likely string into the set that has accumulated the least probability.
\end{itemize}
The \ac{MLF} and \ac{LLF} approaches are specified in Algorithm~\ref{alg:MLF-LLF} where the choice of algorithm is reflected in steps 5 to 9.

\begin{algorithm}[t]
\caption{MLF and LLF Algorithms}
\label{alg:MLF-LLF}
\begin{algorithmic}[1]
\Procedure{Partition}{$\mset{S}$, $K$, $\ptarget$, Algo}
\State $\mset{S}_w \gets \emptyset$,\quad $w=1,\ldots,K$
\State Sort $\mset{S}=\{a_1^n,\dots,a_{|\mset{S}|}^n\}$ so that $\qmf{A}^n(a_i^n) \ge \qmf{A}^n(a_j^n)$ for $i\le j$
\While{$\mset{S} \ne \emptyset$}
\If{Algo = MLF}
\State  $a^n\gets\text{first\_string}(\mset{S})$
\Else
\State $a^n\gets\text{last\_string}(\mset{S})$
\EndIf
\State $w\gets \arg \min_k \ptarget^n(\mset{S}_k)$
\State $\mset{S}_w \gets \mset{S}_w \cup \{a^n\}$
\State $\mset{S}\gets\mset{S}\setminus\{a^n\}$
\EndWhile
\State \textbf{return} $\mset{S}_1,\ldots,\mset{S}_K$ 
\EndProcedure
\end{algorithmic}
\end{algorithm}

\subsection{One Bit of Information per String}
Consider transmitting one bit of information so that \eqref{eq:I-div1} is
$\diverg{U_2}{[q_{\binset{1}},q_{\binset{2}}]}$. The \ac{MLF} and \ac{LLF} algorithms are not optimal in general. 

\begin{example}
Suppose the string probabilities are $[\frac{3}{10},\frac{2}{10}, \frac{1}{8},\frac{1}{8},\frac{1}{8},\frac{1}{8}]$. Both algorithms arrive at the set probabilities $[q_{\binset{1}},q_{\binset{2}}]=[11/20,9/20]$ but here it is best to group the strings to obtain $[q_{\binset{1}},q_{\binset{2}}]=[1/2,1/2]$.
\end{example}

The \ac{LLF} algorithm suggests a simple upper bound on $\diverg{U_2}{[q_{\binset{1}},q_{\binset{2}}]}$. The worst case has both sets equally likely just before inserting the last (most probable) string. For $\mset{S} = \supp(\ptarget)^n$, this string has $n$ occurrences of the most probable letter(s), i.e., its probability is $q_{\rm max}^n$ where $q_{\rm max}$ is the largest probability of any letter. The worst case \ac{pmf} is thus $[(1+q_{\rm max}^n)/2,(1-q_{\rm max}^n)/2]$ and we have
\begin{align}
  \diverg{U_2}{[q_{\binset{1}},q_{\binset{2}}]}
  & = \frac{1}{2} \log_2 \frac{1}{1-q_{\rm max}^{2n}} \nonumber \\
  & \le \frac{1}{2\ln2} \frac{q_{\rm max}^{2n}}{1-q_{\rm max}^{2n}}
  \label{eq:Dbound}
\end{align}
where the bound follows by $\ln(1+x)\le x$. The relation~\eqref{eq:Dbound} means that we can encode one bit of information with exponentially decreasing \ac{I-divergence} in $n$. However, from \eqref{eq:Rinfo+Rrng-lower} and $\Rinfo=1/n$ we find that the \ac{RNG} rate must satisfy $\Rrng\ge\Hrng\ge\entrp{\qmf{A}}-1/n$.

\subsection{MLF Encoder Properties}
We first develop a special property of the \ac{MLF} algorithm.

\begin{definition}[Pareto-optimal sets]
The assignment of strings to sets is Pareto-optimal if moving any individual string from one set to another does not decrease the \ac{I-divergence} \eqref{eq:I-div1}.
\end{definition}

\begin{proposition}
The \ac{MLF} algorithm generates Pareto-optimal sets. The \ac{LLF} algorithm does not generate Pareto-optimal sets in general.
\end{proposition}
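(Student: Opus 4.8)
The plan is to reduce Pareto-optimality to a single scalar inequality per (string, set) pair, then to verify that inequality for \ac{MLF} and to violate it for \ac{LLF}. Since $U_\numW(w)=1/\numW$, the \ac{I-divergence} \eqref{eq:I-div1} equals
\begin{align}
\diverg{U_\numW}{[q_{\binset{1}},\ldots,q_{\binset{\numW}}]} = -\log_2 \numW - \frac{1}{\numW}\sum_{w=1}^{\numW} \log_2 q_{\binset{w}} .
\end{align}
Because $\sum_w q_{\binset{w}}=\qmf{A}^n(\mset{S})$ is fixed by the code, minimizing \eqref{eq:I-div1} is equivalent to maximizing $\sum_w \log_2 q_{\binset{w}}$. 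Moving a single string $a^n$ with $p=\qmf{A}^n(a^n)>0$ from $\binset{i}$ to $\binset{j}$ alters only two terms and changes this objective by $\log_2\left[(q_{\binset{i}}-p)(q_{\binset{j}}+p)/(q_{\binset{i}}q_{\binset{j}})\right]$. Expanding $(q_{\binset{i}}-p)(q_{\binset{j}}+p)\le q_{\binset{i}}q_{\binset{j}}$ reduces to $p(q_{\binset{i}}-q_{\binset{j}}-p)\le 0$, so the move does not decrease the \ac{I-divergence} if and only if $q_{\binset{i}}-p\le q_{\binset{j}}$. Hence an assignment is Pareto-optimal exactly when, for every string $a^n\in\binset{i}$ and every $j\ne i$, removing $a^n$ leaves its set with probability at most that of every other set:
\begin{align}
q_{\binset{i}} - \qmf{A}^n(a^n) \le q_{\binset{j}} .
\end{align}

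First I would verify this condition for \ac{MLF}. The key is the ordering property of Algorithm~\ref{alg:MLF-LLF}: \ac{MLF} processes strings in nonincreasing probability, so the \emph{last} string placed into $\binset{i}$ has the minimum probability $p_\ast=\min_{a^n\in\binset{i}}\qmf{A}^n(a^n)$ among strings in $\binset{i}$, and no further string is added to $\binset{i}$ afterward. Thus $q_{\binset{i}}=b+p_\ast$, where $b$ is the accumulated probability of $\binset{i}$ immediately before this last insertion. At that instant the algorithm selected $\binset{i}$ as a set of least accumulated probability, so $b\le q_{\binset{j}}^{\rm(then)}$ for every $j$; since each set's probability is nondecreasing as the run proceeds, $b\le q_{\binset{j}}$ for the final values. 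Therefore $q_{\binset{i}}-p_\ast=b\le q_{\binset{j}}$, and for any other string $a^n\in\binset{i}$ with $p=\qmf{A}^n(a^n)\ge p_\ast$ we get $q_{\binset{i}}-p\le q_{\binset{i}}-p_\ast\le q_{\binset{j}}$. This holds for every $i$ and every $j\ne i$, so \ac{MLF} is Pareto-optimal.

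Finally I would exhibit a counterexample for \ac{LLF}. Take the binary target $\qmf{A}=[0.7,0.3]$, block length $n=2$, and $\numW=2$, so the four strings have probabilities $0.49,0.21,0.21,0.09$. Running \ac{LLF} (least-likely first into the least-filled set) places $0.09$ and one copy of $0.21$ in the first set and the other $0.21$ together with $0.49$ in the second, giving $q_{\binset{1}}=0.30$ and $q_{\binset{2}}=0.70$. The string of probability $0.21$ in $\binset{2}$ then satisfies $q_{\binset{2}}-0.21=0.49>0.30=q_{\binset{1}}$, violating the Pareto condition; indeed, moving it lowers \eqref{eq:I-div1}, since the product $q_{\binset{1}}q_{\binset{2}}$ rises from $0.21$ to $0.51\cdot0.49=0.2499$. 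Hence \ac{LLF} is not Pareto-optimal in general.

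The routine part is the per-move algebra of the first paragraph. The crux is the \ac{MLF} argument, where the main thing to get right is that the minimum-probability string of each set is inserted \emph{last} (so there are no later additions to that set) and that competing sets' probabilities only grow; together these pin down $q_{\binset{i}}-p_\ast=b$ and bound it by every $q_{\binset{j}}$.
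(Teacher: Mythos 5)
Your proposal is correct and follows essentially the same route as the paper's proof: the same per-move algebra reducing Pareto-optimality to $q_{\binset{i}}-\qmf{A}^n(a^n)\le q_{\binset{j}}$, the same argument that the last (minimum-probability) string inserted into a set under \ac{MLF} bounds the gap between set probabilities, and an explicit \ac{LLF} counterexample. The only cosmetic difference is the counterexample instance: you use the product pmf $[0.7,0.3]^2$ with $n=2$, while the paper uses the string probabilities $[0.8,0.1,0.1]$ with $K=2$; both are valid.
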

\begin{proof}
Consider first the \ac{LLF} algorithm with the ordered string probabilities $[0.8,0.1,0.1]$ and $K=2$. \ac{LLF} assigns the third and first strings to one set and the second string to the other. By moving the third string (i.e., the first string that \ac{LLF} assigns) to the second set we obtain a better encoder.

Consider next the \ac{MLF} algorithm. Moving $a^n$ from $\mset{S}_v$ to $\mset{S}_w$ is Pareto efficient for \eqref{eq:I-div1} if and only if
\begin{align}
    \log_2\frac{1}{(q_{\mset{S}_v}-\ptarget^n(a^n))(q_{\mset{S}_w}+\ptarget^n(a^n))} \le \log_2\frac{1}{q_{\mset{S}_v}q_{\mset{S}_w}}
\end{align}
which is equivalent to $q_{\mset{S}_v} - q_{\mset{S}_w}\ge\ptarget^n(a^n)$.
Assuming $q_{\mset{S}_v} > q_{\mset{S}_w}$, the difference $q_{\mset{S}_v} - q_{\mset{S}_w}$   is at most the probability of the last string $\tilde{a}^n$ that was assigned to $\mset{S}_v$. Otherwise $\tilde{a}^n$ would have been assigned to $\mset{S}_w$. Therefore, moving $\tilde{a}^n$ from $\mset{S}_v$ to $\mset{S}_w$ does not improve \ac{I-divergence} \eqref{eq:I-div1}. All other strings in $\mset{S}_v$ have at least the same probability as $\tilde{a}^n$.
\end{proof}

Next, consider the code $\mset{S}$ and let $p_i$, $i=1,\ldots,|\mset{S}|$, be the probabilities of the ordered strings in $\mset{S}$, i.e., we have $p_i\ge p_j$ for $i\le j$. Let $\Delta_i$ be the difference in probability of the most likely set and least likely set after the $i$th (most likely) string from $\mset{S}$ has been assigned to a message. 

\begin{lemma} \label{lemma:MLF}
\ac{MLF} has $\Delta_i \le p_1$ for all $i \ge 0$.
\end{lemma}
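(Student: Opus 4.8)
The plan is to prove the claim by induction on $i$, tracking only the smallest and largest accumulated set probabilities rather than the full configuration. The base case $i=0$ is immediate: all sets are empty, so $\Delta_0=0\le p_1$. For the inductive step I would suppose $\Delta_{i-1}\le p_1$ and analyze the insertion of the $i$th string, whose probability is $p_i$. Let $m$ and $M$ denote the smallest and largest accumulated set probabilities (with respect to $\qmf{A}^n$) just before this insertion, so that $\Delta_{i-1}=M-m$. By construction, MLF places the string into a least-likely set, raising its accumulated probability from $m$ to $m+p_i$ while leaving all other sets untouched.

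The key observation is that the new minimum $m'$ satisfies $m'\ge m$: every set other than the updated one already had probability at least $m$, and the updated set now has probability $m+p_i\ge m$. It then remains to bound the new maximum, and I would split into two cases according to whether the freshly filled set overtakes the old maximum.

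First, if $m+p_i\le M$, the maximum is unchanged, so $\Delta_i=M-m'\le M-m=\Delta_{i-1}\le p_1$ by the induction hypothesis. Second, if $m+p_i>M$, the updated set becomes the new maximum, so $\Delta_i=(m+p_i)-m'\le (m+p_i)-m=p_i$. Since MLF processes the strings in order of non-increasing probability, $p_i\le p_1$, and the bound $\Delta_i\le p_1$ follows in both cases.

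I expect the mechanics to be routine; the one point requiring care is the second case, where the newly filled set jumps to become the maximum. There the induction hypothesis $\Delta_{i-1}\le p_1$ cannot be invoked directly, and one must instead combine $m'\ge m$ with the ordering fact $p_i\le p_1$ to obtain the bound. I would also verify that each displayed expression really equals the max-minus-min gap $\Delta_i$ (in particular that it is nonnegative), which holds because $M\ge m'$ in the first case and $m+p_i>M\ge m'$ in the second.
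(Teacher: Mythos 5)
Your proof is correct and follows essentially the same route as the paper's: induction on $i$ with a two-case split according to whether the newly filled set overtakes the previous maximum, yielding $\Delta_i\le\Delta_{i-1}$ in one case and $\Delta_i\le p_i\le p_1$ in the other. Your explicit tracking of $m$, $M$, and $m'$ just makes the paper's slightly terser argument more transparent.
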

\begin{proof}
We have $\Delta_0=0\le p_1$ and proceed by induction. Suppose that $\Delta_{i-1}\le p_1$ and $i\ge1$.

Consider first the case $p_i\ge \Delta_{i-1}$ so that the set to which string $i$ is assigned will have the most accumulated probability. We thus have $\Delta_i\le p_i$ with equality if the probability of the two least likely sets was the same before assigning string $i$. But then $\Delta_i\le p_1$ by the string ordering.

Consider next the case  $p_i< \Delta_{i-1}$ so that the most likely set did not change. Now we have $\Delta_i\le \Delta_{i-1}$ with equality if the two least likely sets was the same before assigning string $i$. But then we have $\Delta_i\le p_1$ by the inductive hypothesis.
\end{proof}

\subsection{LLF Encoder Properties}
We begin with an observation concerning the \ac{LLF} encoder.

\begin{proposition}
\ac{LLF} assigns the $(K-i)$-th string of the ordered list to the set $\mset{S}_{(i\!\!\mod K) + 1}$.
\label{prop:LLF-assignment}
\end{proposition}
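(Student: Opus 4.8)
The plan is to show that LLF assigns strings to the $K$ sets in a \emph{round-robin} fashion, and that the stated index formula is precisely this cyclic pattern under a fixed labeling of the sets. Since LLF removes the least likely remaining string at each step, it processes the strings of $\mset{S}$ in non-decreasing order of probability; write $\tilde p_1\le\tilde p_2\le\cdots\le\tilde p_{|\mset{S}|}$ for these probabilities in processing order (so the $t$-th processed string is the $(|\mset{S}|-t+1)$-th string of the sorted list), and assume without loss of generality that all these probabilities are positive, since zero-probability strings do not affect \eqref{eq:I-div1}. I would first reduce the claim to the following: the $t$-th processed string is placed in the set occupying \emph{cyclic position} $((t-1)\bmod K)+1$, where the cyclic order is the one fixed during the first $K$ assignments, and the proposition's labeling is the one in which $\mset{S}_K$ collects the most likely string of each consecutive block of $K$ strings and $\mset{S}_1$ the least likely, so that the accumulated set probabilities are non-decreasing in the set index.

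For the base case, each of the first $K$ steps has at least one empty set whose accumulated probability $0$ is strictly below every non-empty set, so $\arg\min_k\ptarget^n(\mset{S}_k)$ selects a fresh empty set and, after $K$ steps, every set holds exactly one string. This fixes the cyclic order, and the set in cyclic position $k$ then holds probability $\tilde p_k$. The induction hypothesis I would carry is: after $m$ complete rounds the set in cyclic position $k$ has accumulated probability $\Pi^{(m)}_k=\sum_{s=0}^{m-1}\tilde p_{sK+k}$, and $\Pi^{(m)}_1\le\Pi^{(m)}_2\le\cdots\le\Pi^{(m)}_K$; the monotonicity is immediate from $\tilde p_{sK+k}\le\tilde p_{sK+k+1}$ summed over $s$.

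The crux is the inductive step. While processing the round-$(m{+}1)$ strings $\tilde p_{mK+1}\le\cdots\le\tilde p_{mK+K}$, I must show that when $\tilde p_{mK+j}$ is about to be placed, the cyclic-position-$j$ set is the (tie-broken) minimizer. At that instant positions $1,\dots,j-1$ have been incremented to $\Pi^{(m)}_l+\tilde p_{mK+l}$ while positions $j,\dots,K$ still hold $\Pi^{(m)}_l$. Against an un-incremented position $l>j$, minimality is exactly $\Pi^{(m)}_j\le\Pi^{(m)}_l$, i.e.\ the hypothesis. Against an incremented position $l<j$ I need $\Pi^{(m)}_j-\Pi^{(m)}_l\le\tilde p_{mK+l}$, which I would prove by telescoping: since $l\ge1$ and $j\le K$ we have $sK+j\le(s+1)K+l$, hence $\tilde p_{sK+j}\le\tilde p_{(s+1)K+l}$, and therefore
\[
\Pi^{(m)}_j-\Pi^{(m)}_l=\sum_{s=0}^{m-1}\bigl(\tilde p_{sK+j}-\tilde p_{sK+l}\bigr)\le\sum_{s=0}^{m-1}\bigl(\tilde p_{(s+1)K+l}-\tilde p_{sK+l}\bigr)=\tilde p_{mK+l}-\tilde p_l\le\tilde p_{mK+l}.
\]
This termwise-then-telescope comparison is the one genuinely non-obvious point: the naive bound $\tilde p_{sK+j}\le\tilde p_{(s+1)K}$ is also valid but does not telescope, so shifting to the index $(s+1)K+l$ is exactly what collapses the sum. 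The same inequalities cover an incomplete final round (only already-processed indices appear), so round-robin holds throughout.

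Finally I would translate the round-robin conclusion into the formula. The $t$-th processed string is the $(|\mset{S}|-t+1)$-th sorted string and lands in cyclic position $((t-1)\bmod K)+1$; writing the sorted index as $K-i$ converts this into $(i\bmod K)+1$ after relabeling the sets so that $\mset{S}_K$ collects the most likely string of each block (the constant offset $|\mset{S}|\bmod K$ between ``cyclic position'' and ``set index'' is absorbed into this labeling). I would close by remarking that the first-round ties, which are all that fix the origin of the cycle, merely permute the labels $\mset{S}_1,\dots,\mset{S}_K$ and hence leave the multiset $\{\ptarget^n(\mset{S}_1),\dots,\ptarget^n(\mset{S}_K)\}$, and therefore the I-divergence \eqref{eq:I-div1}, unchanged. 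The main obstacle is purely the telescoping inequality above; the base case, monotonicity, and the closing index bookkeeping are routine.
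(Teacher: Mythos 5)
Your proof is correct, but it takes a genuinely different route from the paper's. The paper's argument is a two-line invariant: at every step of LLF (including mid-round), the gap between the most probable and least probable set is at most the probability of the next string to be assigned — the same induction that proves Lemma~\ref{lemma:LLF} — hence the set receiving the next string overtakes all others, and with the convention that a newly incremented set is ordered last among the most probable sets, the sets are served in a fixed cyclic order, which is exactly the round-robin pattern of the proposition. You instead posit the round-robin pattern, write down the resulting closed-form accumulated probabilities $\Pi^{(m)}_k=\sum_{s=0}^{m-1}\tilde p_{sK+k}$, and verify self-consistency round by round: against not-yet-incremented positions you use monotonicity in $k$, and against already-incremented positions you use the index shift $sK+j\le (s+1)K+l$ followed by telescoping to obtain $\Pi^{(m)}_j-\Pi^{(m)}_l\le \tilde p_{mK+l}$. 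Your version is more explicit — it exhibits the set probabilities in closed form, which the paper never writes down — at the cost of the telescoping trick; the paper's version is terser and shares its machinery with Lemma~\ref{lemma:LLF}. One imprecision in your closing remark: ties are not confined to the first round (if many strings share the same probability, the $\arg\min$ is ambiguous in later rounds too), so the proposition as stated only holds under a tie-breaking convention; you do invoke this in the inductive step as the ``tie-broken minimizer,'' and the paper states it explicitly by ordering the newly filled set last among the most probable sets. That is a caveat both proofs require, not a gap in yours.
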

\begin{proof}
At any step of the \ac{LLF} Algorithm, the difference of the most likely set probability and the least likely set probability is at most the probability of the next string to assign. After the assignment, the least probable set becomes (one of) the most probable set(s). In case of a tie, we order the new set last among the most probable sets.
\end{proof}

We remark that \ac{LLF} lets the decoder calculate the position in the ordered list and apply a modulo operation on the list. Algorithms that can accomplish this task include enumerative source encoding \cite{cover1973enumerative} and shell mapping \cite{laroia1994optimal,khandani1993shaping}.

Let $\Delta_i$ again be the difference in probability of the most likely set and least likely set after the $i$th (least likely) string from $\mset{S}$ has been assigned to a message. 

\begin{lemma} \label{lemma:LLF}
\ac{LLF} has $\Delta_i \le p_{K-i+1}$ for all $i \ge 0$. In particular, \ac{LLF} has $\Delta_i \le p_1$ for all $i\ge0$.
\end{lemma}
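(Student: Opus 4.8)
The plan is to reuse the one-step mechanism behind Proposition~\ref{prop:LLF-assignment} and sharpen it so that $\Delta_i$ can be read off directly. The engine of \ac{LLF} is that each insertion makes the just-filled set (weakly) the most probable, and that the gap between the most and least probable sets \emph{immediately after} an insertion is at most the probability of the string that was inserted. I would prove this as an invariant by induction on the insertion count. Assume that just before the $i$-th insertion every set has accumulated mass at least $L$ (so $L$ is the current minimum) and that the current gap between the most and least probable sets is at most the probability $p_s$ of the string $s$ about to be inserted. \ac{LLF} places $s$ into a least probable set, raising its mass to $L+p_s$. The old maximum is at most $L+p_s$ by the gap hypothesis, so the filled set becomes the new maximum with mass $L+p_s$; the new minimum $L'$ is the least mass among the other sets and satisfies $L'\ge L$ because $L$ was the overall minimum. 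Hence the gap right after the $i$-th insertion is $(L+p_s)-L'\le p_s$. Since the next string $s'$ has $p_{s'}\ge p_s$ by the least-likely-first ordering, the gap hypothesis is restored for step $i+1$, closing the induction.

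Next I would identify the string inserted at step $i$. Because \ac{LLF} traverses the sorted list from the least likely to the most likely, the $i$-th inserted string is the $i$-th least likely string in $\mset{S}$, with probability $p_{N-i+1}$, where $N=|\mset{S}|$. The one-step bound just derived then gives $\Delta_i\le p_{N-i+1}$ for every $i\ge1$, while $\Delta_0=0$ settles the base case. Both claimed inequalities now follow from monotonicity of the sorted probabilities. Each of the $K$ sets must be non-empty, since an empty $\mset{S}_w$ forces $q_{\mset{S}_w}=0$ and makes the \ac{I-divergence} in \eqref{eq:I-div1} infinite; hence $N\ge K$. Then $N-i+1\ge K-i+1$ gives $p_{N-i+1}\le p_{K-i+1}$, and $N-i+1\ge1$ gives $p_{N-i+1}\le p_1$, which is the ``in particular'' statement valid for all $i$.

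The delicate point, and the only place requiring care, is the invariant itself: one must check that after the insertion the filled set is genuinely (weakly) the maximum and that the new gap does not exceed the next string's probability. Both hinge on the least-likely-first ordering (which guarantees $p_{s'}\ge p_s$) together with the tie-breaking convention of Proposition~\ref{prop:LLF-assignment} that lists a newly filled set last among the maxima; this convention is what preserves the clean round-robin assignment and prevents a set from being chosen twice in a row. Everything past the invariant is bookkeeping with the monotone sequence $(p_j)$.
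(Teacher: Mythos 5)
Your proof is correct and takes essentially the same route as the paper's: induction on the insertion count with the invariant that the max--min gap is bounded by the probability of the string just inserted (hence also of the one about to be inserted), preserved precisely because LLF inserts strings in non-decreasing probability order; your single-case formulation even subsumes the paper's second case, which is vacuous for LLF, and your explicit $N=|\mset{S}|\ge K$ bookkeeping cleans up an index conflation the paper glosses over. One minor quibble: the invariant does not actually require the tie-breaking convention of Proposition~\ref{prop:LLF-assignment} that you invoke---inserting into \emph{any} least-probable set gives new maximum $L+p_s$ and new minimum at least $L$---so that stated dependence is an overstatement, though it is harmless since your argument never uses it.
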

\begin{proof}
Define $p_{K+1}=0$. We have $\Delta_0=0\le p_{K+1}$ and proceed by induction. Suppose $\Delta_{i-1}\le p_{K-i+2}$ and $i\ge1$.

Consider first the case $p_{K-i+1}\ge \Delta_{i-1}$ so that the set to which string $K-i+1$ is assigned will have the most accumulated probability. We thus have $\Delta_i\le p_{K-i+1}$ with equality if the probability of the two least likely sets was the same before assigning string $i$.

Consider next the case  $p_{K-i+1}< \Delta_{i-1}$ so that the most likely set did not change. Now we have $\Delta_i\le \Delta_{i-1}$ with equality if the two least likely sets was the same before assigning string $i$. But then we have $\Delta_i\le p_{K-i+1}$ by the string ordering.
\end{proof}

\subsection{Achievable Rates}
\label{subsec:MLF-LLF-entropy}
We next analyze the information rate and \ac{I-divergence} \eqref{eq:I-div1} of the \ac{MLF} and \ac{LLF} algorithms. This section treats binary alphabets for simplicity and Appendix~\ref{app:MLF-LLF-general-alphabets} treats general alphabets. Suppose that $\ptarget(1) = q = 1-\ptarget(0) < 1/2$.

To prove our main result in Theorem~\ref{thm:main-result} below, we will use the code $\mset{S}=\mset{T}_\epsilon(\qmf{A})$. However, to facilitate the development and to gain insight, consider first the code $\mset{S}$ of binary strings with at most $n-k$ zeros. This means that the most likely string has probability $p_1=q^k (1-q)^{n-k}$. We remark that Proposition~\ref{prop:best-set} lets one reduce the \ac{I-divergence} \eqref{eq:I-div1} by later assigning the remaining strings in $\supp(\ptarget)^n$. The \ac{LLF} algorithm fits naturally into this framework. The \ac{MLF} algorithm does not fit, strictly speaking, because if we begin the \ac{MLF} assignment with the strings with $n-k$ zeros and later add the remaining strings in $\supp(\ptarget)^n$ then we do not have an \ac{MLF} algorithm. We will see, however, that the remaining strings can have an accumulated probability that vanishes exponentially in $n$, so the distinction makes little difference.

Consider the code $\mset{S}$ as specified and the \ac{iid} string $\rv{A}^n$ where $\rv{A}_1$ has \ac{pmf} $\qmf{A}$. We may write
\begin{align}
  q_{\mset{S}} = \Pr\left[ \rv{S}_n \le \frac{n-k}{n} \right]
\end{align}
where $\rv{S}_n=\sum_{i=1}^n (1-\rv{A}_i)/n$. Lemma~\ref{lemma:hoeffding} with $(k-1)/n<q$ gives the exponentially decaying bound
\begin{align}
    1-q_{\mset{S}}
    &= \Pr\left[ \rv{S}_n - (1-q) \ge q - \frac{k-1}{n} \right] \nonumber \\
    &\le \exp\left(-2n\left(q-\frac{k-1}{n}\right)^2\right).
    \label{eq:hoeffding-MLF-LLF}
\end{align}
By Lemmas~\ref{lemma:MLF} and~\ref{lemma:LLF}, we have $\Delta_{|\mset{S}|}\le q^{k}(1-q)^{n-k}$
and
\begin{align}
  & K \left(\max_w q_{\binset{w}} - q^{k}(1-q)^{n-k} \right) \nonumber \\
  & \quad \le q_{\mset{S}} \le K \left( \min_w q_{\binset{w}} + q^{k}(1-q)^{n-k} \right) . \label{eq:qsetbound}
\end{align}
We thus have 
\begin{align}
  & \frac{q_{\mset{S}}}{K}-q^{k}(1-q)^{n-k} \le q_{\binset{w}} \le \frac{q_{\mset{S}}}{K}+q^{k}(1-q)^{n-k}
  \label{eq:qsetbound2}
\end{align}
for all $w=1,\dots,K$ and
\begin{align}
  & \diverg{U_{\numW}}{[q_{\binset{1}},\ldots,q_{\binset{\numW}}}] \nonumber \\
  &\qquad\le \sum_{w} \frac{1}{\numW} \log_2\frac{1/\numW}{q_{\mset{S}}/\numW-q^{k}(1-q)^{n-k}} \nonumber \\
  &\qquad= \log_2\frac{1}{1-\left[(1-q_{\mset{S}})+\numW q^{k}(1-q)^{n-k}\right]} \label{eq:divergence_simple1} \\
  &\qquad\overset{(a)}{\le} 2\left[(1-q_{\mset{S}})+\numW q^{k}(1-q)^{n-k}\right] \label{eq:divergence_simple}
\end{align}
where $(a)$ follows if the term in square brackets is at most $1/2$, since $-\log_2(1-x)\le 2x$ if $0\le x\le1/2$.

Consider the two summands in \eqref{eq:divergence_simple}. We have already seen that the term $1-q_{\mset{S}}$ vanishes exponentially in $n$ as long as $q>(k-1)/n$. In particular, neglecting quantization issues, we set $k=n(q-\epsilon)$ for small $\epsilon$ and~\eqref{eq:hoeffding-MLF-LLF} gives
\begin{equation}
    1 - q_{\mset{S}} \le \exp\left(-2n\epsilon^2\right).
\end{equation}
Next, consider a $\delta$ with $0<\delta<1$ and choose $\numW$ so that
\begin{align}
   \numW q^{k}(1-q)^{n-k} = (1-\delta)^n.
\end{align}
Taking logarithms and normalizing, one obtains
\begin{align}
   \Rinfo
   = \crossentrp{\qmf{A}+[-\epsilon,\epsilon]}{\qmf{A}} + \log_2(1-\delta). 
   \label{eq:rate}
\end{align}
We combine \eqref{eq:divergence_simple}-\eqref{eq:rate}, choose small positive $\epsilon$ and $\delta$, and choose $n$ sufficiently large so that the term in square brackets in \eqref{eq:divergence_simple} is at most $1/2$.

More generally, we have the following result for binary and non-binary alphabets, but with a different code $\mset{S}=\mset{T}_\epsilon(\qmf{A})$. The reason for the change is to show that $\Rrng$ can be made to vanish with $n$.

\begin{theorem}\label{thm:main-result}
The \ac{MLF} and \ac{LLF} algorithms generate encoders with $\Rinfo\rightarrow\entrp{\qmf{A}}$, exponentially decaying $\diverg{P_{\rv{A}^n}}{\qmf{A}^n}$, and $\Rrng\rightarrow0$ in $n$ by using $\mset{S}=\mset{T}_\epsilon(\qmf{A})$.
\end{theorem}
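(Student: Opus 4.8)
The plan is to bound the total I-divergence by splitting it with the identity~\eqref{eq:qusisplit} into the set-selection term~\eqref{eq:I-div1} and the resolution term~\eqref{eq:I-div2}, and to show that both decay exponentially while the rate constraints are met. Throughout I fix a small $\epsilon>0$, take the code $\mset{S}=\mset{T}_\epsilon(\qmf{A})$, write $q_{\mset{S}}=\qmf{A}^n(\mset{T}_\epsilon(\qmf{A}))$, and let $p_1=\max_{a^n\in\mset{S}}\qmf{A}^n(a^n)$. I choose the number of messages so that $\Rinfo=\frac1n\log_2\numW$ sits strictly below $\entrp{\qmf{A}}(1-\epsilon)$, say $\Rinfo=\entrp{\qmf{A}}(1-2\epsilon)$; pushing $\epsilon\to0$ over a sequence at the end delivers $\Rinfo\to\entrp{\qmf{A}}$. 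The point of using the typical set rather than all of $\supp(\ptarget)^n$ (which Proposition~\ref{prop:best-set} would favor for~\eqref{eq:I-div1} alone) is precisely to control set cardinalities and hence $\Rrng$.

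For the set-selection term I reuse the estimates~\eqref{eq:qsetbound2}--\eqref{eq:divergence_simple}, which only require a bound on the final imbalance $\Delta_{|\mset{S}|}$ and on $q_{\mset{S}}$, with $q^{k}(1-q)^{n-k}$ replaced by $p_1$ throughout. By Lemmas~\ref{lemma:MLF} and~\ref{lemma:LLF}, both algorithms give $\Delta_{|\mset{S}|}\le p_1$, and~\eqref{eq:typical-sets-2} gives $p_1\le 2^{-n\entrp{\qmf{A}}(1-\epsilon)}$. Hence
\begin{align}
  \diverg{\unif{\numW}}{[q_{\binset{1}},\ldots,q_{\binset{\numW}}]}
  \le 2\left[(1-q_{\mset{S}})+\numW\, p_1\right].
\end{align}
The first summand is at most $\delta_\epsilon=2|\mset{A}|\exp(-2n\,p_{\rm min}^2\epsilon^2)$ by~\eqref{eq:typical-sets-1}, and the second obeys $\numW p_1\le 2^{n(\Rinfo-\entrp{\qmf{A}}(1-\epsilon))}=2^{-n\entrp{\qmf{A}}\epsilon}$ by the rate choice; both vanish exponentially in $n$.

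For the resolution term I invoke~\eqref{eq:synthesis-bound}, which needs $|\binset{w}|$ and $q_{\rm min}(w)$. Here the main obstacle appears: the algorithms equalize the probabilities $q_{\binset{w}}$, not the cardinalities, so I must translate the probability balance into a cardinality bound. I do this with~\eqref{eq:typical-sets-2}: every string in $\binset{w}$ has probability at least $2^{-n\entrp{\qmf{A}}(1+\epsilon)}$, so with $q_{\binset{w}}\le q_{\mset{S}}/\numW+p_1$ from the analog of~\eqref{eq:qsetbound2},
\begin{align}
  |\binset{w}|\le q_{\binset{w}}\,2^{n\entrp{\qmf{A}}(1+\epsilon)}
  \le \left(\tfrac{1}{\numW}+p_1\right)2^{n\entrp{\qmf{A}}(1+\epsilon)}.
\end{align}
Since $p_1<1/\numW$ for large $n$, this gives $\frac1n\log_2|\binset{w}|\le\entrp{\qmf{A}}(1+\epsilon)-\Rinfo+o(1)=3\epsilon\,\entrp{\qmf{A}}+o(1)$. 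Setting $\Brng=n\Rrng$ with $\Rrng=3\epsilon\,\entrp{\qmf{A}}+\gamma$ for a small constant $\gamma>0$ makes $|\binset{w}|\le 2^{n\Rrng}$ hold for all $w$, and $\Rrng\to0$ as $\epsilon,\gamma\to0$.

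The same lower bound on string probabilities gives $q_{\rm min}(w)\ge 2^{-n\entrp{\qmf{A}}(1+\epsilon)}/q_{\binset{w}}$, so the ratio $|\binset{w}|/q_{\rm min}(w)$ in~\eqref{eq:synthesis-bound} is at most $q_{\binset{w}}^2\,2^{2n\entrp{\qmf{A}}(1+\epsilon)}\le 4\,\numW^{-2}2^{2n\entrp{\qmf{A}}(1+\epsilon)}$. Dividing by $2^{2n\Rrng}$, the resolution term~\eqref{eq:I-div2} is bounded by a constant times $2^{-2n(\Rinfo+\Rrng-\entrp{\qmf{A}}(1+\epsilon))}=2^{-2n\gamma}$, which decays exponentially for every $w$. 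Summing the two exponentially small terms via~\eqref{eq:qusisplit} shows $\diverg{\pblock}{\qmf{A}^n}$ decays exponentially; invertibility is immediate since the $\binset{w}$ are disjoint and the resolution code is a deterministic function of its input bits. Letting $\epsilon,\gamma\to0$ over a sequence then yields $\Rinfo\to\entrp{\qmf{A}}$ and $\Rrng\to0$. The delicate point throughout is the simultaneous parameter balance: $\epsilon$ must be small so that $\Rrng$ is near $0$, yet the exponential margins $\epsilon\,\entrp{\qmf{A}}$ for~\eqref{eq:I-div1} and $\gamma$ for~\eqref{eq:I-div2} must remain strictly positive to retain exponential decay.
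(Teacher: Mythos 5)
Your proof is correct and follows essentially the same route as the paper's: the split via \eqref{eq:qusisplit}, the balance bounds from Lemmas~\ref{lemma:MLF} and~\ref{lemma:LLF} combined with the typical-set estimates of Lemma~\ref{lemma:typical-sets} for the term \eqref{eq:I-div1}, and the bound \eqref{eq:synthesis-bound} with \eqref{eq:Sw-bound}--\eqref{eq:qmin-bound}-style estimates for the term \eqref{eq:I-div2}; your parameter choices ($\Rinfo=\entrp{\qmf{A}}(1-2\epsilon)$, $\Rrng=3\epsilon\entrp{\qmf{A}}+\gamma$ versus the paper's $\entrp{\qmf{A}}(1-\epsilon)-\gamma$ and $2\epsilon\entrp{\qmf{A}}+2\gamma$) differ only in bookkeeping. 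Your explicit verification that $|\binset{w}|\le 2^{n\Rrng}$, the hypothesis required by Proposition~4 of \cite{boecherer-geiger-IT16}, is a small but welcome addition that the paper leaves implicit.
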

\begin{proof}
See Appendix~\ref{app:MLF-LLF-general-alphabets}.
\end{proof}

\subsection{Discussion}
\label{subsec:MLF-LLF-discussion}
The two key steps to show that the \ac{MLF} and \ac{LLF} algorithms have $\Rinfo\rightarrow\entrp{\qmf{A}}$ and $\diverg{P_{\rv{A}^n}}{\qmf{A}^n}\rightarrow0$ for large $n$ are establishing that $q_{\mset{S}}$ is close to one, see \eqref{eq:hoeffding-MLF-LLF}, and that the $q_{\binset{w}}$ are close to $1/K$, see \eqref{eq:qsetbound2}. There are several codes and encoders that meet these requirements. For example, for binary alphabets one may choose the code above that has all strings with at most $n-k$ zeros. Alternatively, one may use $\mset{S}=\mset{T}_\epsilon(\qmf{A})$ as for Theorem~\ref{thm:main-result}. In both cases, one satisfies \eqref{eq:qsetbound2} by choosing any partition of $\mset{S}$ into $K$ subsets with $q_{\mset{S}_w}\approx1/K$ for all $w$. The \ac{MLF} and \ac{LLF} algorithms are two methods to accomplish this task.

\section{\ac{I-divergence} Lower Bounds}\label{sec:I-div-lower-bounds}
Consider the target \ac{pmf} $\qmf{A}$ with $q_{\rm max}$ as the largest letter probability. Fig.~\ref{fig:Bin_above_below} shows an example of $K$ bins where $q_{\rm max}^n>1/K$. The $\nup$ blue bins have exactly one string whose probability is at least $1/K$; the $\ndwn=K-\nup$ red bins have accumulated at most $1/K$ in probability and have one or more strings. Let the subsets $\mset{S}_\uparrow$ and $\mset{S}_\downarrow$ collect all strings of the blue and red bins, respectively.

\begin{figure}
    \centering
    \newcommand{\drahBucket}[4]
{
	\filldraw[fill=#4!40!white,thick] (#1,#2) rectangle (#1+0.5,#2+#3);
	\draw[thick] (#1,#2) --++(0,3);
	\draw[thick] (#1,#2) --++(0.5,0)--++(0,3);
}

\newcommand{\drahProbs}[4]
{
	\filldraw[fill=#4!40!white,thick] (#1,#2) rectangle (#1+1,#2+#3);
}

\begin{tikzpicture}[scale=0.8,font=\footnotesize,xscale=1.2,yscale=0.75]

\foreach [count=\num] \height in {2.5,2,2,2}{
\pgfmathsetmacro{\x}{(\num-1)*0.8}
\drahBucket{\x}{0}{\height}{blue};
}
\begin{scope}[xshift=3.5cm]
\foreach [count=\num] \height in {1.2,1,1,0.8}{
\pgfmathsetmacro{\x}{(\num-1)*0.8}
\drahBucket{\x}{0}{\height}{red};
}
\end{scope}
\draw[|-|] (-0.4,0) --++(0,1.5) node[left]{$\frac{1}{K}$};
\draw[dashed] (-0.4,0) ++(0,1.5) --++(6.9,0);
\node at (0.2,-0.4) {$w=1$};
\node at (6.3,-0.4) {$w=K$};
\draw[decorate,decoration={brace,amplitude=5},-] (6.8,-0.8) -- node[below,yshift=-3]{$K=2^{n\Rinfo}$ bins} (-0.25,-0.8);

\draw[decorate,decoration={brace,amplitude=5},-] (0,3.2) -- node[above,yshift=3]{$\nup$ bins with $a^n\in\mset{S}_\uparrow$} (3.0,3.2);

\end{tikzpicture}
    \caption{Example bins after after applying the \ac{MLF} Algorithm. The bin heights represent the bin probabilities $q_{\binset{w}}=\ptarget^n(\binset{w})$.} 
    \label{fig:Bin_above_below}
\end{figure}
\begin{figure}
    \centering
    \newcommand{\drahBucket}[4]
{
	\filldraw[fill=#4!40!white,thick] (#1,#2) rectangle (#1+0.5,#2+#3);
	\draw[thick] (#1,#2) --++(0,3);
	\draw[thick] (#1,#2) --++(0.5,0)--++(0,3);
}

\newcommand{\drahProbs}[4]
{
	\filldraw[fill=#4!40!white,thick] (#1,#2) rectangle (#1+1,#2+#3);
}

\begin{tikzpicture}[scale=0.8,font=\footnotesize,xscale=1.2,yscale=0.75]

\begin{scope}[xshift= 12 cm]
\foreach [count=\num] \height in {2.125,2.125,2.125,2.125}{
\pgfmathsetmacro{\x}{(\num-1)*0.8}
\drahBucket{\x}{0}{\height}{blue};
}
\begin{scope}[xshift=3.5cm]
\foreach [count=\num] \height in {1.0,1,1,1.0}{
\pgfmathsetmacro{\x}{(\num-1)*0.8}
\drahBucket{\x}{0}{\height}{red};
}
\end{scope}
\draw[|-|] (-0.4,0) --++(0,2.125) node[left]{$\qup$};
\draw[|-|] (6.8,0) --++(0,1.0) node[right]{$\qdwn$};
\draw[dashed] (-0.4,0) ++(0,1.5) --++(6.9,0);
\draw[decorate,decoration={brace,amplitude=5},-] (3,-0.5) -- node[below,yshift=-3]{$\nup$ bins} (0.0,-0.5);
\draw[decorate,decoration={brace,amplitude=5},-] (6.6,-0.5) -- node[below,yshift=-3]{$\ndwn$ bins} (3.5,-0.5);

\end{scope}

\end{tikzpicture}
    \caption{Equalized bins.} 
    \label{fig:my_label}
\end{figure}

Define the \ac{pmf}
\begin{align}
  \bar{Q} = [ \underbrace{q_\uparrow,\dots,q_\uparrow}_{\text{$\nup$ times}}, \underbrace{q_\downarrow,\dots,q_\downarrow}_{\text{$\ndwn$ times}} ]
\end{align}
obtained by spreading the probability $q_{\mset{S}_\uparrow}$ equally over the $\nup$ bins with $a^n\in\mset{S}_{\uparrow}$, and similarly for the $\ndwn$ bins with $a^n\in\mset{S}_{\downarrow}$.
The convexity of \ac{I-divergence} (see Lemma~\ref{lemma:divergence-convexity}) implies
\begin{equation}
    \diverg{U_\numW}{[q_{\binset{1}},\ldots,q_{\binset{\numW}}]}
    \ge \diverg{U_\numW}{\bar{Q}}.
    \label{eq:simplification}
\end{equation}
We expand the right-hand side of \eqref{eq:simplification} as
\begin{align}
    \diverg{U_\numW}{\bar{Q}}
    & = \frac{\nup}{K} \log_2\frac{\frac{1}{K}}{\qup} + \frac{\ndwn}{K} \log_2\frac{\frac{1}{K}}{\qdwn} \nonumber \\
    & = \diverg{\left[\frac{N_\uparrow}{K},\frac{N_\downarrow}{K}\right]}{\left[q_{\mset{S}_\uparrow},q_{\mset{S}_\downarrow}\right]}.
    \label{eq:lowerboundPart2}
\end{align}
In fact, it is not necessary to group the first $\nup$ bins; any number $N'$ of grouped sets with $N'\le\nup$ and with accumulated probabilities $q_{\mset{S}_\uparrow}'$, $q_{\mset{S}_\uparrow}'$ works for the following result.

\begin{theorem}
The \ac{I-divergence} \eqref{eq:I-div1} generated with one-to-many mappings into disjoint sets satsifies
\begin{align}
    & \diverg{U_K}{[q_{\binset{1}},\ldots,q_{\binset{\numW}}]} \nonumber \\
    & \quad \ge \max_{N'\le \nup} \diverg{\left[\frac{N'}{K},1-\frac{N'}{K}\right]}{\left[q_{\mset{S}_\uparrow}',q_{\mset{S}_\downarrow}'\right]}.
    \label{eq:lb_final}
\end{align}
\end{theorem}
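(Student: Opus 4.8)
The plan is to reduce the $\numW$-term divergence on the left of \eqref{eq:lb_final} to a two-term divergence by a grouping (data-processing) argument, exactly of the kind already used to pass from \eqref{eq:simplification} to \eqref{eq:lowerboundPart2}, but now merging the bins into only \emph{two} super-bins so that the number $N'$ of merged up-bins becomes a free parameter. First I would fix an integer $N'$ with $N'\le\nup$ and split the index set $\{1,\dots,\numW\}$ into the $N'$ most-likely bins and the remaining $\numW-N'$ bins. Writing $q_{\mset{S}_\uparrow}'=\sum_{w=1}^{N'} q_{\binset{w}}$ and $q_{\mset{S}_\downarrow}'=\sum_{w=N'+1}^{\numW} q_{\binset{w}}$ for the accumulated probabilities of the two groups, the uniform reference $U_\numW$ collapses under this grouping to $[N'/\numW,\,1-N'/\numW]$ (precisely because $U_\numW$ is uniform), while the target collapses to $[q_{\mset{S}_\uparrow}',q_{\mset{S}_\downarrow}']$.

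The key step is the grouping inequality: merging components can only decrease the I-divergence. I would obtain it either from the convexity of $\diverg{P}{Q}$ in its second argument (Lemma~\ref{lemma:divergence-convexity}) together with the permutation-invariance of $U_\numW$, as in \eqref{eq:simplification}, or, more directly, from the log-sum inequality applied to each group,
\begin{align}
  \sum_{w=1}^{N'} \frac{1}{\numW}\log_2\frac{1/\numW}{q_{\binset{w}}}
  \ge \frac{N'}{\numW}\log_2\frac{N'/\numW}{q_{\mset{S}_\uparrow}'},
\end{align}
and analogously for the lower group with $\numW-N'$ bins. Adding the two bounds yields, for every fixed $N'\le\nup$,
\begin{align}
  \diverg{U_\numW}{[q_{\binset{1}},\ldots,q_{\binset{\numW}}]}
  \ge \diverg{\left[\tfrac{N'}{\numW},1-\tfrac{N'}{\numW}\right]}{[q_{\mset{S}_\uparrow}',q_{\mset{S}_\downarrow}']},
\end{align}
which recovers \eqref{eq:lowerboundPart2} at $N'=\nup$. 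A useful robustness point is that the log-sum inequality needs only nonnegativity of the entries, so the argument is valid even when $\sum_w q_{\binset{w}}=q_{\mset{S}}<1$ and the second argument is not normalized.

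Finally, since the left-hand side is independent of $N'$ while each admissible $N'\le\nup$ furnishes a valid lower bound, I would take the maximum over $N'\le\nup$ to obtain \eqref{eq:lb_final}. There is no serious obstacle here; the argument is essentially the two-way version of the grouping bound already established. The only points I would state carefully are the direction of the grouping inequality and the role of the restriction $N'\le\nup$: the inequality itself holds for \emph{any} two-way grouping, but $N'\le\nup$ is the useful range because each of the first $N'$ bins then contains a single string, so that $q_{\mset{S}_\uparrow}'$ is exactly the sum of the $N'$ largest single-string probabilities. It is this explicit structure of $q_{\mset{S}_\uparrow}'$ (rather than the inequality) that makes the right-hand side of \eqref{eq:lb_final} computable and hence useful in the subsequent bounds.
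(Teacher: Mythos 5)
Your proposal is correct and follows essentially the same route as the paper: the paper establishes \eqref{eq:lb_final} by exactly this grouping argument---convexity of $\diverg{\cdot}{\cdot}$ with the permutation-invariant first argument $U_\numW$ (its step \eqref{eq:simplification}), the expansion \eqref{eq:lowerboundPart2}, and the remark that any $N'\le\nup$ grouped bins work---so your log-sum phrasing is just an equivalent form of the same key step, and you even offer the paper's convexity route as an alternative. Your closing observation about the role of $N'\le\nup$ (making $q_{\mset{S}_\uparrow}'$ the sum of the $N'$ largest single-string probabilities, hence computable) likewise matches how the paper uses the bound in its binary-alphabet evaluation.
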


\subsection{Binary Alphabet}
Consider $|\mset{A}|=2$, $\mset{S} = \supp(\ptarget)^n$, and $q<1/2$. Let $k$ be the maximum integer for which $q^k(1-q)^{n-k} \geq 1/K=2^{-n\Rinfo}$, or equivalently 
\begin{equation}
    k = \left\lfloor n\cdot \frac{\log_2(1-q) + \Rinfo}{\log_2 (1-q)-\log_2 q}\right\rfloor \label{eq:kmax}.
\end{equation}
We then have
\begin{align}
    \nup &= \sum_{i=0}^k \binom{n}{i}\\
    q_{\mset{S}_\uparrow} &= \sum_{i=0}^k \binom{n}{i} q^{i} (1-q)^{n-i}. \label{eq:bin_pup}
\end{align}

Suppose $\Rinfo=h(q)$ which implies $k=\lfloor nq \rfloor$ according to \eqref{eq:kmax}. We use Lemmas~\ref{lemma:binomial-bound1} and~\ref{lemma:binomial-bound2} to obtain
\begin{equation}
    \frac{\nup}{K}
    \le \frac{1-q+\frac{2}{n}}{1-2q+\frac{1}{n}} \cdot\frac{1}{\sqrt{2\pi nq(1-q)} }
\end{equation}
which decreases as $1/\sqrt{n}$ in $n$ so that
\begin{equation}
    \lim_{n\rightarrow \infty} \frac{\nup}{K} = 0.
\end{equation}
For the probability $q_{\mset{S}_\uparrow}$, observe that the median of a binomial distribution is either $\lfloor nq \rfloor$ or $\lceil nq \rceil$ so $q_{\mset{S}_\uparrow}$ converges to $1/2$.  The \ac{I-divergence} \eqref{eq:lowerboundPart2} thus evaluates to 1 for large $n$ because the first \ac{pmf} converges to $[0,1]$ and the second \ac{pmf} converges to $[1/2,1/2]$. This means that the code cannot have \ac{I-divergence} \eqref{eq:I-div1} below 1 bit for large $n$. For $\Rinfo \le -\log_2(1-q)$ the lower bound is zero because $\mset{S}_\uparrow=\emptyset$.

\section{Numerical Results}
\label{sec:numerical-results}

We evaluate the performance of the \ac{MLF} and \ac{LLF} algorithms with $\mset{S} = \supp(\ptarget)^n$. Fig.~\ref{fig:ach_extnfo_vs_divergence} plots the \ac{I-divergence} \eqref{eq:I-div1} against $\Rinfo$, as well as an upper bound based on \eqref{eq:divergence_simple1} and \eqref{eq:rate}, and the lower bound \eqref{eq:lb_final}. For all simulations, the target \ac{pmf} is $\qmf{A} = [0.11,0.89]$. We evaluate the lower bound~\eqref{eq:lb_final} for $N' = \sum_{i=0}^{k'} \binom{n}{i}$, where $k'$ is integer and $k'\leq k$.

\begin{figure}[t]
    \centering
%
%
\definecolor{mycolor1}{rgb}{0.00000,0.44700,0.74100}%
\definecolor{mycolor2}{rgb}{0.85000,0.32500,0.09800}%
\definecolor{mycolor3}{rgb}{0.92900,0.69400,0.12500}%
\definecolor{mycolor4}{rgb}{0.49400,0.18400,0.55600}%
\footnotesize
\begin{tikzpicture}
\footnotesize
\begin{axis}[%
width=\columnwidth,
height=\psplotheight,
grid=both,
at={(0.758in,0.481in)},
xmin=0.0,
xmax=0.5,
xlabel={information rate $\Rinfo$},
ymin=0.003,
ymax=1.7,
ylabel={divergence $\mathbb{D}(P_{\rv{A}^n}\Vert Q^n_{\rv{A}})$},
ymode=log,
axis background/.style={fill=white},
legend columns=2, 
legend style={legend cell align=left,align=left,draw=white!15!black,anchor=south west,at={(-0.15,1.05)},fill opacity=0.7,text opacity = 1,draw opacity = 1}
]




\addplot [color=mycolor1,mark=*, only marks,mark options={fill=mycolor1}]
  table[row sep=crcr]{%
0.3459431618637	0.6681166142464\\
};
\addlegendentry{optimal DM $n=10$ \cite{boecherer2013block}};

\addplot [color=mycolor4,mark=*, only marks,mark options={fill=mycolor4}]
  table[row sep=crcr]{%
0.4436270051850329	1.2282332547336496\\
};
\addlegendentry{optimal DM $n=16$ \cite{boecherer2013block}};

\addplot [color=mycolor1,solid,thick,mark=x]
      table[col sep=tab] {figures/draw_divg_alg_LLF_n10_0.89.txt};
\addlegendentry{LLF, $n=10$};

\addplot [color=mycolor1,solid,thick,mark=o]
      table[col sep=tab] {figures/draw_divg_alg_MLF_n10_0.11.txt};
\addlegendentry{MLF, $n=10$};

\addplot [color=mycolor4,solid,thick,mark=x]
      table[col sep=tab] {figures/draw_divg_alg_LLF_n16_0.89.txt};
\addlegendentry{LLF, $n=16$};

\addplot [color=mycolor4,solid,thick,mark=o]
      table[col sep=tab] {figures/draw_divg_alg_MLF_n16_0.11.txt};
\addlegendentry{MLF, $n=16$};

\addplot [color=mycolor3,solid,thick,mark=x]
      table[col sep=tab] {figures/draw_divg_LFL_n20_0.89.txt};
\addlegendentry{LLF, $n=20$};

\addplot [color=mycolor3,solid,thick,mark=o]
      table[col sep=tab] {figures/draw_divg_MFL_n20_0.89.txt};
\addlegendentry{MLF, $n=20$};



\addplot [color=mycolor1,dashed,thick]
      table[col sep=tab] {figures/draw_divg_lb_n10_0.89.txt};
\addlegendentry{lower bound, $n=10$};

\addplot [color=mycolor4,dashed,thick]
      table[col sep=tab] {figures/draw_divg_lb_n16_0.89.txt};
\addlegendentry{lower bound, $n=16$};





\addplot [color=black,dashed,thick]
      table[col sep=tab] {figures/new_lb_n10000.txt};
\addlegendentry{lower bound, $n=10^4$};

\addplot [color=black,dotted,thick]
      table[col sep=tab] {figures/draw_divg_upperBound_n10000.txt};
\addlegendentry{upper bound LLF, $n=10^4$};

\draw [black] (0.465,0.014) ellipse [x radius = 0.4cm, y radius = 0.1cm] node [left,xshift=-0.4cm]{$n=10^4$};
\node (lbs) at (0.39,0.07) {lower bounds};
\draw[->] (lbs) -- (0.38,0.16);
\draw[->] (lbs) -- (0.45,0.31);
\draw[->] (lbs) -- (0.485,0.15);
%
\draw [black] (0.225,0.024) ellipse [x radius = 0.3cm, y radius = 0.1cm] node [right,xshift=0.3cm]{MLF};
%
\draw [black] (0.17,0.07) ellipse [x radius = 0.1cm, y radius = 0.6cm] node [above,yshift=0.6cm]{LLF};

\end{axis}
\end{tikzpicture}%
    \caption{\ac{I-divergence} \eqref{eq:I-div1} vs. $\Rinfo$ for $\qmf{A}=[0.11,0.89]$ and different block lengths $n$. Note that $\entrp{\qmf{A}}\approx 0.5$.}
    \label{fig:ach_extnfo_vs_divergence}
\end{figure}

Note that the \ac{MLF} and \ac{LLF} algorithms sort binary strings of length $n$ so their complexity grows exponentially in $n$. The simulation results are restricted to string lengths with $n\le20$. As a reference, we plot the \ac{I-divergence} of the optimal \acp{DM} for $n=10$ and $n=16$. Observe that \ac{MLF} outperforms \ac{LLF} and has the same \ac{I-divergence} as the lower bound for small rates.

\section{Conclusions and Outlook}
\label{sec:conclusions-outlook}
We showed that \ac{ILD} coding is possible at rates approaching the entropy of a target \ac{pmf} with exponentially decaying \ac{I-divergence} and vanishing \ac{RNG} rate in the block length. The key step was to introduce invertible one-to-many mappings. For such mappings, an encoder was proposed that first chooses a subset of strings followed by an \ac{RNG} that chooses a string from the subset. The first step uses subsets that are generated by either an \ac{MLF} or \ac{LLF} algorithm. The second step uses a good \ac{RC}.

An interesting direction for future work is designing practical algorithms that approach the performance predicted by the theory.

\section*{Acknowledgements}
The authors wish to thank Juan Diego Lentner Iba\~nez for comments on the paper. This work was supported by DFG grant {KR 3517/9-1}.

\bibliographystyle{IEEEtran}
\bibliography{IEEEabrv,conf-jnls,references}

\clearpage
\setcounter{section}{0}
\renewcommand{\thesection}{\Alph{section}}
\renewcommand{\thesubsection}{\arabic{subsection}}
\renewcommand{\appendix}[1]{%
  \refstepcounter{section}%
  \par\begin{center}%
    \begin{sc}%
      Appendix \thesection\par%
      #1%
    \end{sc}%
  \end{center}\nobreak%
}

\appendix{Proof of Lemma~\ref{lemma:entropy-inequality2}}
\label{app:inequalities}
Consider the \acp{pmf} $P$ and $Q=P+\Delta$ where $\sum_i \Delta(i)=0$ and $\sum_i |\Delta(i)|=d_1$. Define
\begin{align}
  \Delta_+ = \sum_{i:\,\Delta(i)>0} \Delta(i), \quad
  \Delta_- = \sum_{i:\,\Delta(i)<0} \Delta(i)
\end{align}
and observe that $\Delta_{+}=-\Delta_{-}=d_1/2$. We expand
\begin{align}
  & \entrp{P} - \entrp{Q} \nonumber \\
  & = -\diverg{P}{Q} + \sum_i \Delta(i) \log_2(P(i) + \Delta(i))
  \label{eq:app-ei2-1}
\end{align}
and using $p_{\rm min} \le P(i) \le p_{\rm max}$ and $|\Delta(i)|\le d_1/2$ we have
\begin{align}
  & \sum_{i: \Delta(i)>0} \Delta(i) \underbrace{\log_2(P(i) + \Delta(i))}_{\le \log_2(p_{\rm max} + d_1/2)}
  \le \frac{d_1}{2} \log_2(p_{\rm max} + d_1/2) \label{eq:app-ei2-2} \\
  & \sum_{i: \Delta(i)<0} \Delta(i) \underbrace{\log_2(P(i) + \Delta(i))}_{_{\ge \log_2(p_{\rm min} - d_1/2)}}
  \le -\frac{d_1}{2} \log_2(p_{\rm min} - d_1/2). \label{eq:app-ei2-3}
\end{align}
Now insert \eqref{eq:app-ei2-2} and \eqref{eq:app-ei2-3} into \eqref{eq:app-ei2-1} and apply Lemma~\ref{lemma:divergence-bound}.

\medskip
\appendix{Proof of Theorem~\ref{theorem:DM-divergence}}
\label{app:DM-binary}

This appendix reviews results on binary \ac{DM} from~\cite{schulte2017divergence}. Consider $\mset{A} = \lbrace 0,1 \rbrace$ and observe that Proposition~\ref{prop:DMcode} lets one restrict attention to the $n+1$ code books $\mset{S}$ consisting of all strings with weight at most $k$ for $0 \le k \le n$. We have
\begin{align*}
	K = |\mset{S}| = \sum_{i = 0}^{k} \binom{n}{i}.
\end{align*}
The fraction of $1$s in $\mset{S}$ is
\begin{align}\label{eq:popt}
  \bar{p} =\frac{\sum_{i=0}^{k}\binom{n}{i}i}{|\mset{S}|n}
\end{align}
which increases monotonically in $k$ and reaches its maximum $\bar{p}=1/2$ for $k=n$. Let $p=k/n$. The next lemma shows that $\bar{p}\rightarrow p$ for large $n$ as long as $p<1/2$.

\begin{lemma}\label{lemma:PropertiesOfPck}
	For every positive integer $n$ and every integer $k$, $0\le k<n/2$, we have
	\begin{equation}\label{eq:pletter_bounds_in_k}
	0 \le  \frac{k}{n} - \bar{p}  \le \frac{1-k/n}{n(1-2k/n)} + \frac{1}{2n^2(1-2k/n)^2}.
	\end{equation}	 	
\end{lemma}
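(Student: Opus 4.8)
The plan is to collapse the two-sided estimate into a single closed form for $k/n-\bar{p}$ and then feed the binomial estimates of Section~II into it. Write $p=k/n$, $S=|\mset{S}|=\sum_{i=0}^{k}\binom{n}{i}$, and recall $\bar{p}=\frac{1}{nS}\sum_{i=0}^{k} i\binom{n}{i}$. The first step is the purely algebraic rearrangement
\[
  \frac{k}{n}-\bar{p}=\frac{1}{nS}\sum_{i=0}^{k}(k-i)\binom{n}{i}.
\]
Every summand is non-negative because $i\le k$ on the range of summation, so the left inequality $0\le k/n-\bar{p}$ drops out immediately (and in fact holds for all $k$, not only $k<n/2$).

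For the upper bound I would first turn the numerator above into a clean expression via the binomial identity of Lemma~\ref{lemma:binomial-identity}. Its form $\sum_{i=0}^{k}\bigl(\tfrac{n}{2}-i\bigr)\binom{n}{i}=\tfrac{n-k}{2}\binom{n}{k}$ rearranges to $\sum_{i=0}^{k} i\binom{n}{i}=\tfrac{n}{2}S-\tfrac{n-k}{2}\binom{n}{k}$, and substituting this gives the central identity
\[
  \frac{k}{n}-\bar{p}=\frac{1-p}{2}\cdot\frac{\binom{n}{k}}{S}-\frac{1-2p}{2}.
\]
Everything now reduces to controlling the single ratio $\binom{n}{k}/S$. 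I would bound it from above using the lower estimate $S\ge \alpha\beta\binom{n}{k}$ of Lemma~\ref{lemma:binomial-bound2} (which is exactly where $p<1/2$, i.e.\ $k<n/2$, is needed), yielding
\[
  \frac{k}{n}-\bar{p}\le\frac{1}{2}\left[\frac{1-p}{\alpha\beta}-(1-2p)\right]
\]
with $\alpha,\beta$ as defined there.

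It then remains to show the bracket is at most $\frac{2(1-p)}{n(1-2p)}+\frac{1}{n^{2}(1-2p)^{2}}$, which is the one genuinely fiddly step. Setting $m=1/n$, I would use $1/\beta=1+\tfrac{m}{(1-2p)^{2}}$ to factor $\frac{1-p}{\alpha\beta}=A\bigl(1+\tfrac{m}{(1-2p)^{2}}\bigr)$, where $A=(1-p)\tfrac{1-2p+m}{1-p+m}$. A one-line cancellation shows $A-(1-2p)=\tfrac{pm}{1-p+m}$, so the bracket equals $\tfrac{pm}{1-p+m}\bigl(1+\tfrac{m}{(1-2p)^{2}}\bigr)+\tfrac{m}{1-2p}$. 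Since the target's leading term splits as $\tfrac{2(1-p)m}{1-2p}=\tfrac{m}{1-2p}+m$, subtracting the common piece $\tfrac{m}{1-2p}$ and dividing by the positive factor $\bigl(1+\tfrac{m}{(1-2p)^{2}}\bigr)$ reduces the whole claim to $\tfrac{pm}{1-p+m}\le m$, i.e.\ $2p-1\le m$, which is automatic because $2p-1<0<1/n$.

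The main obstacle is precisely this final factorization: the product $\frac{1-p}{\alpha\beta}$ looks opaque, and the bound only works with no slack beyond $p<1/2$ once one spots the two matching identities $A-(1-2p)=\tfrac{pm}{1-p+m}$ and $\tfrac{2(1-p)m}{1-2p}=\tfrac{m}{1-2p}+m$. After those, the remaining inequality $2p-1\le 1/n$ is trivial, so the care is entirely in arranging the algebra rather than in any analytic difficulty.
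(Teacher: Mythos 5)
Your proof is correct and follows essentially the same route as the paper: the same rearrangement via Lemma~\ref{lemma:binomial-identity} to get $\bar{p}=\tfrac12-\tfrac{1-p}{2}\binom{n}{k}/|\mset{S}|$, followed by the lower bound $|\mset{S}|\ge\alpha\beta\binom{n}{k}$ from Lemma~\ref{lemma:binomial-bound2}. The only difference is cosmetic: where the paper discards the factor $\tfrac{1-p}{1-p+1/n}\le 1$ early and expands, you carry exact identities and reduce to the trivial inequality $2p-1\le 1/n$ at the end — both closures are equivalent bookkeeping.
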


\begin{proof}
The lower bound is trivial. For the upper bound, we use Lemma \ref{lemma:binomial-identity} to write
\begin{align}
  \bar{p} & = \frac12 - \frac{\sum_{i=0}^{k}\binom{n}{i}(\frac{n}{2} -i)}{\sum_{j=0}^{k} \binom{n}{j} n} \nonumber\\
  & =\frac12 - \frac{\frac{n-k}{2} \binom{n}{k}}{\sum_{j=0}^{k} \binom{n}{j}n} \nonumber\\
  &= \frac12 - \left(\frac12 - \frac{k}{2n} \right)\frac{ \binom{n}{k}}{\sum_{j=0}^{k} \binom{n}{j}}.
\end{align}
Let $p=k/n$ and insert the lower bound in \eqref{eq:binomial-bound2} to obtain
\begin{align}
  \bar{p} & \ge \frac12 - \left(\frac12 - \frac{p}{2} \right)\frac{1-2p+1/n}{1-p+1/n} \left(1+\frac{1}{n(1-2p)^2}\right)\notag\\
  & \ge \frac12 - \frac{1-2p+1/n}{2} \left(1+\frac{1}{n(1-2p)^2}\right)\notag\\
  & = p - \frac{1-p}{n(1-2p)} - \frac{1}{2n^2(1-2p)^2}
\end{align}
which establishes the upper bound of \eqref{eq:pletter_bounds_in_k}.
\end{proof}

Let $\bar{P}=[\bar{p},1-\bar{p}]$ and $\ptarget=[q,1-q]$ where $0<q<1/2$. Recall that~\eqref{eq:I-div-DM} gives
\begin{align}
  \diverg{\unif{K}}{\qmf{A}^n}
  = n \crossentrp{\bar{P}}{\qmf{A}} -\log_2 |\mset{S}|.
  \label{eq:div-expand}
\end{align}
For small $n$, the best $k$ may have $k\ge n/2$. For example, Fig.~\ref{fig:DMcode} shows that $k=4$ gives the maximum $\Rinfo=1$ and the minimum $\diverg{U_K}{\qmf{A}^n}$ for $q=0.23$ and $n=4$. However, the following lemma shows that $k\ge n/2$ is not interesting for large $n$.

\begin{lemma}
\label{lemma:DM-binary-p1}
$\diverg{U_K}{\qmf{A}^n}$ grows linearly with $n$ if
\begin{align}
  \frac{k}{n} > p_1 := \frac{1 + \log_2(1-q)}{-\log_2 q + \log_2(1-q)}
  \label{eq:DM-binary-p1}
\end{align}
and $p_1$ satisfies $q<p_1<1/2$.
\end{lemma}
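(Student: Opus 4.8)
The plan is to start from the expansion \eqref{eq:div-expand}, namely $\diverg{\unif{K}}{\qmf{A}^n} = n\,\crossentrp{\bar{P}}{\qmf{A}} - \log_2|\mset{S}|$, and to lower bound it by discarding the codebook term crudely. Since $|\mset{S}| = \sum_{i=0}^{k}\binom{n}{i} \le 2^n$, we have $\log_2|\mset{S}| \le n$, so
\[
  \diverg{\unif{K}}{\qmf{A}^n} \ge n\left( \crossentrp{\bar{P}}{\qmf{A}} - 1 \right).
\]
It therefore suffices to show that the cross entropy exceeds $1$ bit, by a fixed margin, as soon as $k/n > p_1$. This is the whole point of the threshold $p_1$: it is the fraction $p=k/n$ at which the trivial bound $\log_2|\mset{S}|\le n$ begins to certify a positive divergence rate.

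Next I would identify the cross-entropy threshold explicitly. Writing $\crossentrp{[p,1-p]}{\qmf{A}} = -p\log_2 q - (1-p)\log_2(1-q)$ and solving $\crossentrp{[p,1-p]}{\qmf{A}} > 1$ for $p$, I collect the terms multiplying $\log_2\frac{1-q}{q} > 0$ to obtain precisely $p > p_1$ with $p_1$ as in \eqref{eq:DM-binary-p1}. The one subtlety is that the cross entropy in the expansion involves the \emph{actual} code fraction $\bar{p}$ rather than $p=k/n$, and $\bar{p}\le p$. To bridge this gap I invoke Lemma~\ref{lemma:PropertiesOfPck}, which gives $0\le p-\bar{p} = O(1/n)$ for $p<1/2$. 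Hence
\[
  \crossentrp{\bar{P}}{\qmf{A}} = \crossentrp{[p,1-p]}{\qmf{A}} - (p-\bar{p})\,\log_2\tfrac{1-q}{q} \longrightarrow \crossentrp{[p,1-p]}{\qmf{A}},
\]
so for $k/n > p_1$ held fixed, the limiting cross entropy exceeds $1$ by a fixed amount and $\crossentrp{\bar{P}}{\qmf{A}} - 1 \ge c > 0$ for all large $n$. The displayed lower bound then grows linearly in $n$, as claimed.

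The hard part is making sure the $O(1/n)$ deficit of $\bar{p}$ below $p$ is negligible against the fixed positive slack $\crossentrp{[p,1-p]}{\qmf{A}} - 1 > 0$; because $\bar{p}<p$ moves the cross entropy in the \emph{wrong} direction, the argument relies on the quantitative rate in Lemma~\ref{lemma:PropertiesOfPck} rather than mere convergence. Since $p_1<1/2$, the relevant range $p_1 < p < 1/2$ lies where that lemma applies. The complementary case $p\ge 1/2$ is even easier: there $\bar{p}\to 1/2$ and $\crossentrp{[1/2,1/2]}{\qmf{A}} = -\tfrac12\log_2\big(q(1-q)\big) > 1$, so the same lower bound is again linear.

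Finally I would verify $q<p_1<1/2$. Putting $p_1-q$ over the positive denominator $\log_2\frac{1-q}{q}$, the numerator simplifies, via \eqref{eq:X-H-D} applied to the binary entropy, to $1 + (1-q)\log_2(1-q) + q\log_2 q = 1 - h(q) > 0$ for $q<1/2$, giving $p_1>q$. Likewise the numerator of $\tfrac12 - p_1$ reduces to $-\tfrac12\log_2\big(q(1-q)\big) - 1$, which is positive exactly when $q(1-q) < 1/4$, i.e.\ for every $q\ne 1/2$; hence $p_1 < 1/2$.
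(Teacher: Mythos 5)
Your argument is sound, and structurally close to the paper's, under the reading in which $p=k/n$ is held fixed (or bounded away from $p_1$): like the paper, you start from \eqref{eq:div-expand}, interpret $p_1$ as the solution of $\crossentrp{[p_1,1-p_1]}{\qmf{A}}=1$, and bridge $\bar{p}$ to $p$ with Lemma~\ref{lemma:PropertiesOfPck}. Your explicit verification of $q<p_1<1/2$ (numerators $1-h(q)$ and $-\tfrac12\log_2\bigl(q(1-q)\bigr)-1$) is in fact more complete than the paper's one-line remark, and your separate treatment of $p\ge 1/2$ is fine.

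The gap is the bound $\log_2|\mset{S}|\le n$: it is too crude to deliver the lemma at the strength the paper needs. The lemma is invoked in Appendix~\ref{app:DM-binary} to conclude that optimal codes must have $k<np_1$ for large $n$, which requires linear divergence growth for \emph{every} sequence $k(n)$ with $k(n)/n>p_1$, including sequences with $k(n)/n\downarrow p_1$, e.g.\ $k(n)=\lceil np_1\rceil+1$. For such sequences your lower bound $n\bigl(\crossentrp{\bar{P}}{\qmf{A}}-1\bigr)$ is vacuous: the margin $\crossentrp{[k/n,1-k/n]}{\qmf{A}}-1$ is $O(1/n)$, and since $\bar{p}\le k/n$ with a deficit of order $1/n$, the quantity $\crossentrp{\bar{P}}{\qmf{A}}-1$ can even be negative. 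This is precisely where the paper's proof does something you omit: it keeps the sharper codebook count $\tfrac1n\log_2|\mset{S}|\le h(k/n)+o(1)$ (via \eqref{eq:binomial-bound1} and \eqref{eq:binomial-bound2}) and exploits $h(p_1)<1$, so that at and near the threshold the divergence is still at least $n\bigl(1-h(p_1)\bigr)-O(\log n)$. In other words, $p_1<1/2$ is not only what licenses Lemma~\ref{lemma:PropertiesOfPck}; it also supplies the entropy deficit $1-h(p_1)>0$ that certifies linear growth uniformly over $k\ge np_1$. The patch is short: for $np_1\le k\le np_2$ with any fixed $p_2\in(p_1,1/2)$, replace $\log_2|\mset{S}|\le n$ by $\log_2|\mset{S}|\le nh(p_2)+O(\log n)$ and use $\crossentrp{\bar{P}}{\qmf{A}}\ge 1-O(1/n)$; for $k>np_2$, your trivial bound works once you get a fixed cross-entropy margin uniformly in $k$, which follows from the monotonicity of $\bar{p}$ in $k$ stated after \eqref{eq:popt} rather than from your fixed-$p$ limits.
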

\begin{proof}
As already stated, $\bar{p}$ increases with $p=k/n$. Now choose $p$ so that $\bar{p}=p_1$ so that $\crossentrp{\bar{P}}{\qmf{A}}=1$. For this $p$ and large $n$, we have $\frac1n \log_2 |\mset{S}|<h(p_1)<1$ and the \ac{I-divergence} \eqref{eq:div-expand} grows linearly with $n$. Increasing $p$ further gives $\crossentrp{\bar{P}}{\qmf{A}}>1$ and \eqref{eq:div-expand} also grows linearly with $n$. Moreover, if $p_1<p<1/2$ then Lemma~\ref{lemma:PropertiesOfPck} shows that $\bar{p}\rightarrow p$ and \eqref{eq:div-expand} grows linearly in $n$. The bounds $q<p_1<1/2$ follow by using $1>h(q)$ and by showing that $p_1$ increases with $q$ to $p_1=1/2$ when $q=1/2$.
\end{proof}

Recall that \ac{CCDM} achieves $\frac12 \log_2 n$ growth, see Sec.~\ref{subsec:CCDM}. Lemma~\ref{lemma:DM-binary-p1} thus implies that we can focus on $k<np_1<n/2$ for large $n$. We remark that the bounds \eqref{eq:Rinfo+Rrng-upper} and \eqref{eq:Rinfo+Rrng-lower} imply that for $\frac1n\diverg{U_K}{\qmf{A}}\rightarrow 0$ we must have $\frac1n \log_2 |\mset{S}| \rightarrow h(q)$ and therefore $p\rightarrow q$ for large $n$.

Now for $p<1/2$, we obtain the following bounds from \eqref{eq:binomial-bound1} and \eqref{eq:binomial-bound2}:
\begin{equation}
|\mset{S}| \le \binom{n}{np} \frac{1-p}{1-2p}
\le \frac{2^{nh(p)}}{\sqrt{2\pi n p (1-p)}} \cdot \frac{1-p}{1-2p}.
\end{equation}
Inserting into~\eqref{eq:div-expand}, we have
\begin{align}
  \diverg{\unif{K}}{\qmf{A}^n}
  & \ge \frac12 \log_2 n - n \left[ h(p) - h(\bar{p}) \right]
  + n\diverg{\bar{P}}{\qmf{A}} \nonumber\\ 
  & \quad - \frac12 \log_2 \frac{1-p}{2\pi p (1-2p)^2} . \label{eq:div-expand2}
\end{align}
Define
\begin{align} 
  \epsilon(n) = \frac{1-p}{n(1-2p)} + \frac{1}{2n^2(1-2p)^2}.
\end{align}
For sufficiently large $n$, Lemmas~\ref{lemma:entropy-inequality2} and~\ref{lemma:PropertiesOfPck} give
\begin{align}\label{eq:entropy_Diff_optimalkn_vs_letter}
  & n \left[ h(p)- h(\bar{p}) \right] \nonumber \\
  & \le \left( \frac{1-p}{1-2p} + \frac{1}{2n(1-2p)^2} \right)
  \log_2\frac{1-p+\epsilon(n)}{p-\epsilon(n)}.
\end{align}
Since $\epsilon(n)\to 0$ for $n\to\infty$, we have
\begin{align}
  &\liminf_{n\to\infty} \left(\diverg{\unif{K}}{\qmf{A}^n} - \frac12 \log_2 n - n\diverg{[p,1-p]}{\qmf{A}}\right)\nonumber\\
  &\quad \ge - \frac{1-p}{1-2p}\log_2\frac{1-p}{p} - \frac12 \log_2 \frac{1-p}{2\pi p(1-2p)^2}.
\end{align}
The \ac{I-divergence} thus grows at least as $\frac12 \log_2 n$ with $n$. Moreover, $\ac{CCDM}$ achieves this growth by choosing $p=\lfloor nq \rfloor/n$ so that $d_1([p,1-p],\qmf{A})\le1/n$ and $n\diverg{[p,1-p]}{\qmf{A}}\rightarrow0$ for large $n$, see Sec.~\ref{subsec:CCDM}. Note that $\Rinfo\rightarrow h(q)$ for large $n$.

\medskip
\appendix{Proof of Theorem~\ref{thm:main-result}}
\label{app:MLF-LLF-general-alphabets}

This appendix extends the analysis of Sec.~\ref{subsec:MLF-LLF-entropy} to non-binary discrete alphabets. The key steps are to choose a code $\mset{S}$ with probability close to one and to show that all subset probabilities $q_{\binset{w}}$ are close to $1/K$.

Consider the code $\mset{S}=\mset{T}_\epsilon(\qmf{A})$. The left-hand side of \eqref{eq:typical-sets-1} in Lemma~\ref{lemma:typical-sets} gives
\begin{align}
    1-q_{\mset{S}}
    \le 2 |\mset{A}| \exp\left(-2n \, q_{\rm min}^2\, \epsilon^2\right).
    \label{eq:Hoeffding2}
\end{align}
By Lemmas~\ref{lemma:MLF} and~\ref{lemma:LLF}, we have $\Delta_{|\mset{S}|} \le \qmf{A}^n(a^n)$ for the $a^n$ with the largest probability in the typical set. For this $a^n$, we can bound (see~\eqref{eq:qsetbound2})
\begin{align}
  & \frac{q_{\mset{S}}}{K}-\qmf{A}^n(a^n) \le q_{\binset{w}} \le \frac{q_{\mset{S}}}{K}+\qmf{A}^n(a^n).
    \label{eq:MLF-LLF-subset-prob}
\end{align}
Following the same steps as in \eqref{eq:divergence_simple}, we have
\begin{align}
  \diverg{U_\numW}{[q_{\binset{1}},\ldots,q_{\binset{\numW}}]}
  \le 2\left[(1-q_{\mset{S}})+\numW \qmf{A}^n(a^n) \right] \label{eq:divergence_simple2}
\end{align}
if the scalar in square brackets is at most $1/2$.

Consider the two summands in \eqref{eq:divergence_simple2}. We have already seen that the term $1-q_{\mset{S}}$ vanishes exponentially in $n$.
Next, consider a $\delta$ with $0<\delta<1$ and choose $\numW$ so that
\begin{align}
   \numW \qmf{A}^n(a^n) = (1-\delta)^n.
   \label{eq:MLF-LLF-Kchoice}
\end{align}
Taking logarithms and normalizing, we have
\begin{align}
  \frac1n \log_2 \numW
  & = - \frac1n \log_2 \qmf{A}^n(a^n) + \log_2(1-\delta) \nonumber \\
  & \ge \entrp{\qmf{A}} (1-\epsilon) + \log_2(1-\delta) \label{eq:rate2}
\end{align}
where the inequality follows by the right-hand side of \eqref{eq:typical-sets-2} in~Lemma~\ref{lemma:typical-sets}. We thus choose
\begin{align}
  \Rinfo = \entrp{\qmf{A}} (1-\epsilon) - \gamma
  \label{eq:Rinfo-final}
\end{align}
where $\gamma=-\log_2(1-\delta)$, and \eqref{eq:divergence_simple2}-\eqref{eq:rate2} guarantee that this rate gives vanishing \ac{I-divergence} \eqref{eq:divergence_simple2}. Note that the term in square brackets in \eqref{eq:divergence_simple2} is less than 1/2 for large $n$. Finally, choose small positive $\epsilon$ and $\delta$ and large $n$ to complete the first part of the proof. 

Next, consider the \ac{RNG} and the bound \eqref{eq:synthesis-bound}. The bounds \eqref{eq:typical-sets-2} in Lemma~\ref{lemma:typical-sets} and $q_{\binset{w}} = \sum_{a^n\in\binset{w}} \qmf{A}^n(a^n)$ give
\begin{align}
  q_{\binset{w}} 2^{n\entrp{\qmf{A}}(1-\epsilon)}
  \le |\binset{w}|
  \le q_{\binset{w}} 2^{n\entrp{\qmf{A}}(1+\epsilon)}.
  \label{eq:Sw-bound}
\end{align}
The left-hand side of \eqref{eq:typical-sets-2} also gives
\begin{align}
  q_{\rm min}(w) \ge \frac{2^{-n\entrp{\qmf{A}}(1+\epsilon)}}{q_{\binset{w}}}.
  \label{eq:qmin-bound}
\end{align}
Inserting \eqref{eq:Sw-bound} and \eqref{eq:qmin-bound} into \eqref{eq:synthesis-bound}, we have
\begin{align}
  \diverg{P_{\rv{A}^n|\rv{W}}(\cdot|w)}{ Q^n_{\rv{A}|\binset{w}}}
  & \le \frac{q_{\binset{w}}^2 2^{2n\entrp{\qmf{A}}(1+\epsilon)}}{(2 \ln 2)\, 2^{2n\Rrng}} \nonumber \\
  & \overset{(a)}{\le} \frac{2^{4 n \epsilon \entrp{\qmf{A}}+2n\gamma+2}}{(2 \ln 2)\, 2^{2n\Rrng}}
  \label{eq:synthesis-bound2}
\end{align}
where step $(a)$ follows by applying \eqref{eq:MLF-LLF-subset-prob}, the right-hand side of \eqref{eq:typical-sets-2}, and \eqref{eq:rate2} to bound
\begin{align}
  q_{\binset{w}}
  & \le \frac{1}{K} + 2^{-n\entrp{\qmf{A}}(1-\epsilon)} \nonumber \\
  & \le 2 \cdot 2^{-n\entrp{\qmf{A}} + n \epsilon \entrp{\qmf{A}} + n \gamma} .
  \label{eq:synthesis-bound3}
\end{align}
We may thus choose
\begin{align}
    \Rrng = 2\epsilon\entrp{\qmf{A}} + 2\gamma
    \label{eq:Rrng-final}
\end{align}
which may vanish with $n$ because we can choose $\epsilon$ and $\gamma$ to vanish with $n$. Note that the rate \eqref{eq:Rrng-final} suffices for each $w$, i.e., one need not average over $w$ to achieve small $\Rrng$. 

Finally, both the \acp{I-divergence} on the left-hand sides of \eqref{eq:divergence_simple2} and \eqref{eq:synthesis-bound2} decay exponentially with $n$ if $\Rinfo$ and $\Rrng$ are given by \eqref{eq:Rinfo-final} and \eqref{eq:Rrng-final}, respectively. This implies hat $\diverg{P_{\rv{A}^n}}{\qmf{A}^n}$ decays exponentially with $n$.

\end{document}